\newmdtheoremenv[linecolor=gray,leftmargin=10,rightmargin=10,
backgroundcolor=gray!40,innertopmargin=0pt,ntheorem]{mydraft}{Draft}
\definecolor{myred}{RGB}{238,51,119}
\definecolor{myorange}{RGB}{238,119,51}
\definecolor{mygreen}{RGB}{0,153,136}
\colorlet{small}{white}
\colorlet{medium}{black!25!white}
\colorlet{big}{black!60!white}
\def\itempadding{.7pt}
\def\boxpadding{5pt}
\newcommand{\drawBin}[3]{
  \begin{scope}[x= 1cm, y=2cm]
  \draw[dashed] [#2] ($(#1,0)+(-\boxpadding,-\itempadding)$)
  rectangle node (#3) {}
  ($(#1,1)+(\boxpadding,0)+(1,0)$);
  \end{scope}
}
\newcommand{\addToBin}[6]{
  \begin{scope}[x= 1cm, y=2cm]
       \draw [#5]
       ($(#1,#2)+(\itempadding,\itempadding)$)
       rectangle node (#6) {#4}
       ($(#1,#2)+(-\itempadding,-\itempadding)+(1,#3)$);
\end{scope}
}
\newcommand{\rightBin}[1]{
  ($ (#1) + (.65cm,0) $)
}
\newcommand{\leftBin}[1]{
  ($ (#1) - (.65cm,0) $)
}
\declaretheorem{theorem}
\declaretheorem[sibling=theorem]{lemma}
\declaretheorem[sibling=theorem]{proposition}
\theoremstyle{definition}
\theoremstyle{remark}
\declaretheorem[sibling=theorem]{remark}
\declaretheorem[sibling=theorem]{claim}
\newcommand{\Opt}{\operatorname{\text{\textsc{opt}}}}
\newcommand{\Alg}{\operatorname{\text{\textsc{alg}}}}
\DeclareMathOperator{\OPT}{OPT}
\newcommand{\items}{\Gamma}
\newcommand{\BBbins}{\mathtt{BB}}
\newcommand{\BMbins}{\mathtt{BM}}
\newcommand{\BSbins}{\mathtt{BS}}
\newcommand{\Xbins}{\mathtt{XB}}
\newcommand{\BSfullbins}{\mathtt{BSC}}
\newcommand{\BSwaitbins}{\mathtt{BSP}} \newcommand{\Sbins}{\mathtt{S}}
\newcommand{\Sfullbins}{\mathtt{SC}}
\newcommand{\Mbins}{\mathtt{M}}
\newcommand{\Mfullbins}{\mathtt{MC}}
\newcommand{\bins}{\mathtt{Bins}}
\newcommand{\schain}{\mathtt{SCh}}
\newcommand{\group}{\mathtt{Gr}}
\newcommand{\VGbuffer}{\mathtt{BGr}}
\newcommand{\Sbuffer}{\mathtt{SB}}
\newcommand{\Gbuffer}{\mathtt{GB}}
\newcommand{\chainpush}{\mathsf{ChainPush}}
\newcommand{\chainpull}{\mathsf{ChainPull}}
\newcommand{\greedypush}{\mathsf{GreedyPush}}
\newcommand{\greedypull}{\mathsf{GreedyPull}}
\newcommand{\gapfill}{\mathsf{GapFill}}
\newcommand{\grouppush}{\mathsf{GroupPush}}
\newcommand{\grouppull}{\mathsf{GroupPull}}
\newcommand{\groupinsert}{\mathsf{GroupInsert}}
\newcommand{\groupdelete}{\mathsf{GroupDelete}}
\newcommand{\ihuge}{\mathtt{huge}}
\newcommand{\ilarge}{\mathtt{large}}
\newcommand{\imedium}{\mathtt{medium}}
\newcommand{\itiny}{\mathtt{tiny}}
\DeclareMathOperator{\fsmall}{small}
\DeclareMathOperator{\fbig}{big}
\DeclareMathOperator{\fmedium}{medium}
\newcommand\Class[1]{\mathchoice {\text{\normalfont\fontsize{9pt}{10pt}\selectfont$\mathrm{#1}$}}{\text{\normalfont\fontsize{9pt}{10pt}\selectfont$\mathrm{#1}$}}{\text{\normalfont$\mathrm{#1}$}}{\text{\normalfont$\mathrm{#1}$}}}
\newcommand{\eps}{\varepsilon}
\newcommand{\Oh}{\mathcal{O}}
\newcommand{\off}{\operatorname{off}}
\newcommand{\on}{\operatorname{on}}
\newcommand{\ZZ}{\mathbb{Z}}
\DeclarePairedDelimiter\ceil{\lceil}{\rceil}
\DeclarePairedDelimiter\abs{|}{|}
\DeclarePairedDelimiter\set{\lbrace}{\rbrace}
\DeclarePairedDelimiterX\sett[2]{\lbrace}{\rbrace}{ #1 \,\delimsize| \,\mathopen{} #2 }
\newcommand{\ie}{i.\,e.,\xspace}
\title{Online Bin Covering with Limited Migration\footnote{This work was partially supported
by DFG Project, "Robuste Online-Algorithmen für Scheduling- und Packungsprobleme", JA 612 /19-1, and
by GIF-Project "Polynomial Migration for Online Scheduling".}}
\author[1]{Sebastian Berndt}
\author[2]{Leah Epstein}
\author[1]{Klaus Jansen}
\author[3]{Asaf Levin}
\author[1]{Marten Maack}
\author[1]{Lars Rohwedder}
\affil[1]{Department of Computer Science, Kiel University, Kiel, Germany}
\affil[2]{Department of Mathematics, University of Haifa, Haifa, Israel}
\affil[3]{Faculty of Industrial Engineering and Management, The Technion, Haifa,
  Israel}
\begin{document}

\maketitle

\begin{abstract}
  Semi-online models where decisions may be revoked in a limited way
  have been studied extensively in the last years.

  This is motivated by the fact that
  the pure online model is often too restrictive to model real-world
  applications, where some changes might be allowed.
  A well-studied measure of the amount of decisions that can be revoked is the migration factor
  $\beta$:
  When an object $o$ of size $s(o)$ arrives,
  the decisions for objects of total size at most $\beta\cdot s(o)$ may be revoked.
  Usually $\beta$ should be a constant.
  This means that a small object only leads to small changes.
  This measure has been successfully investigated for
  different, classic problems such as bin packing or makespan minimization.
  The dual of makespan minimization~--~the Santa Claus or machine covering
  problem~--~has also been studied, whereas the dual of bin packing~--~the bin
  covering problem~--~has not been looked at from such a perspective.

  In this work, we extensively study the bin covering problem with migration in
  different scenarios. We develop algorithms both for the static case~--~where
  only insertions are allowed~--~and for the dynamic case, where items may also
  depart. We also develop lower bounds for these scenarios both for amortized
  migration and for worst-case migration showing that our algorithms have
  nearly optimal migration factor and asymptotic competitive ratio (up to an arbitrary
  small $\eps$). We therefore resolve the competitiveness of the bin covering
  problem with migration.

\end{abstract}

\section{Introduction}

\emph{Online} algorithms aim to maintain a competitive solution
without knowing future parts of the input. The competitive ratio
of such an algorithm (for a maximization problem) is thus defined
as the worst-case ratio between the value of an  optimal solution
produced by an offline algorithm knowing the complete input and
the value of the solution produced by the online algorithm.
Furthermore, once a decision is made by these algorithms, this
decision is fixed and irreversible. While a surprisingly large
number of problems do have such algorithms, the complete
irreversibility requirement is often too strict, leading to high
competitive ratios. Furthermore, if the departure of objects from
the instance is also allowed, irreversible online algorithms are
rarely able to be competitive at all. From a practical point of
view, this is quite alarming, as the departure of objects is part
of many applications. We call such a problem \emph{dynamic} and
the version with only insertions \emph{static}.

A number of different scenarios to loosen the strict requirement of
irreversibility~--~called \emph{semi-online} scenarios~--~have been developed
over time in order to find algorithms that achieve good competitive ratios for
some of the scenarios
bounded reversibility. In the last few years, the concept of the
\emph{migration factor} has been studied intensively \cite{Berndt2018,
  DBLP:journals/mp/EpsteinL09,DBLP:journals/siamjo/EpsteinL13,DBLP:journals/algorithmica/EpsteinL14,
  DBLP:conf/icalp/FeldkordFGGKRW18,DBLP:conf/esa/GalvezSV18,DBLP:conf/approx/JansenKKL17,DBLP:journals/mor/SandersSS09,DBLP:journals/mor/SkutellaV16}.
Roughly speaking, a migration factor of $\beta$ allows to reverse
a total size of $\beta\cdot s(o)$ decisions, where $s(o)$ denotes
the \emph{size} of the newly arrived object $o$. For a packing
problem, this means that the algorithm is allowed to repack
objects with a total size of $ \beta \cdot s(o)$. This notion of
reversibility is very natural, as it guarantees that a small
object can only lead to small changes in the solution structure.
Furthermore, algorithms with bounded migration factor often show a
very clear-cut tradeoff between their migration and the
competitive ratio: Many algorithms in this setting have a bounded
migration factor which can be defined as a function $f(\eps)$
(growing with $\frac 1{\eps}$) and a small competitive ratio
$g(\eps)$ (growing with $\eps$), where the functions $f$ and $g$
can be defined for all $\eps > 0$
\cite{Berndt2018,DBLP:journals/mp/EpsteinL09,DBLP:conf/icalp/FeldkordFGGKRW18,DBLP:conf/esa/GalvezSV18,DBLP:conf/approx/JansenKKL17,DBLP:journals/mor/SandersSS09,DBLP:journals/mor/SkutellaV16}.
Such algorithms are called \emph{robust}, as the amount of
reversibility allowed only depends on the solution guarantee that
one wants to achieve. Such robust algorithms thus serve as
evidence for the possibility for sensitivity analysis in
approximated settings.

Many different problems have been studied in online and
semi-online scenarios, but two problems that have been considered
in nearly every scenario are classic scheduling problems: The
\emph{bin packing} problem and the
\emph{makespan minimization} problem. Both of these problems have been studied intensively in the migration model
\cite{Berndt2018,DBLP:journals/mp/EpsteinL09,DBLP:journals/algorithmica/EpsteinL14,DBLP:conf/icalp/FeldkordFGGKRW18,DBLP:conf/approx/JansenKKL17,DBLP:journals/mor/SandersSS09,DBLP:journals/mor/SkutellaV16}.
Both of these problems also have corresponding dual maximization variants.
The dual version of the makespan minimization problem, often
called the \emph{Santa Claus} or \emph{machine covering} problem,
has also been studied with migration
\cite{DBLP:conf/esa/GalvezSV18,DBLP:journals/mor/SkutellaV16}. In
contrast, the dual version of bin packing, called \emph{bin
covering} has not yet been studied in this model. The aim of this
paper is to remedy this situation by taking a look at this classic
scheduling problem in the migration model.

\subsection*{Formal Problem Statement}
In the \emph{bin covering problem}, a set of items $\items$ with
sizes $s\colon\items\to (0,1]$ is used to cover as many unit sized
bins as possible, that is, $\items$ has to be partitioned
maximizing the number of partitions with summed up item size of at
least one. An instance of the problem will usually be denoted as
$I$ and is given as a sequence of entries $(i,s(i))$ where $i$ is
the identifier of the item and $s(i)$ is the size of the item. A
\emph{solution} to such an instance $I$ with items $\items$ is a
partition $P\colon \items \to \mathbb{N}$ and a set $B=P^{-1}(k)$
with $B\neq \emptyset$ is called a \emph{bin} and we say that the
items in $B$ are packed into the $k$-th bin. For a subset $I'
\subseteq I$, let $s(I')=\sum_{i\in I'} s(i)$. A bin $B$ is
\emph{covered} if $s(B) \geq 1$, where $s(B)$ is called the load
of $B$ or its size, and the goal is to maximize the number of such
covered bins. The optimal (maximum) number of covered bins of
instance $I$ is denoted as $\Opt(I)$.

We also use the following notations throughout our work: The smallest size of an item in bin $B$ is defined as $s_{\min}(B)
:= \min_{i\in B}\{s(i)\}$ and the largest size is defined as
$s_{\max}(B) := \max_{i\in B}\{s(i)\}$. If $\mathcal{B}$ is a set
of bins, we also define its total size $s(\mathcal{B}) :=
\sum_{B\in \mathcal{B}} s(B)$, its minimal size
$s_{\min}(\mathcal{B}) := \min_{B\in \mathcal{B}} s_{\min}(B)$,
and its maximal size $s_{\max}(\mathcal{B}) := \max_{B\in
\mathcal{B}} s_{\max}(B)$. Furthermore, we define
$s_{\min}(\emptyset) = +\infty$ and $s_{\max}(\emptyset) = 0$.

We consider variants of static and dynamic online bin covering in
which algorithms are allowed to reassign a bounded amount of
previously assigned items. In particular, an algorithm has a
\emph{migration factor} of $\beta$, if the total size of items
that it reassigns upon arrival or departure of an item of size $s$
is bounded by $\beta s$. Moreover, it has an \emph{amortized
  migration factor} of $\beta$, if at any time the total size of items that have
been reassigned by the algorithm in total is bounded by $\beta S$,
where $S$ is the total size of all items that arrived before.
Intuitively, an item of size $s$ creates a migration potential of
size $\beta s$ upon arrival, and this potential may be used by an
algorithm to reassign items right away (non-amortized) or anytime
from then on (amortized). Note that if an algorithm has a
non-amortized migration factor of $\beta$, then it also has an
amortized migration factor of at most $\beta$. Thus, we study four
variants in this work.

Offline bin covering is $\Class{NP}$-hard and therefore there is
little hope for a polynomial time algorithm solving the problem to
optimality, and in the online setting there is no algorithm that
can maintain an optimal solution regardless of its running time.
We prove that this non-existence of algorithms that maintain an
optimal solution holds also for (static or dynamic) algorithms
with bounded amortized migration factor (and thus also for
algorithms with bounded non-amortized migration factor).

Hence, algorithms satisfying some performance guarantee are
studied. In particular an offline algorithm $\Alg$ for a
maximization problem has an \emph{asymptotic performance
guarantee} of $\alpha \geq 1$, if $\Opt(I) \leq \alpha \cdot
\Alg(I) + c$, where $\Opt(I)$ and $\Alg(I)$ are the objective
values of an optimal solution or the one produced by $\Alg$
respectively for some instance $I$, and $c$ is an input
independent constant. If $c=0$ holds, $\alpha$ is called
\emph{absolute} rather than asymptotic. An online algorithm has a
(asymptotic or absolute) \emph{competitive ratio} of $\alpha$, if
after each arrival or departure an (asymptotic or absolute)
performance guarantee of $\alpha$ for the instance of the present
items holds. Note that we use the convention of competitive ratios
larger than $1$ for maximization problems. For minimization
problems similar definitions are used but they use the required
inequality $\Alg(I)\leq \alpha\cdot \Opt(I)+c$. As we study
asymptotic competitive ratios in this paper, we will sometimes
omit the word asymptotic (and we always use the word absolute for
absolute competitive ratios).

In what follows we assume $s\colon\items\to (0,1)$, \ie that there
are no items of size $1$. This is justified by the following. For
any algorithm it is possible to add a rule that such an item will
always be packed into its own bin, which will be covered. Without
loss of generality, we can consider an optimal solution that
applies this rule as well. This does not affect the (asymptotic or
absolute) {competitive ratio}.

\subsection*{Known Results for Bin Covering}
The offline bin covering problem was first studied by Assmann et
al.~\cite{DBLP:journals/jal/AssmannJKL84}. It was shown that a
simple greedy strategy achieves approximation ratio $2$. For the
online version of the bin covering problem, Csirik and Totik
showed in~\cite{DBLP:journals/dam/CsirikT88} that this simple
greedy algorithm also works in the online setting and that the
competitive ratio of $2$ reached by this algorithm is the best
possible. Csirik, Johnson, and Kenyon presented an asymptotic
polynomial time approximation scheme (APTAS) with approximation
ratio $1+\eps$ in~\cite{DBLP:conf/soda/CsirikJK01}. This was
improved to an asymptotic fully polynomial time approximation
scheme (AFPTAS) by Jansen and Solis-Oba
in~\cite{jansen2003bincovering}. Many different variants of this
problem have also been investigated: If a certain number of
classes needs to be part of each
bin~\cite{DBLP:journals/mst/EpsteinIL10,
  DBLP:conf/latin/FischerR18}; if items are drawn
probabilistically~\cite{DBLP:conf/latin/FischerR16,DBLP:conf/latin/FischerR18};
if bins have different sizes~\cite{DBLP:journals/iandc/Epstein01,
  DBLP:journals/orl/WoegingerZ99}; if the competitiveness is not measured with
regard to an optimal offline
algorithm~\cite{DBLP:journals/tcs/ChristFL14,DBLP:journals/scheduling/EpsteinFK12}.
More variants are discussed for example
in~\cite{DBLP:reference/crc/2007aam} and lower bounds for several
variants are studied in~\cite{balogh2018lower}. The dynamic
variant was not studied in the online scenario since there is no
algorithm with a finite competitive ratio. Specifically, if all
items are very small, it can happen that items depart in a way
that no bin remains covered, while an optimal solution could pack
items differently such that all items depart from a small number
of bins, and in this way it will still cover many bins.

\subsection*{Our Results}
We present competitive algorithms using both amortized migration
and non-amortized migration and develop nearly matching lower
bounds (up to an arbitrary small additive term of $\eps$). These
bounds show the optimality of all of our algorithms for both the
static and the dynamic version of the bin covering problem. The
main technical contribution of our work is an algorithm with
competitive ratio $3/2+\eps$ and non-amortized migration of
$\Oh(\frac{\log^{2}(1/\eps)} {\eps ^{5}})$ for the dynamic bin
covering problem where items arrive and depart. A major obstacle
in the design of competitive algorithms for dynamic problems is
the impossibility of moving large items on the arrival or
departure of small items. We overcome this obstacle by developing
a delicate technique to combine the packing of large and small
items. The main results of this work are summarized in the
following table. Note that the lower bound of $1$ in the third row
indicates that there is no online algorithm that maintains an
optimal solution with amortized migration factor $\Oh(1)$. All of
our algorithms run in polynomial time. Curiously, we achieve a
polynomial migration factor, while most known migration factors
are exponential (e.g., for the makespan minimization
problem~\cite{DBLP:journals/mor/SandersSS09}) with the exception
of bin packing ~\cite{DBLP:conf/icalp/JansenK13}.

Thus, we show how to overcome the shortcoming of online algorithms
in the dynamic setting by using migration and get a constant
competitive ratio in the most general setting. Surprisingly, the
constant is even smaller than the best possible competitive ratio
for pure online algorithms in the static case for which the
competitive ratio is $2$, while our bound is $\frac 32+\eps$ for
any $\eps>0$.

{\begin{table}[h!]
\renewcommand{\arraystretch}{1.35   }

\begin{tabular}{lllll}
  \toprule
  Amortization & Departures & Lower Bound &  Competitive Ratio & Migration \\
  \cmidrule(lllll){1-5}
  \faRemove & \faRemove & $3/2$ & $3/2+\eps$ & $\Oh(1/\eps)$\\
  \faRemove & \faCheck & $3/2$ & $3/2+\eps$ & $\Oh(\frac{\log^{2}(1/\eps)} {\eps ^{5}})$\\
  \faCheck & \faRemove &  $1$ & $1+\eps$ & $\Oh(1/\eps)$\\
  \faCheck & \faCheck & $3/2$ & $3/2+\eps$ & $\Oh(\frac{\log^{2}(1/\eps)} {\eps ^{5}})$\\
  \bottomrule
\end{tabular}
\end{table}}
\subsection*{Related Results}

\paragraph*{Makespan Minimization and Santa Claus:}
The migration factor model was introduced by Sanders, Sivadasan,
and Skutella in~\cite{DBLP:journals/mor/SandersSS09}. The paper
investigated several algorithms for the makespan minimization
problem and also presents an approximation scheme with absolute
competitive ratio $1+\eps$ and non-amortized migration factor
$2^{O((1/\eps) \cdot\log^{2}(1/\eps))}$. Skutella and Verschae
\cite{DBLP:journals/mor/SkutellaV16} studied a dynamic setting
with amortized migration, where jobs may also depart from the
instance. They achieved the same absolute competitive ratio, but
their algorithm needs an amortized migration of
$2^{O((1/\eps)\cdot
  \log^{2}(1/\eps))}$. Their algorithm also works for the Santa Claus (or
machine covering) problem, for which they show that even in the
static setting no algorithm has absolute competitive ratio
$1+\eps$ and a bounded migration factor. If one aims for a
polynomial migration factor for the Santa Claus problem, G\'alvez,
Soto, and Verschae presented an online variant of the LPT (least
processing time) algorithm achieving an absolute competitive ratio
of $4/3+\eps$ with non-amortized migration factor
$O((1/\eps^{3})\cdot
\log(1/\eps))$~\cite{DBLP:conf/esa/GalvezSV18}.

For the makespan minimization problem, if jobs can be preempted, Epstein and
Levin showed in~\cite{DBLP:journals/algorithmica/EpsteinL14} that an optimal
algorithm with a non-amortized migration factor of $1-1/m$ is achievable and the best
possible.

\paragraph*{Bin Packing:}
Epstein and Levin presented an approximation scheme with the same
ratio $1+\eps$ and the same non-amortized migration factor
$2^{O((1/\eps)\cdot \log^{2}(1/\eps))}$ as in the makespan
minimization for the bin packing problem
in~\cite{DBLP:journals/mp/EpsteinL09}. This result was improved by
Jansen and Klein in~\cite{DBLP:conf/icalp/JansenK13}, who
drastically reduced the migration factor to $O(1/\eps^{4})$.
Berndt, Jansen, and Klein used a similar approach to also handle
the dynamic bin packing problem, where items may also depart over
time~\cite{Berndt2018}. They also showed that a non-amortized
migration factor of $\Omega(1/\eps)$ is needed for this. A
generalized model, where an item $i$ has arbitrary movement costs
$c_{i}$~--~not necessarily linked to the size of an item~--~was
studied by Feldkord et
al.~\cite{DBLP:conf/icalp/FeldkordFGGKRW18}. They showed that for
$\alpha\approx 1{.}387$ and every $\eps > 0$, a competitive ratio
of $\alpha+\eps$ is achievable with migration $O(1/\eps^{2})$, but
no algorithm with migration $o(n)$ and ratio $\alpha-\eps$ exists.
By strengthening the lower bound of~\cite{Berndt2018}, they also
showed that an amortized migration factor of $\Omega(1/\eps)$ is
needed for the standard migration model~--~where the movement
costs $c_{i}$ are equal to the items sizes $s_{i}$~--~if one wants
to achieve a competitive ratio of $1+\eps$. A generalization of
bin packing, where $d$-dimensional cubic items are packed into as
few unit size cubes was studied by Epstein and
Levin~\cite{DBLP:journals/siamjo/EpsteinL13}.

\section{Non-amortized Migration in the Static Case}\label{sec:Static_Non_Amortized}
We begin our study by analyzing the static case with non-amortized migration. We
will first present a lower bound showing that no algorithm with constant
non-amortized migration factor can have a competitive ratio below $3/2$. Then, we present an
algorithm that achieves for all $\eps > 0$ a competitive ratio of $3/2+\eps$
with non-amortized migration factor $O(1/\eps)$.

\subsection{Lower Bound}
We start with a simple lower bound on the asymptotic competitive
ratio of all algorithms with constant non-amortized migration
factor. This lower bound can also be proved for a different
definition of the asymptotic competitive ratio $\alpha$, where we
require $\Opt(I) \leq \alpha \cdot \Alg(I) + o(\Opt(I))$.

\begin{proposition}
  \label{thm:wc_and_static_lb}
There is no algorithm for static online bin covering with a constant non-amortized migration factor and an asymptotic competitive ratio smaller than $3/2$.
\end{proposition}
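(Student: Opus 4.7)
The plan is to construct an adversarial two-phase instance, parametrized by $\eps > 0$, with parameters depending on the algorithm's migration factor $\beta$. Phase~1 consists of $N$ \emph{big} items of common size $1 - \eps$; phase~2 consists of many \emph{tiny} items, each of size $\delta < (1 - \eps)/\beta$, with total size exactly $N\eps$. The key observation is that the per-arrival migration budget during phase~2 is only $\beta \delta < 1 - \eps$, so the algorithm cannot move any big once phase~2 begins: the placement of the bigs at the end of phase~1 is effectively frozen.

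Next I would compute the optimum at each phase. After phase~1 we have $\Opt = \lfloor N/2 \rfloor$, obtained by pairing bigs (each pair has size $2 - 2\eps \geq 1$). After phase~2 we have $\Opt = N$, obtained by placing each big in its own bin and topping it up with tinies summing to $\eps$, using exactly the $N\eps$ available tiny-volume.

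Then I would analyze an arbitrary algorithm. Since any bin with three or more bigs wastes coverage capacity, its best phase-1 configuration consists of $p$ pair-bins and $N - 2p$ singleton-big bins, giving phase-1 competitive ratio $r_1 = (N/2)/p$. In phase~2 the bigs are frozen, but the algorithm has full freedom over tiny placement. Its best option is to top up each of the $N - 2p$ singleton-big bins with $\eps$ worth of tinies (covering them) and to pack the remaining $2p\eps$ tiny-volume into smalls-only bins, for a total of at most $p + (N - 2p) + 2p\eps = N - p(1 - 2\eps)$ covered bins up to an additive constant from rounding. This yields $r_2 = N/(N - p(1 - 2\eps))$.

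Balancing $r_1 = r_2$ gives $p = N/(3 - 2\eps)$ and $\max(r_1, r_2) = (3 - 2\eps)/2 = 3/2 - \eps$. Since every algorithm faces a ratio of at least $(3 - 2\eps)/2 - O(1/N)$ on this instance, and since $\eps > 0$ is arbitrary, the asymptotic competitive ratio of any algorithm with constant non-amortized migration factor is at least $3/2$. The main technical obstacle is ruling out more exotic phase-1 configurations (e.g.\ mixing groups of three or more bigs, or leaving bigs arranged in bins that overlap partially with placed tinies during phase 1) and verifying that no creative phase-2 placement improves on the volume-based bound; both reduce to short counting arguments, using that every covered bin has volume at least $1$ and that the bigs' positions are immovable after phase~1.
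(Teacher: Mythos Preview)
Your proposal is correct and follows essentially the same approach as the paper: a two-phase adversary with big items of size $1-\eps$ followed by tiny items whose per-arrival migration budget is too small to move any big item, forcing the phase-1 configuration to be frozen. The paper's version differs only cosmetically: it uses a single size parameter (choosing $\eps \le (12N)^{-1}$ so the total tiny volume is below $1$, eliminating your $2p\eps$ term entirely) and replaces your min--max balancing of $r_1,r_2$ with the equivalent summing trick $\Alg(I_1)+\Alg(I_2)\le 6N$ combined with $\Opt(I_1)+\Opt(I_2)=9N$; your ``technical obstacle'' about exotic phase-1 configurations is handled there by the one-line observation that every covered bin contains at least two bigs.
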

\begin{proof}

  Let $N$ be an integer and $\Alg$ an algorithm for online bin covering with a
  non-amortized migration factor of $\beta$. Furthermore, let $R $ be the
  asymptotic competitive ratio of $\Alg$, \ie we have $\Opt(I)\leq R\cdot
  \Alg(I)+c$ for some constant value $c$. We construct an instance that is
  divided into two phases, in each of which of equal size items
  arrive. The instance corresponding to the first phase is called $I_1$ and the
  one corresponding to both phases $I_2$. The optimal (offline) objective value
  of $I_1$ and $I_2$ will be a multiple of $N$ and we will show that the ratio
  between this value and the objective value that is achieved by $\Alg$ is at
  least $3/2$ either for $I_1$ or for~$I_2$.

  In both phases $6N$ items arrive, in the first they have size $1-\eps$ and in
  the second one size $\eps$. We choose the parameter $\eps$ such that the items
  from the second phase do not fill a whole bin and such that they cannot be
  used to migrate items from the first, more precisely, let $\eps = \min\set{(2\beta + 2)^{-1},
    (12N)^{-1}}$. Note that the optimal objective value for $I_1$ and $I_2$ is
  $3N$ and $6N$ respectively. See Fig.~\ref{fig:lb_static} for feasible packings
  with these objective values. The optimality follows as each covered bin must
  have at least two items since the sizes of the items are strictly smaller than
  $1$. As each of the $\Alg(I_1)$ covered bins has to contain at least $2$
  items from the first phase, there are at most $6N-2\Alg(I_{1})$ bins each of
  which with exactly one item of the first phase. Furthermore, items from the
  first phase cannot be migrated during the second, as
  $\eps \cdot \beta \leq (2\beta+2)^{-1}\cdot \beta  < 1/2 < 11/12 \leq 1-\eps$.

Hence, we have $\Alg(I_2)\leq \Alg(I_1) + (6N -
2\Alg(I_1))=6N-\Alg(I_{1})$. We thus have
$\Alg(I_{1})+\Alg(I_{2})\leq 6N$. As the asymptotic competitive
ratio of $\Alg$ is $R$, we can also conclude that $3N =
\Opt(I_{1}) \leq R\cdot \Alg(I_{1})+c$ and $6N = \Opt(I_{2})\leq
R\cdot \Alg(I_{2})+c$. Summing these two inequalities gives $9N
\leq R(\Alg(I_{1})+\Alg(I_{2}))+2c$. As
$\Alg(I_{1})+\Alg(I_{2})\leq 6N$, we have $9N\leq 6RN +2c$ and
thus $R\geq 3/2 - (2c)/(6N)$. Letting $N$ grow to infinity thus
yields $R\geq 3/2$.
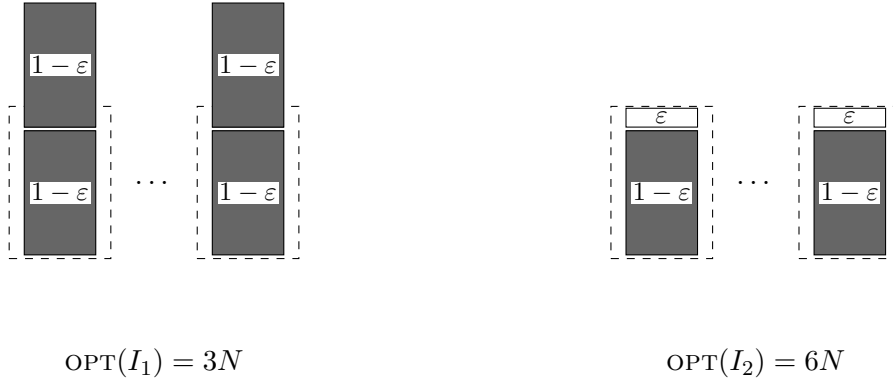
\begin{figure}[h]
  \centering
\begin{tikzpicture}
  \drawBin{2}{}{b1}

  \drawBin{4.5}{}{b2}
  \node at ($ (b1)!0.5!(b2)$) {\ldots};
  \addToBin{2}{0}{.85}{\tikz\node[fill=white,inner sep=.5]{\small $1-\eps$};}{fill=big}{}
  \addToBin{2}{.85}{.85}{\tikz\node[fill=white,inner sep=.5]{\small $1-\eps$};}{fill=big}{}
  \addToBin{4.5}{0}{.85}{\tikz\node[fill=white,inner sep=.5]{\small $1-\eps$};}{fill=big}{}
  \addToBin{4.5}{.85}{.85}{\tikz\node[fill=white, inner sep=.5]{\small $1-\eps$};}{fill=big}{}
  \node[yshift=-2.4cm] at ($ (b1)!0.5!(b2)$) {$\Opt(I_{1})=3N$};

  \drawBin{10}{}{b3}
  \drawBin{12.5}{}{b4}
  \node at ($ (b3)!0.5!(b4)$) {\ldots};
  \addToBin{10}{0}{.85}{\tikz\node[fill=white, inner sep=.5]{\small $1-\eps$};}{fill=big}{}
  \addToBin{10}{.85}{.15}{\small $\eps$}{fill=small}{}
  \addToBin{12.5}{0}{.85}{\tikz\node[fill=white, inner sep=.5]{\small $1-\eps$};}{fill=big}{}
  \addToBin{12.5}{.85}{.15}{\small $\eps$}{fill=small}{}
  \node[yshift=-2.4cm] at ($ (b3)!0.5!(b4)$) {$\Opt(I_{2})=6N$};

\end{tikzpicture}

  \caption{Feasible packings of $I_1$ and $I_2$}
  \label{fig:lb_static}
\end{figure}

\end{proof}

\subsection{Upper Bound}
We will now give our algorithm $\Alg$ for this scenario. In
addition to the instance $I$, a parameter $\eps > 0$ is also given
that regulates the asymptotic competitive ratio and the used
migration.  The assumption $\eps \leq 0.5$ is justified as for
$\eps \geq 0.5$, the result follows by the online algorithm with
an asymptotic competitive ratio of $2$ presented
in~\cite{DBLP:journals/dam/CsirikT88}, or by using the algorithm
below with $\eps = \frac 12$.

\begin{theorem}
  \label{thm:wc_and_static_alg}
For each $\eps \in (0,0.5]$, there is an algorithm $\Alg$ for
static online bin covering with polynomial running time, an
asymptotic competitive ratio of $1.5 + \eps$ with an additive
constant of $3$, and a non-amortized migration factor of
$\Oh(1/\eps)$.
\end{theorem}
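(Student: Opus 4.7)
The plan is to partition items into \emph{big} (size $\geq 1/2$) and \emph{small} (size $< 1/2$), and handle each class with a tailored strategy. The central observation is that every covered bin contains at most two big items, so OPT must choose, for each big, between using it in a pair bin (with another big) or as a singleton completed by smalls. Mirroring the extremal instances behind Proposition~\ref{thm:wc_and_static_lb}, the algorithm will commit each big to pair-vs-singleton in a $2{:}1$ ratio: after $B$ big items have arrived, the number of pair bins is $\lfloor B/3 \rfloor$ and the number of singleton big bins is $B-2\lfloor B/3 \rfloor$. This is precisely the split that keeps both regimes tight against $3/2$: with no smalls only the $\lfloor B/3 \rfloor$ pairs are covered (against $\lfloor B/2 \rfloor$ in OPT), while with plentiful smalls the singletons get completed so that about $2B/3$ big-bearing bins are covered (against at most $B$ in OPT).

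Concretely, I would maintain bin types $\BBbins$, $\BSfullbins$ (one big plus enough smalls, covered), $\BSwaitbins$ (one big still awaiting smalls), $\Sfullbins$ (small-only, covered), plus one open small accumulator. Upon arrival of a big $b$: open a new $\BSwaitbins$ bin for $b$; whenever the updated big count $B$ becomes a multiple of three, merge two $\BSwaitbins$ bins into one $\BBbins$ pair by migrating one big across and returning that bin's smalls to the accumulator. Upon arrival of a small $i$: add $i$ to the $\BSwaitbins$ bin nearest to being covered (promoting it to $\BSfullbins$ once the bin reaches size $1$), or otherwise to the small accumulator (closing it as $\Sfullbins$ once its size reaches $1$). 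Smalls across the accumulator and open $\BSwaitbins$ bins may be rebalanced within the migration budget; by rounding small sizes into $\Oh(1/\eps)$ classes and rebalancing accordingly, each closed $\Sfullbins$ bin has size in $[1,1+\eps)$.

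For the competitive analysis, let $B$ and $S$ denote the current big count and total small size. Writing $\Opt = p^{*}+q^{*}+s^{*}$ for OPT's bins with two, one, and zero bigs, the constraints $2p^{*}+q^{*}\leq B$ (big count) together with (using $s(b)\geq 1/2$) singleton smalls $\leq q^{*}/2$ and small-only smalls $\geq s^{*}$ give $\Opt = B/2 + q^{*}/2 + s^{*} \leq B/2 + S$. On the algorithm side, the number of $\BBbins$ bins is $\lfloor B/3\rfloor$, and combining the greedy filling of $\BSwaitbins$ with the $\eps$-tight packing of $\Sfullbins$, a short case analysis on whether smalls are enough to cover all singletons shows $\Alg \geq B/3 + (2/3-\Oh(\eps))S - \Oh(1)$. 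Therefore $(3/2+\eps)\Alg \geq B/2 + S - \Oh(1) \geq \Opt - \Oh(1)$, giving the claimed asymptotic ratio with the additive slack absorbed into the additive constant $3$.

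For migration, each big arrival moves at most one big (size $\leq 1$) plus the smalls of one emptied $\BSwaitbins$ bin (size $<1$), fitting within the budget $\beta\cdot s(b) \geq \beta/2$, so $\beta=\Oh(1)$ suffices for big-item handling alone. Each small arrival rebalances smalls of total size $\Oh(1/\eps)\cdot s(i)$, well within the budget $\beta\cdot s(i)$ for $\beta=\Oh(1/\eps)$. The main technical obstacle I expect is precisely this small-only packing: achieving closed bin sizes in $[1,1+\eps)$ online, so that the overall ratio drops to $3/2 + \eps$ rather than degrading to the $2$ bound of pure greedy, requires maintaining a first-fit-decreasing-style structure over $\Oh(1/\eps)$ small-size classes via migration. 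This small-item subroutine is what pushes the migration factor to $\Oh(1/\eps)$ and introduces the $\eps$ into the competitive ratio.
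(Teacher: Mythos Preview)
Your competitive analysis breaks at the upper bound on $\Opt$. You claim $q^*/2 + s^* \leq S$ and hence $\Opt \leq B/2 + S$, but this is false. Take $B$ big items of size $0.99$ and $B$ small items of size $0.01$: then $S = 0.01B$, $\Opt = B$ (each big paired with one small), and $B/2 + S = 0.51B$. The step ``singleton smalls $\leq q^*/2$'' gives $S_q \leq q^*/2$, which is an inequality in the wrong direction for bounding $q^*/2$ by anything involving $S$. The real obstruction is that a big item of size close to $1$ can be completed with arbitrarily little small mass, so $\Opt$ is not controlled by $B/2 + S$ at all. Your algorithm happens to be $3/2$-competitive on this particular instance (it covers $2B/3$ bins), but the inequality you use to prove it is not valid, and your lower bound $\Alg \geq B/3 + (2/3)S$ gives only $0.34B$ here, far short of what is needed.

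A second, related gap: the claim that closed $\Sfullbins$ have size in $[1,1+\eps)$ is impossible when your ``smalls'' include items of size, say, $0.49$: any multiset of such items has size $0.49k$, which misses $[1,1+\eps)$ entirely for small $\eps$. No rounding into $\Oh(1/\eps)$ classes or rebalancing can fix this without splitting items. This is precisely why the paper introduces a \emph{three}-way partition with a medium class $(\eps,1/2]$: mediums are combined with the \emph{largest} big items in dedicated $\BMbins$ bins (barely covered, size $<1.5$), while only genuinely small items (size $\leq \eps$) go into $\Sbins$ and $\BSbins$, where the $[1,1+\eps)$ bound is achievable. The paper also maintains a size ordering of big items across bin types (smallest bigs in $\BBbins$, largest in $\BMbins$), and its analysis in the hard case $\BSwaitbins\neq\emptyset$, $\BBbins\neq\emptyset$ does not use a volume bound at all---it transforms the instance so that every big item has the same threshold size $\xi$ and then counts items of two designated classes per covered bin. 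Your proposal is missing both the medium-item mechanism and an argument that can handle big items of heterogeneous sizes.
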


The algorithm distinguishes between big, medium and small items.
For each item, it calls a corresponding insertion procedure based
on this classification into three classes. An item $i$ is called
\emph{big} if $s(i)\in(0.5,1]$, \emph{medium} if
$s(i)\in(\eps,0.5]$, and \emph{small} otherwise. For a bin $B$,
let $\fsmall(B)$ be the set of small items of $B$. We define
$\fmedium(B)$ and $\fbig(B)$ accordingly and also extend these
notions to sets of bins $\mathcal{B}$. Furthermore, we call a
covered bin \emph{barely covered}, if removing the biggest item of
the smallest class of items, \ie big, medium or small, contained
in the bin, results in the bin not being covered anymore. For
instance, consider a bin containing four items with sizes $0.65$,
$0.3$, $\eps$ and $0.25\eps$, for $\eps \geq 0.04$. This bin
contains items of all three classes if $\eps < 0.3$. In this case,
the biggest item of the smallest class has size $\eps$, and if
$\eps = 0.1$, the bin is indeed barely covered. However, if
$\eps=0.22$, the bin is not barely covered. If $\eps=0.3$, the bin
only has items of two classes, and it is not barely covered, since
removing one item of size $0.3$ results in a total size above $1$.
Note that showing that removing an arbitrary item of the smallest
class of items for a given covered bin results in an uncovered bin
is sufficient for showing that it is barely covered.

Let $B$ be a barely covered bin. If $B$ contains at most one big
item, its load is bounded from above by $1.5$, and if $B$
additionally contains no medium item the bound is reduced to $1 +
\eps$. This holds due to the following. Since the big item has
size below $1$, the bin contains at least one medium or small
item. If the bin has no small items, removing the largest medium
item reduces the load to below $1$, and together with the medium
item the load is below $1.5$. If it has a small item, a similar
calculation shows that the load is below $1+\eps$.

The last two types of bins are benign in the sense that they allow
analysis using arguments that are based on sizes. This is not the
case for bins containing two big items. Such bins could have a
size arbitrarily close to $2$ (even if they have no other items).
Bins of this type are needed for instances with many big items
(for an example we refer to the construction of the lower bound in
Proposition~\ref{thm:wc_and_static_lb}). However, they cause
problems not only because they are wastefully packed, but also
because they exclusively contain big items that should only be
moved if suitably large items arrive in order to bound migration.
The basic idea of the algorithm is to balance the number of bins
containing two big items and the number of bins containing one big
item and no medium items. The two numbers will be roughly the
same, which is obtained using migration on arrivals of big and
medium items. As described in the previous paragraph, the
guarantees of these bins that are based on loads cancel each other
out in the sense that an average load not exceeding $1.5+\eps/2$
can be achieved. In order to keep the number of bins with two big
items in check, our algorithm will only produces very few bin
types and we will maintain several invariants. This structured
approach allows us also to bound the migration needed. We
elaborate on the details of the algorithm.

\paragraph{Bin Types and Invariants.}

We distinguish different types of bins packed by the algorithm.
See Fig.~\ref{fig:types} for an overview of covered bins (only).

\begin{figure}
  \centering
\begin{tikzpicture}
  \drawBin{0}{}{bb}
  \addToBin{0}{0}{.6}{}{fill=big}{bb1}
  \addToBin{0}{.6}{.6}{}{fill=big}{bb2}
  \node[below=of bb, yshift=-1cm]{$\BBbins$};

  \drawBin{2}{}{bm}
  \addToBin{2}{0}{.6}{}{fill=big}{}
  \addToBin{2}{.6}{.25}{}{fill=medium}{}
  \addToBin{2}{.85}{.2}{}{fill=medium}{}
  \node[below=of bm, yshift=-1cm]{$\BMbins$};

  \drawBin{4}{}{bs}
  \addToBin{4}{0}{.6}{}{fill=big}{}
  \addToBin{4}{.6}{.1}{}{fill=small}{}
  \addToBin{4}{.7}{.1}{}{fill=small}{}
  \addToBin{4}{.8}{.12}{}{fill=small}{}
  \addToBin{4}{.92}{.13}{}{fill=small}{}
  \node[below=of bs, yshift=-1cm]{$\BSfullbins$};

  \drawBin{6}{}{m}
  \foreach \i in {0,1,...,4}{
    \addToBin{6}{\i*.22}{.22}{}{fill=medium}{}
    }
  \node[below=of m, yshift=-1cm]{$\Mfullbins$};

  \drawBin{8}{}{s}
  \foreach \i in {0,1,...,8}{
    \addToBin{8}{\i*.12}{.12}{}{fill=small}{}
    }
  \node[below=of s, yshift=-1cm]{$\Sfullbins$};
\end{tikzpicture}

  \caption{Different types of bins produced by the algorithm containing
    big items in dark gray, medium items in light gray, and
    small items in white.}
  \label{fig:types}
\end{figure}
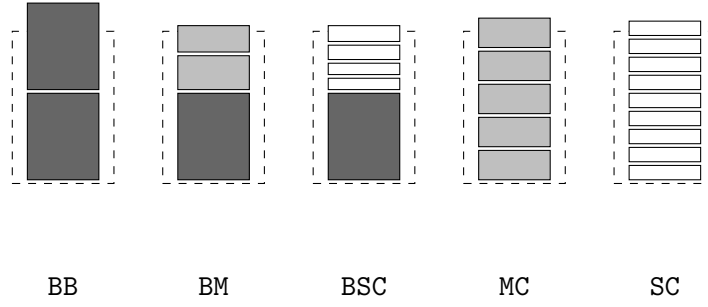
The bins are partitioned into bins containing two big items (and
no other items) $\BBbins$; barely covered bins containing one big
item and some medium items $\BMbins$; {\em barely covered} bins
containing one big item and some small items $\BSfullbins$; bins
that are not covered ({\em partially covered}) and contain one big
item and no medium items (but it could contain small items)
$\BSwaitbins$; and bins that are at most barely covered (they are
barely-covered, or not covered) and exclusively contain small or
medium items $\Sbins$ or $\Mbins$ respectively. Furthermore, let
$\Mfullbins\subseteq\Mbins$ and $\Sfullbins\subseteq\Sbins$ be the
corresponding subsets of barely covered bins (while $M\setminus
MC$ and $S\setminus SC$ are sets of bins that are not covered). We
denote the (disjoint) union of $\BSfullbins$ and $\BSwaitbins$ as
$\BSbins$. The set of bins packed by the algorithm (covered or not
covered) is denoted as $\bins$. All bins covered by the algorithm
are in fact barely covered, and no bin (covered or not) contains
items of all three classes. Among bins that are not covered, there
are no bins containing a big item and a non-empty set of medium
items.

We will now introduce the invariants needed. We first give a
formal definition and will sometimes follow this by a mnemonic in
italics to help the reader in remembering the purpose of this
invariant. The first invariant the algorithm ensures that this bin
structure is maintained and the second invariant was already
indicated above ($A\ \dot{\cup}\ B$ denotes the disjoint union of
$A$ and $B$):
\begin{enumerate}[label=I\arabic*, series=staticinvariants]
\item The solution has the proposed bin type structure, \ie
$$\bins = \BBbins\ \dot{\cup}\ \BMbins\ \dot{\cup}\ \BSbins\
\dot{\cup}\ \Mbins\ \dot{\cup}\ \Sbins ,$$  and $\BSbins =
\BSfullbins\ \dot{\cup}\ \BSwaitbins$.
\label{Invariant:Binstructure} \item The sets $\BBbins$ and
$\BSbins$ are balanced in size, \ie
  $\abs[\big]{|\BBbins|-|\BSbins|} \leq 1 $.
 \label{Invariant:BB=BS}
\end{enumerate}
Recall that the average load of any bin of $\BBbins$ together with
any bin of $\BSbins$ does not exceed $1.5+\eps/2$, so by invariant
\ref{Invariant:BB=BS} the average load of bins of $\BBbins \cup
\BSbins$ possibly excluding one bin is not larger than $1.5
+\eps/2$.

Therefore we have $\Alg(I) = |\BBbins| + |\BMbins| + |\BSfullbins|
+ |\Mfullbins| + |\Sfullbins|$. Furthermore, we have several
invariants concerning the distribution of items to different bin
types. The intuition behind these invariants is always the same:
We have to ensure that no other algorithm is able to use the small
and medium items to cover a much large number of bins.
\begin{enumerate}[resume*=staticinvariants]
\item The big items contained in $\BMbins$ are at least as big as the ones in
  $\BSfullbins$ which in turn are at least as big as the ones in $\BSwaitbins$,
  and the smallest big items are placed in $\BBbins$, \ie $s(i) \geq s(i')$ for
  each (i) $i\in\fbig(\BMbins)$ and $i'\in\fbig(\BSbins\cup\BBbins)$; each (ii)
  $i\in\fbig(\BMbins\cup\BSfullbins)$ and $i'\in\fbig(\BSwaitbins\cup\BBbins)$;
  each (iii)
  $i\in\fbig(\BMbins\cup\BSbins)$ and $i'\in\fbig(\BBbins)$.
  \emph{The smallest biggest items are in $\BBbins$ and the largest in
    $\BMbins$. Informally, $\BBbins\leq \BSwaitbins\leq \BSfullbins\leq \BMbins$.} \label{Invariant:BigItems}
\item The union of items in $\Mbins$ cannot be used to cover a bin
together with a big
  item from $\BSbins$ or $\BBbins$, \ie $\BSbins\cup\BBbins \neq \emptyset
  \implies s(\Mbins) < 1 - s_{\max}(\BSbins\cup\BBbins)$. In the case $\BSbins\cup\BBbins = \emptyset$, there
    are no restrictions on $s(\Mbins)$.
  \emph{We can not build another $\BMbins$ bin from $\Mbins$ and $\BSbins\cup
    \BBbins$.}\label{Invariant:MediumBins}
\item If a bin containing only small items exists, all bins in $\BSbins$ are
  covered, \ie $|\Sbins| > 0 \implies |\BSwaitbins| = 0$.
  \emph{We do not have unnecessary $\Sbins$ bins in the sense that if they exist, their items cannot be used elsewhere.}
  \label{Invariant:SmallBins}
\end{enumerate}

Lastly, there are some bin types with bins that are not covered,
and we have to ensure that they are not wastefully packed:
\begin{enumerate}[resume*=staticinvariants]
\item If there are small items in $\BSwaitbins$, they are all included in the
  bin containing the biggest item in $\BSwaitbins$.
  \emph{The next bin of $\BSwaitbins$ that is planned to be covered by adding small items to it is the most
    loaded one.}
  \label{Invariant:BSPonlyonewithsmall}
\item Each of $\Sbins$ and $\Mbins$ will contain at most one bin
that is not covered, \ie $|\Sfullbins|\ge |\Sbins|-1$ and
$|\Mfullbins|\ge |\Mbins|-1$.
  \emph{As long as medium and small items arrive and they are packed into $\Mbins$ and $\Sbins$, bins of these sets are opened one by one.}
  \label{Invariant:SmallandMedium}
\end{enumerate}

This concludes the definition of all invariants. It is easy to see
that the invariants all hold in the beginning when no item has
arrived yet. Next, we describe the insertion procedures and argue
that the invariants are maintained.

\paragraph{Examples.}

\begin{figure}
  \centering
\begin{tikzpicture}
\pgfmathsetmacro{\sgap}{1.6}
\pgfmathsetmacro{\mgap}{1.7}
\pgfmathsetmacro{\bgap}{1.85}

\drawBin{0}{}{}
\addToBin{0}{0}{.55}{}{fill=big}{}
\addToBin{0}{.55}{.55}{}{fill=big}{}
  
\drawBin{\sgap}{}{bb}
\addToBin{\sgap}{0}{.55}{}{fill=big}{}
\addToBin{\sgap}{.55}{.55}{}{fill=big}{}

\draw[decorate,decoration={brace,amplitude=6pt,raise=0pt, mirror},yshift=0pt] (-0.2,-0.2) -- (\sgap + 1.2,-0.2) node [midway,yshift=-13pt]{$\BBbins$};

\drawBin{1*\sgap + 0*\mgap + 1*\bgap}{}{}
\addToBin{1*\sgap + 0*\mgap + 1*\bgap}{0}{.65}{}{fill=big}{}

\drawBin{2*\sgap + 0*\mgap + 1*\bgap}{}{}
\addToBin{2*\sgap + 0*\mgap + 1*\bgap}{0}{.65}{}{fill=big}{}
\addToBin{2*\sgap + 0*\mgap + 1*\bgap}{.65}{.07}{}{fill=small}{}
\addToBin{2*\sgap + 0*\mgap + 1*\bgap}{.72}{.12}{}{fill=small}{}

\draw[decorate,decoration={brace,amplitude=6pt,raise=0pt, mirror},yshift=0pt] (1*\sgap + 0*\mgap + 1*\bgap-0.2,-0.2) -- (2*\sgap + 0*\mgap + 1*\bgap + 1.2,-0.2) node [midway,yshift=-13pt]{$\BSwaitbins$};

\drawBin{2*\sgap + 1*\mgap + 1*\bgap}{}{}
\addToBin{2*\sgap + 1*\mgap + 1*\bgap}{0}{.7}{}{fill=big}{}
\addToBin{2*\sgap + 1*\mgap + 1*\bgap}{.7}{.08}{}{fill=small}{}
\addToBin{2*\sgap + 1*\mgap + 1*\bgap}{.78}{.1}{}{fill=small}{}
\addToBin{2*\sgap + 1*\mgap + 1*\bgap}{.88}{.2}{}{fill=small}{}

\draw[decorate,decoration={brace,amplitude=6pt,raise=0pt, mirror},yshift=0pt] ( 2*\sgap + 1*\mgap + 1*\bgap -0.2,-0.2) -- (2*\sgap + 1*\mgap + 1*\bgap + 1.2,-0.2) node [midway,yshift=-13pt]{$\BSfullbins$};

\draw[decorate,decoration={brace,amplitude=6pt,raise=0pt, mirror},yshift=0pt] (1*\sgap + 0*\mgap + 1*\bgap-0.2,-0.8) -- (2*\sgap + 1*\mgap + 1*\bgap + 1.2,-0.8) node [midway,yshift=-13pt]{$\BSbins$};

\drawBin{2*\sgap + 1*\mgap + 2*\bgap}{}{}
\addToBin{2*\sgap + 1*\mgap + 2*\bgap}{0}{.7}{}{fill=big}{}
\addToBin{2*\sgap + 1*\mgap + 2*\bgap}{.7}{.25}{}{fill=medium}{}
\addToBin{2*\sgap + 1*\mgap + 2*\bgap}{.95}{.4}{}{fill=medium}{}

\drawBin{3*\sgap + 1*\mgap + 2*\bgap}{}{}
\addToBin{3*\sgap + 1*\mgap + 2*\bgap}{0}{.9}{}{fill=big}{}
\addToBin{3*\sgap + 1*\mgap + 2*\bgap}{.9}{.45}{}{fill=medium}{}

\draw[decorate,decoration={brace,amplitude=6pt,raise=0pt, mirror},yshift=0pt] ( 2*\sgap + 1*\mgap + 2*\bgap -0.2,-0.2) -- (3*\sgap + 1*\mgap + 2*\bgap + 1.2,-0.2) node [midway,yshift=-13pt]{$\BMbins$};

\drawBin{3*\sgap + 1*\mgap + 3*\bgap}{}{}
\addToBin{3*\sgap + 1*\mgap + 3*\bgap}{0}{.29}{}{fill=medium}{}

\draw[decorate,decoration={brace,amplitude=6pt,raise=0pt, mirror},yshift=0pt] ( 3*\sgap + 1*\mgap + 3*\bgap -0.2,-0.2) -- (3*\sgap + 1*\mgap + 3*\bgap + 1.2,-0.2) node [midway,yshift=-13pt]{$\Mbins$};

\begin{scope}[yshift=-5cm]

\drawBin{0}{}{}
\addToBin{0}{0}{.52}{}{fill=big}{}
\addToBin{0}{.52}{.52}{}{fill=big}{}
  
\drawBin{\sgap}{}{}
\addToBin{\sgap}{0}{.52}{}{fill=big}{}
\addToBin{\sgap}{.52}{.52}{}{fill=big}{}

\drawBin{2*\sgap + 0*\bgap}{}{}
\addToBin{2*\sgap + 0*\bgap}{0}{.58}{}{fill=big}{}
\addToBin{2*\sgap + 0*\bgap}{.58}{.58}{}{fill=big}{}

\draw[decorate,decoration={brace,amplitude=6pt,raise=0pt, mirror},yshift=0pt] (-0.2,-0.2) -- (2*\sgap + 0*\mgap + 0*\bgap + 1.2,-0.2) node [midway,yshift=-13pt]{$\BBbins$};

\drawBin{2*\sgap + 1*\bgap}{}{}
\addToBin{2*\sgap + 1*\bgap}{0}{.6}{}{fill=big}{}
\addToBin{2*\sgap + 1*\bgap}{.6}{.08}{}{fill=small}{}
\addToBin{2*\sgap + 1*\bgap}{.68}{.08}{}{fill=small}{}
\addToBin{2*\sgap + 1*\bgap}{.76}{.1}{}{fill=small}{}
\addToBin{2*\sgap + 1*\bgap}{.86}{.1}{}{fill=small}{}
\addToBin{2*\sgap + 1*\bgap}{.96}{.13}{}{fill=small}{}

\drawBin{3*\sgap + 1*\bgap}{}{}
\addToBin{3*\sgap + 1*\bgap}{0}{.6}{}{fill=big}{}
\addToBin{3*\sgap + 1*\bgap}{.6}{.12}{}{fill=small}{}
\addToBin{3*\sgap + 1*\bgap}{.72}{.12}{}{fill=small}{}
\addToBin{3*\sgap + 1*\bgap}{.84}{.14}{}{fill=small}{}
\addToBin{3*\sgap + 1*\bgap}{.98}{.2}{}{fill=small}{}

\draw[decorate,decoration={brace,amplitude=6pt,raise=0pt, mirror},yshift=0pt] ( 2*\sgap + 1*\bgap-0.2,-0.2) -- (3*\sgap + 1*\bgap + 1.2,-0.2) node [midway,yshift=-13pt]{$\BSbins = \BSfullbins$};

\drawBin{3*\sgap + 2*\bgap}{}{}
\addToBin{3*\sgap + 2*\bgap}{0}{.36}{}{fill=medium}{}

\draw[decorate,decoration={brace,amplitude=6pt,raise=0pt, mirror},yshift=0pt] (3*\sgap + 2*\bgap -0.2,-0.2) -- (3*\sgap + 2*\bgap + 1.2,-0.2) node [midway,yshift=-13pt]{$\Mbins$};

\drawBin{3*\sgap +  3*\bgap}{}{}
\addToBin{3*\sgap +  3*\bgap}{0}{.1}{}{fill=small}{}
\addToBin{3*\sgap +  3*\bgap}{.1}{.1}{}{fill=small}{}
\addToBin{3*\sgap +  3*\bgap}{.2}{.1}{}{fill=small}{}
\addToBin{3*\sgap +  3*\bgap}{.3}{.1}{}{fill=small}{}
\addToBin{3*\sgap +  3*\bgap}{.4}{.15}{}{fill=small}{}
\addToBin{3*\sgap +  3*\bgap}{.55}{.2}{}{fill=small}{}
\addToBin{3*\sgap +  3*\bgap}{.75}{.2}{}{fill=small}{}
\addToBin{3*\sgap +  3*\bgap}{.95}{.2}{}{fill=small}{}

\drawBin{4*\sgap +  3*\bgap}{}{}
\addToBin{4*\sgap +  3*\bgap}{0}{.1}{}{fill=small}{}
\addToBin{4*\sgap +  3*\bgap}{.1}{.1}{}{fill=small}{}
\addToBin{4*\sgap +  3*\bgap}{.2}{.15}{}{fill=small}{}
\addToBin{4*\sgap +  3*\bgap}{.35}{.2}{}{fill=small}{}
\addToBin{4*\sgap +  3*\bgap}{.55}{.2}{}{fill=small}{}
\addToBin{4*\sgap +  3*\bgap}{.75}{.2}{}{fill=small}{}
\addToBin{4*\sgap +  3*\bgap}{.95}{.2}{}{fill=small}{}

\draw[decorate,decoration={brace,amplitude=6pt,raise=0pt, mirror},yshift=0pt] (3*\sgap + 3*\bgap-0.2,-0.2) -- (4*\sgap + 3*\bgap + 1.2,-0.2) node [midway,yshift=-13pt]{$\Sfullbins$};

\drawBin{4*\sgap + \mgap +  3*\bgap}{}{}
\addToBin{4*\sgap + \mgap +  3*\bgap}{0}{.1}{}{fill=small}{}
\addToBin{4*\sgap + \mgap +  3*\bgap}{.1}{.15}{}{fill=small}{}
\addToBin{4*\sgap + \mgap +  3*\bgap}{.25}{.2}{}{fill=small}{}

\draw[decorate,decoration={brace,amplitude=6pt,raise=0pt, mirror},yshift=0pt] (3*\sgap + 3*\bgap-0.2,-0.8) -- (4*\sgap + \mgap + 3*\bgap + 1.2,-0.8) node [midway,yshift=-13pt]{$\Sbins$};

\end{scope}

\begin{scope}[yshift=-10cm]

\drawBin{0}{}{}
\addToBin{0}{0}{.7}{}{fill=big}{}
\addToBin{0}{.7}{.25}{}{fill=medium}{}
\addToBin{0}{.95}{.4}{}{fill=medium}{}

\drawBin{\sgap}{}{}
\addToBin{\sgap}{0}{.8}{}{fill=big}{}
\addToBin{\sgap}{.8}{.49}{}{fill=medium}{}
  
\drawBin{2*\sgap + 0*\bgap}{}{}
\addToBin{2*\sgap + 0*\bgap}{0}{.9}{}{fill=big}{}
\addToBin{2*\sgap + 0*\bgap}{.9}{.3}{}{fill=medium}{}

\draw[decorate,decoration={brace,amplitude=6pt,raise=0pt, mirror},yshift=0pt] (-0.2,-0.2) -- (2*\sgap + 0*\mgap + 0*\bgap + 1.2,-0.2) node [midway,yshift=-13pt]{$\BMbins$};

\drawBin{2*\sgap + 1*\bgap}{}{}
\addToBin{2*\sgap + 1*\bgap}{0}{.24}{}{fill=medium}{}
\addToBin{2*\sgap + 1*\bgap}{0.24}{.25}{}{fill=medium}{}
\addToBin{2*\sgap + 1*\bgap}{0.49}{.31}{}{fill=medium}{}
\addToBin{2*\sgap + 1*\bgap}{0.8}{.33}{}{fill=medium}{}

\drawBin{3*\sgap + 1*\bgap}{}{}
\addToBin{3*\sgap + 1*\bgap}{0}{.4}{}{fill=medium}{}
\addToBin{3*\sgap + 1*\bgap}{0.4}{.45}{}{fill=medium}{}
\addToBin{3*\sgap + 1*\bgap}{0.85}{.47}{}{fill=medium}{}

\draw[decorate,decoration={brace,amplitude=6pt,raise=0pt, mirror},yshift=0pt] ( 2*\sgap + 1*\bgap-0.2,-0.2) -- (3*\sgap + 1*\bgap + 1.2,-0.2) node [midway,yshift=-13pt]{$\Mfullbins$};

\drawBin{3*\sgap + 1*\mgap + 1*\bgap}{}{}
\addToBin{3*\sgap + 1*\mgap  + 1*\bgap}{0}{.34}{}{fill=medium}{}
\addToBin{3*\sgap + 1*\mgap  + 1*\bgap}{0.34}{.36}{}{fill=medium}{}

\draw[decorate,decoration={brace,amplitude=6pt,raise=0pt, mirror},yshift=0pt] ( 2*\sgap + 1*\bgap-0.2,-0.8) -- (3*\sgap + 1*\mgap  + 1*\bgap + 1.2,-0.8) node [midway,yshift=-13pt]{$\Mbins$};

\drawBin{3*\sgap + 1*\mgap + 2*\bgap}{}{}
\addToBin{3*\sgap + 1*\mgap + 2*\bgap}{0}{.1}{}{fill=small}{}
\addToBin{3*\sgap + 1*\mgap + 2*\bgap}{.1}{.1}{}{fill=small}{}
\addToBin{3*\sgap + 1*\mgap + 2*\bgap}{.2}{.1}{}{fill=small}{}
\addToBin{3*\sgap + 1*\mgap + 2*\bgap}{.3}{.1}{}{fill=small}{}
\addToBin{3*\sgap + 1*\mgap + 2*\bgap}{.4}{.15}{}{fill=small}{}
\addToBin{3*\sgap + 1*\mgap + 2*\bgap}{.55}{.2}{}{fill=small}{}
\addToBin{3*\sgap + 1*\mgap + 2*\bgap}{.75}{.2}{}{fill=small}{}
\addToBin{3*\sgap + 1*\mgap + 2*\bgap}{.95}{.2}{}{fill=small}{}

\draw[decorate,decoration={brace,amplitude=6pt,raise=0pt, mirror},yshift=0pt] (3*\sgap + 1*\mgap + 2*\bgap-0.2,-0.2) -- (3*\sgap + 1*\mgap + 2*\bgap + 1.2,-0.2) node [midway,yshift=-13pt]{$\Sbins = \Sfullbins$};

\draw (1,-1.9) -- (14,-1.9);
\end{scope}

 \begin{scope}[yshift=-16.5cm]

\drawBin{0*\mgap }{}{b1}
\addToBin{0*\mgap}{0}{.9}{}{fill=big}{}
\addToBin{0*\mgap}{.9}{.9}{}{fill=big}{}

\node[below=of b1, yshift=-0.05cm]{$B_1$};
  
\drawBin{\mgap}{}{b2}
\addToBin{\mgap}{0}{.6}{}{fill=big}{}

\node[below=of b2, yshift=-0.05cm]{$B_2$};

\drawBin{2*\mgap }{}{b3}
\addToBin{2*\mgap}{0}{.7}{}{fill=big}{}
\addToBin{2*\mgap}{.7}{.2}{}{fill=small}{}

\node[below=of b3, yshift=-0.05cm]{$B_3$};

\drawBin{3*\mgap }{}{b4}
\addToBin{3*\mgap}{0}{.65}{}{fill=big}{}
\addToBin{3*\mgap}{.65}{.15}{}{fill=small}{}
\addToBin{3*\mgap}{.8}{.15}{}{fill=small}{}

\node[below=of b4, yshift=-0.05cm]{$B_4$};

\drawBin{4*\mgap }{}{b5}
\addToBin{4*\mgap}{0}{.6}{}{fill=big}{}
\addToBin{4*\mgap}{.6}{.25}{}{fill=medium}{}
\addToBin{4*\mgap}{.85}{.1}{}{fill=small}{}

\node[below=of b5, yshift=-0.05cm]{$B_5$};

\drawBin{5*\mgap }{}{b6}
\addToBin{5*\mgap}{0}{.3}{}{fill=medium}{}
\addToBin{5*\mgap}{0.3}{.4}{}{fill=medium}{}

\node[below=of b6, yshift=-0.05cm]{$B_6$};

\drawBin{6*\mgap }{}{b7}
\addToBin{6*\mgap}{0}{.22}{}{fill=medium}{}
\addToBin{6*\mgap}{0.22}{.28}{}{fill=medium}{}
\addToBin{6*\mgap}{0.50}{.33}{}{fill=medium}{}

\node[below=of b7, yshift=-0.05cm]{$B_7$};

\drawBin{7*\mgap }{}{b8}
\addToBin{7*\mgap}{0}{.3}{}{fill=medium}{}
\addToBin{7*\mgap}{0.3}{.15}{}{fill=small}{}
\addToBin{7*\mgap}{0.45}{.15}{}{fill=small}{}
\addToBin{7*\mgap}{0.6}{.15}{}{fill=small}{}
\addToBin{7*\mgap}{0.75}{.2}{}{fill=small}{}
\addToBin{7*\mgap}{0.95}{.2}{}{fill=small}{}

\node[below=of b8, yshift=-0.05cm]{$B_8$};

\drawBin{8*\mgap }{}{b9}
\addToBin{8*\mgap}{0}{.1}{}{fill=small}{}
\addToBin{8*\mgap}{0.1}{.15}{}{fill=small}{}
\addToBin{8*\mgap}{0.25}{.15}{}{fill=small}{}
\addToBin{8*\mgap}{0.4}{.15}{}{fill=small}{}
\addToBin{8*\mgap}{0.55}{.15}{}{fill=small}{}
\addToBin{8*\mgap}{0.7}{.2}{}{fill=small}{}
\addToBin{8*\mgap}{0.9}{.2}{}{fill=small}{}
\addToBin{8*\mgap}{1.1}{.2}{}{fill=small}{}
\addToBin{8*\mgap}{1.3}{.2}{}{fill=small}{}

\node[below=of b9, yshift=-0.05cm]{$B_9$};

 \end{scope}

\end{tikzpicture}

  \caption{An exemplary packings with big items in dark gray, medium items in light gray, and
    small items in white.  The first three examples are consistent with the invariants and the last one is not.
  }
  \label{fig:examples_static}
\end{figure}
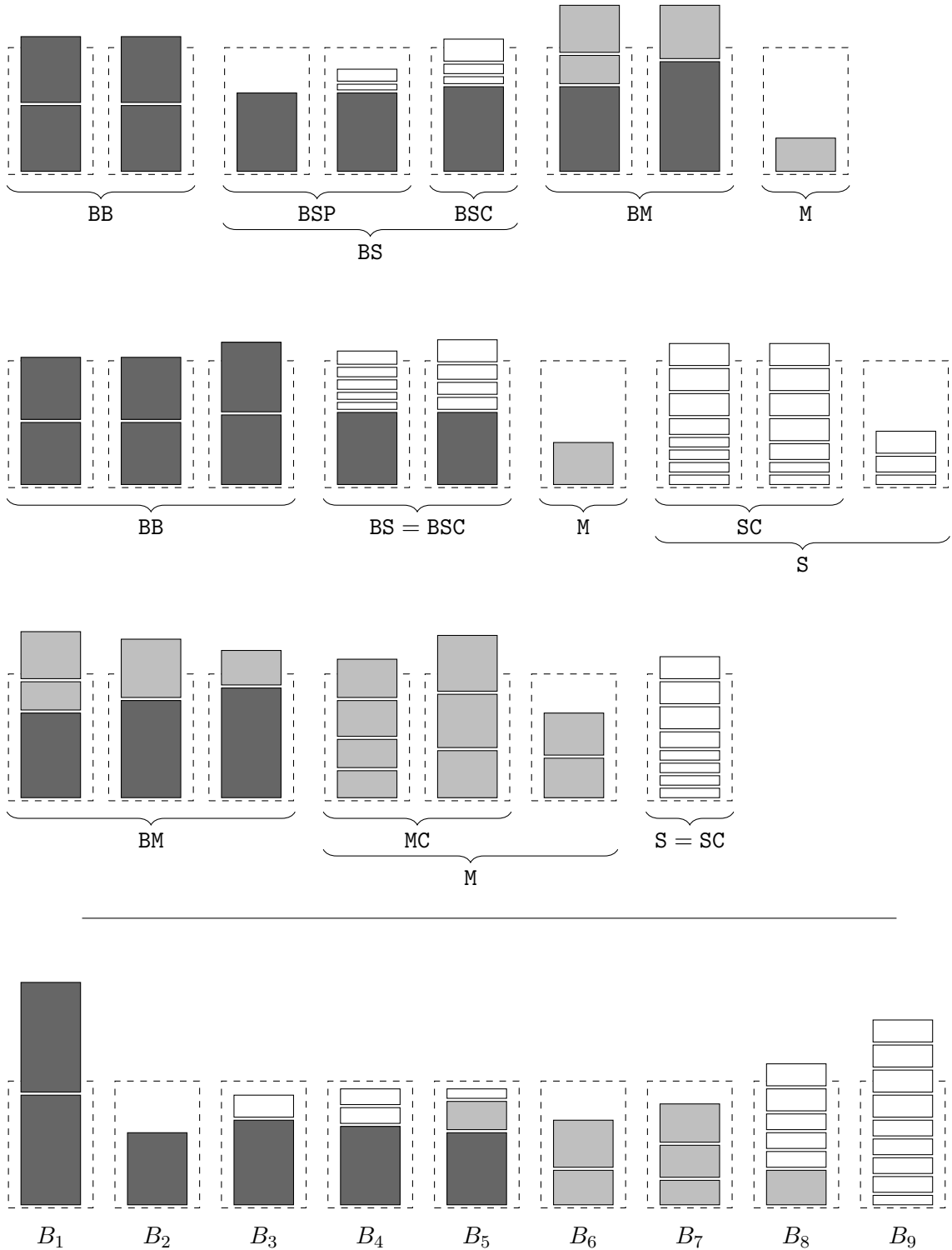  In the following, we briefly discuss several examples to build intuition for the invariants.
Figure \ref{fig:examples_static} shows four exemplary packings.
The first three are consistent with the invariants and the last one is not, and we first discuss the former three and then the latter.

As indicated in the captions, only the allowed bin types
(invariant \ref{Invariant:Binstructure}) are present in examples
1, 2, 3; and we have $|\BBbins| + 1 = |\BSbins|$, $|\BBbins| =
|\BSbins| + 1$ and $|\BBbins| = |\BSbins|$ in examples 1, 2, and
3, respectively, yielding invariant \ref{Invariant:BB=BS} for the
three first examples. Concerning the distribution of big items,
that is, invariant \ref{Invariant:BigItems}, note that the big
items get smaller when going from left to right in each of the
examples which corresponds to the correct distribution. In the
first two examples, there is exactly one bin in $\Mbins$
containing only one medium item, which already gives invariant
\ref{Invariant:SmallandMedium} with respect to $\Mbins$ and it is
easy to see that this invariant also holds for example 3. This
item is relatively big in the second example, however each big
item present in the solution is too small to be combined with the
medium item to form a covered bin. In the first example, the big
items are bigger, but the medium item is smaller, and the same
holds. Hence, invariant \ref{Invariant:MediumBins} holds, and this
is also the case for example 3, where each big item is contained
in $\BMbins$. Note that in the third example $\Mbins$ contains
multiple bins, which may only happen if
$\BSbins=\BBbins=\emptyset$, due to invariant
\ref{Invariant:MediumBins}. In examples 2 and 3, we have
$\BSwaitbins = \emptyset$ and hence invariants
\ref{Invariant:BSPonlyonewithsmall} and \ref{Invariant:SmallBins}
trivially hold. In example 1, on the other hand, there is exactly
one bin in $\BSwaitbins$ containing small items, and this bin also
contains a big item of maximal size among the ones in
$\BSwaitbins$, yielding \ref{Invariant:BSPonlyonewithsmall} for
this example, and we have $\Sbins=\emptyset$, yielding
\ref{Invariant:SmallBins}. Lastly, in each of the three examples
there is at most one bin in $\Sbins$ which is not covered, and
hence invariant \ref{Invariant:SmallandMedium} also holds with
respect to $\Sbins$.

The last example, on the other hand, is carefully designed such
that all invariants are infringed or meaningless: The two bins
$B_5$ and $B_8$ contain both small and medium items; and bin $B_9$
is overpacked (covered but not barely covered). Therefore
invariant \ref{Invariant:Binstructure} is infringed. Furthermore,
there are three bins containing one big and no medium items, but
only one bin with two big items. This is in conflict to invariant
\ref{Invariant:BB=BS}. The two biggest big items are combined in
bin $B_1$, violating invariant \ref{Invariant:BigItems}. Moreover,
the bins $B_6$ and $B_7$ both are not covered, and exclusively
contain medium items, which could be combined with several big
items to form barely covered bins. Hence, both
\ref{Invariant:SmallandMedium} and \ref{Invariant:MediumBins} are
infringed. There are two bins in $\BSwaitbins$ containing small
items, infringing invariant \ref{Invariant:BSPonlyonewithsmall};
and additionally $\Sbins\neq \emptyset$, violating invariant
\ref{Invariant:SmallBins}.

\paragraph{Insertion Procedures.}

We start with the definition of two simple auxiliary procedures used in the following:
\begin{itemize}
\item $\greedypush(i, \mathcal{B})$ is given an item $i$ and a set
of bins $\mathcal{B}$. If all the bins contained in $\mathcal{B}$
are covered, it creates a new bin containing item $i$, and
otherwise it inserts $i$ into the most loaded bin that is not
covered. \item $\greedypull(B,\mathcal{B})$ is given a bin $B$ and
a set of bins $\mathcal{B}$. It successively removes a largest
non-big item from a least loaded bin from
$\sett{B'\in\mathcal{B}}{\fsmall(B')\cup\fmedium(B')\neq\emptyset}$
and inserts it into $B$. This is repeated until either $B$ is
covered or $\mathcal{B}$ does not contain non-big items.
\end{itemize}

Consider one application of $\greedypull$ such that $B$ already
has a big item. The total size of moved items is smaller than $1$.

Both procedures are used to insert and repack non-big items. Note
that calling $\greedypush$ for a small item and bin set
$\BSwaitbins$ or $\Sbins$, or a medium item and bin set $\Mbins$,
the last two invariants \ref{Invariant:BSPonlyonewithsmall} and
\ref{Invariant:SmallandMedium} are maintained. For $\BSwaitbins$,
the most loaded bin always contains the largest big item, and if
there is at least one small item, such a bin is unique. It could
happen that as a result of inserting a small item into this bin of
$\BSwaitbins$ the bin is covered and moves to $\BSfullbins$.

For each insertion procedure, we will argue that the invariants
are maintained and focus on the critical ones, that is, in each
context the invariants, that are not discussed explicitly,
trivially hold. For example, we do not discuss
\ref{Invariant:Binstructure} in the following, because it will
always be easy to see that it is maintained.

\paragraph{Insertion of Small Items.}

If the arriving item $i^*$ is small, we call
$\greedypush(i^*,\BSwaitbins)$, if $\BSwaitbins \neq \emptyset$,
and $\greedypush(i^*,\Sbins)$ otherwise. Insertion into a bin of
$\BSwaitbins$ (the most loaded one) may lead to a covered bin, in
which case the bin becomes a bin of $\BSfullbins$ (but remains in
$\BSbins$. It is easy to verify, that all invariants, and
\ref{Invariant:SmallBins}, \ref{Invariant:BSPonlyonewithsmall} and
\ref{Invariant:SmallandMedium} in particular, are maintained by
this. Furthermore, there is no migration in this case. The
insertion of a medium or big item, however, is more complicated.

\paragraph{Insertion of Big Items.}

In the case that a big item $i^*$ arrives, we have to be careful
where we place it exactly, because, on the one hand, the
distributions of big and medium items, that is, invariants
\ref{Invariant:BigItems} and \ref{Invariant:MediumBins}, have to
be maintained, and, on the other hand, we have to balance out
$\BSbins$ and $\BBbins$ (\ref{Invariant:BB=BS}). We consider
placing the item in $\BMbins$, $\BSbins$ or $\BBbins$ in this
order, \ie we first try to insert $i^{*}$ into $\BMbins$, then
into $\BSbins$ and finally into $\BBbins$.
Figure~\ref{fig:static_insert_big} illustrates this process.

\begin{itemize}

\item{Insertion into ${\BMbins}$.}

We insert $i^*$ into $\BMbins$, if either $s(i^*) + s(\Mbins) \geq
1$ or $s(i^*) > s_{\min}(\fbig(\BMbins))$. Note that the first
condition implies $s(i^*) \geq
s_{\max}(\fbig(\BBbins\cup\BSbins))$, because
of~\ref{Invariant:MediumBins}, and therefore the insertion of
$i^*$ into $\BMbins$ maintains \ref{Invariant:BigItems} in both
situations. The second condition implies $\BMbins \neq \emptyset$,
because we set $s_{\min}(\emptyset) = +\infty$. In either of these
cases, we create a new bin $B^* = \set{i^*}$ and call
$\greedypull(B^*,\Mbins)$, thereby ensuring that
\ref{Invariant:MediumBins} is maintained if the new bin is
covered. If the first condition did hold, $B^*$ is covered
afterwards and we do nothing else. Otherwise, there is a bin
$B\in\BMbins$ containing a big item $i$ with $s(i) =
s_{\min}(\fbig(\BMbins)) < s(i^*)$, and we have $\Mbins =
\emptyset$. We remove $i$ from $B$, yielding $\Mbins = \set{B}$,
and call $\greedypull(B^*,\Mbins)$ a second time. Afterwards,
$B^*$ is covered, because $s(i^*) > s(i)$. Furthermore, $s(i) +
s(\Mbins) < 1$, because $B$ was barely covered before and the
biggest medium item was removed from $B$ due to the second call of
$\greedypull$. The item~$i$ is reinserted using a recursive call
to the procedure of inserting a big item. However, item $i$ will
not be considered for insertion into $\BMbins$, because neither
the first nor second condition holds for this item, and the other
insertion options have no recursive calls for insertion  into
$\BMbins$. It is easy to verify that the distribution of medium
items in $\Mbins$ (\ref{Invariant:SmallandMedium}) is maintained.

\item{Insertion into ${\BSbins}$.}

This step is possible for item $i^*$ that satisfies $s(i^*) +
s(\Mbins) < 1$ and $s(i^*) \leq s_{\min}(\fbig(\BMbins))$. Thus,
$\BMbins$ will have the largest big items as required in Invariant
\ref{Invariant:BigItems} after the insertion is performed. In this
case there is no recursive call for inserting a big item.

We insert $i^*$ into $\BSbins$, if either $s(i^*) >
s_{\min}(\fbig(\BSbins))$ or the following two conditions hold:
$s(i^*) \geq s_{\max}(\fbig(\BBbins))$ and $ |\BSbins| \leq
|\BBbins|$. Note that $s(i^*) \geq s_{\max}(\fbig(\BBbins))$
trivially holds, if $\BBbins = \emptyset$. Inserting $i^*$ into
$\BSbins$ under these conditions already ensures the correct
distribution of big items (\ref{Invariant:BigItems}) with respect
to $\BSbins$ and $\BBbins$, but we still have to be careful
concerning the distribution within the two subsets of $\BSbins$.
The procedure is divided into three simple steps. As a first step,
we create a new bin $B^* = \set{i^*}$ and call
$\greedypull(B^*,\Sbins)$. No matter whether $B^*$ is now covered
or not, Invariant \ref{Invariant:SmallBins} is satisfied as either
$B^*$ is covered and therefore $\BSwaitbins = \emptyset$ both
before and after the call, or $B^*$ is not covered but now $\Sbins
= \emptyset$ (it is possible that both will hold). Note that all
properties of the invariants are satisfied, if $B^*$ is already
covered. In particular, Invariant \ref{Invariant:BigItems} holds
within $\BSbins$ because $\BSwaitbins=\emptyset$. In the remainder
of the second step of the insertion into ${\BSbins}$ algorithm we
deal with the case that $B^*$ is not covered.

Let $\Xbins$ denote the set of bins $B\in\BSbins$ that include
small items as well as a big item $i$ with $s(i)<s(i^*)$. Recall
that any bin of $\BSfullbins$ has at least one small item, while
at most one bin of $\BSwaitbins$ has small items.

First, assume that $\Xbins=\emptyset$ but $B^*$ is not covered.
There are two cases. In the first case, at least one item of
$\Sbins$ was moved. In this case before we started dealing with
$i^*$, the set $\BSwaitbins$ was empty, and now $B^*$ is the
unique bin of $\BSwaitbins$, and its big item is not larger than
those of $\BSfullbins$ (if the last set is not empty) so the
invariants \ref{Invariant:BigItems} and
\ref{Invariant:BSPonlyonewithsmall} are maintained. In the second
case, $\Sbins$ was empty, and $B^*$ is now a bin of $\BSwaitbins$
with only a big item. Since $\Xbins = \emptyset$, adding $B^*$ to
$\BSwaitbins$ maintains the invariants \ref{Invariant:BigItems}
and \ref{Invariant:BSPonlyonewithsmall}. Thus, it is left to deal
with the case $\Xbins \neq \emptyset$. In the remainder of the
second step of the insertion into ${\BSbins}$ algorithm we deal
with the case that $\Xbins$ is not empty.

As $B^{*}$ is not yet covered, we now have $\Sbins = \emptyset$,
and this might have been the case before the call of
$\greedypull$, in particular if we had $\BSwaitbins\neq\emptyset$.
Due to the existence of a big item that is smaller than $i^*$ in
$\BSbins$ (such items exist in all bins of $\Xbins$), we have to
be careful in order to maintain the correct distribution of big
and small items inside of $\BSbins$ (\ref{Invariant:BigItems} and
\ref{Invariant:BSPonlyonewithsmall}).

In the second step, we construct a set of bins
$\tilde{\mathtt{B}}\subseteq\BSbins$ from which small items are
removed in order to cover $B^{*}$. If $\BSwaitbins \cap \Xbins
\neq \emptyset$, this set has exactly one bin (containing small
items) by Invariant \ref{Invariant:BSPonlyonewithsmall}. If such a
bin exists, we denote it by $B_{1}$. If $\BSfullbins \cap \Xbins
\neq \emptyset$, the set $\BSfullbins$ includes a bin that
contains a big item $i'$ with $s_{\min}(\fbig(\BSfullbins)) =
s(i') < s(i^*)$ and we denote one such bin (with a big item of
minimum size in $\BSfullbins$) by $B_2$. As $\Xbins \neq
\emptyset$, at least one of the bins $B_{1}$ or $B_{2}$ must
exist, but it can also be the case that both exist. Let
$\tilde{\mathtt{B}}$ be the set of cardinality $1$ or $2$, which
contains these bins. The next operation of the second step is to
call $\greedypull(B^*,\tilde{\mathtt{B}})$. It is easy to see that
no matter whether $B^*$ is covered or not after this operation,
the invariants \ref{Invariant:BigItems} and
\ref{Invariant:BSPonlyonewithsmall} hold. Specifically, if $B_2$
does not exist, $B^*$ is not necessarily covered, but
\ref{Invariant:BigItems} and \ref{Invariant:BSPonlyonewithsmall}
hold as all big items of $\BSfullbins$ are not smaller than $i^*$
(as every such bin has at least one small item). If $B_2$ exists,
then $B^*$ keeps receiving items coming first from $B_1$ and then
possibly also from $B_2$, until it is covered. As the total size
of small items of $B_2$ is sufficient for covering $B^*$ since the
big item of $B_2$ is smaller than $i^*$, $B^*$ will be covered, so
all big items of $\BSwaitbins$ are not larger than $i^*$.

Lastly, we describe the third step, which is performed for all
cases above, after $i^*$ has been inserted. The insertion of $i^*$
might have violated \ref{Invariant:BB=BS}, that is, we now have
$|\BSbins| = |\BBbins| + 2$. In this case, we perform the last
step, namely, we select two bins $B_3,B_4\in\BSbins$ with minimal
big items, merge the big items into a $\BBbins$ bin and remove and
reinsert all small items from $B_{3}$ and $B_{4}$, using insertion
of small items. This yields, $|\BSbins| = |\BBbins| - 1$ and
\ref{Invariant:BB=BS} holds.

\item{Insertion into ${\BBbins}$.}

If $i^*$ was not inserted in any of the last steps, it is inserted
into $\BBbins$. In this case, we know from the conditions above
that $ s(i^*) \leq s_{\min}(\fbig(\BSbins\cup\BMbins))$, and
additionally that $ s(i^*) \geq s_{\max}(\fbig(\BBbins))$ implies
$|\BSbins| = |\BBbins| + 1$. We consider two cases.

If $|\BSbins| = |\BBbins| + 1$ (and hence $\BSbins\neq\emptyset$), we select a bin $B\in\BSbins$ with a big item of minimal size.
We insert $i^*$ into $B$ to obtain a $\BBbins$ bin and remove and reinsert all small items from $B$.
This yields $|\BSbins| = |\BBbins| - 1$ and \ref{Invariant:BB=BS} holds.

If $|\BSbins| < |\BBbins| + 1$, we have $s(i^*) < s_{\max}(\fbig(\BBbins))$ (and hence $\BBbins\neq\emptyset$).
In this case, we select a bin $B\in\BBbins$ with a big item $i$ of maximal size, insert $i^*$ into $B$, and remove and reinsert $i$.
Because of its size and invariant \ref{Invariant:MediumBins}, the item $i$ will be inserted into $\BSbins$.
Note that in both cases invariant \ref{Invariant:BigItems} is maintained.

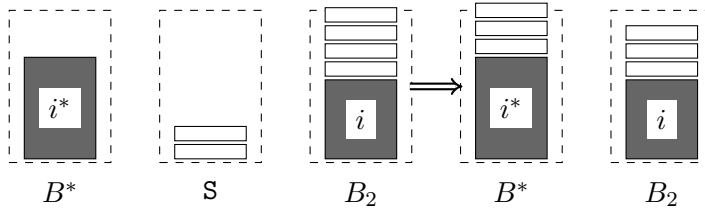
\begin{figure}[h]
  \centering
\begin{enumerate}[label=(\alph*)]
\item Insert into $\BMbins$: open a new bin $B^{*}$ for $i^{*}$;
pull $\Mbins$ into $B^{*}$; pull from bin $B\in
  \BMbins$ containing the smallest big item $i$; remove and reinsert
  $i$.

\begin{tikzpicture}
      \drawBin{0}{}{b0}
      \addToBin{0}{0}{.6}{\tikz\node[fill=white]{$i^{*}$};}{fill=big}{}
      \node[below = of b0]{$B^{*}$};

      \drawBin{2}{}{b1}
      \foreach \i in {0}{
        \addToBin{2}{\i*.2}{0.2}{}{fill=medium}{}
      }
      \node[below = of b1]{$\Mbins$};

      \drawBin{4}{}{b2}
      \addToBin{4}{0}{.55}{\tikz\node[fill=white]{$i$};}{fill=big}{}
      \foreach \i in {0,1}{
        \addToBin{4}{.55+\i*.25}{0.25}{}{fill=medium}{}
      }
      \node[below = of b2]{$B$};

      \drawBin{6}{}{b3}
      \addToBin{6}{0}{.6}{\tikz\node[fill=white]{$i^{*}$};}{fill=big}{}
      \addToBin{6}{.6}{0.2}{}{fill=medium}{}
      \addToBin{6}{.8}{0.25}{}{fill=medium}{}
      \node[below = of b3]{$B^{*}$};

      \drawBin{8}{}{b4}
      \addToBin{8}{0}{0.25}{}{fill=medium}{}
      \node[below = of b4]{$B$};

      \node[right = of b4]{\texttt{insert}($i$)};

      \draw[-implies, double equal sign distance, thick] \rightBin{b2} -- \leftBin{b3};
    \end{tikzpicture}

  \item Insert into $\BSbins$: open a new bin $B^{*}$ for $i^{*}$; pull $\Sbins$ into $B^{*}$; pull from bin $B\in
  \BSbins$ containing the smallest big item $i$.

\begin{tikzpicture}
      \drawBin{0}{}{b0}
      \addToBin{0}{0}{.7}{\tikz\node[fill=white]{$i^{*}$};}{fill=big}{}
      \node[below = of b0]{$B^{*}$};

      \drawBin{2}{}{b1}
      \foreach \i in {0,1}{
        \addToBin{2}{\i*.12}{.12}{}{fill=small}{}
      }
      \node[below = of b1]{$\Sbins$};

      \drawBin{4}{}{b2}
      \addToBin{4}{0}{.55}{\tikz\node[fill=white]{$i$};}{fill=big}{}
      \foreach \i in {.55,.67,.79,.91}{
        \addToBin{4}{\i}{.12}{}{fill=small}{}
      }
      \node[below = of b2]{$B_{2}$};

      \drawBin{6}{}{b3}
      \addToBin{6}{0}{.7}{\tikz\node[fill=white]{$i^{*}$};}{fill=big}{}
      \foreach \i in {0,1,2}{
        \addToBin{6}{.7+\i*.12}{.12}{}{fill=small}{}
      }

      \node[below = of b3]{$B^{*}$};

      \drawBin{8}{}{b4}
      \addToBin{8}{0}{.55}{\tikz\node[fill=white]{$i$};}{fill=big}{}
      \foreach \i in {.55,.67,.79}{
        \addToBin{8}{\i}{.12}{}{fill=small}{}
      }
      \node[below = of b4]{$B_{2}$};

      \draw[-implies, double equal sign distance, thick] \rightBin{b2} -- \leftBin{b3};
    \end{tikzpicture}

\end{enumerate}

  \caption{Insertion of a big item $i^*$. Big items are drawn in dark gray, medium items in light gray, and
    small items in white.}
  \label{fig:static_insert_big}
\end{figure}
 
\end{itemize}

\begin{lemma}\label{lem:static_migration_big}
The overall size of items migrated due to the insertion of a big item $i^*$ is upper bounded by $11$.
\end{lemma}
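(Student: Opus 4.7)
The plan is to consider the three mutually exclusive insertion paths for $i^{*}$~--~into $\BMbins$, $\BSbins$, or $\BBbins$~--~bound the migration along each, and then account for the short chain of recursive big-item reinsertions that can arise when an existing big item is ejected. The key reusable observation is that every call of $\greedypull(B^{*},\cdot)$ that terminates with $B^{*}$ covered migrates strictly less than $1-s(i^{*})+s_{\max}(\text{pulled class})$, which is below $1$ since $s(i^{*})>1/2$ and the pulled items (medium or small) have size at most $1/2$.

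I would first handle the $\BSbins$-path. The two $\greedypull$ calls on $\Sbins$ and on $\tilde{\mathtt{B}}$ jointly have the single task of covering $B^{*}$, so together they migrate less than $1$. The optional third step~--~merging $B_{3},B_{4}\in\BSbins$ with the smallest big items into a new $\BBbins$-bin and reinserting their small items~--~moves one big item (at most $1$) plus the small items of $B_{3}$ and $B_{4}$, each of which sums to at most $1/2+\eps$ because a $\BSfullbins$-bin is barely covered ($s(B)<1+\eps$) and contains a big item of size at least $1/2$. Hence the $\BSbins$-path migrates less than $4$. The $\BBbins$-path is either case~$1$, which reinserts the small items of a single $\BSbins$-bin (cost below $1$), or case~$2$, which ejects a big item (at most $1$) and recursively reinserts it into $\BSbins$ by invariant~\ref{Invariant:MediumBins}, for a total below~$5$.

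The $\BMbins$-path carries the main cost. The two $\greedypull$ calls on $\Mbins$ only need jointly to push $B^{*}$ past load $1$ starting from $s(i^{*})>1/2$, so they migrate less than $1$; ejecting the minimum big item $i\in\fbig(\BMbins)$ adds $s(i)\le 1$; and the recursive reinsertion of $i$~--~explicitly barred from re-entering $\BMbins$~--~is routed into $\BSbins$ or $\BBbins$ and therefore costs less than $5$ by the preceding bounds. The $\BMbins$-path thus contributes less than $7$, and the claimed upper bound of $11$ then follows with comfortable slack absorbing the overshoot terms of $\greedypull$ calls and any big items that are physically moved when inserted into existing bins.

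The main obstacle I expect is the bookkeeping across the recursion: one has to verify that the chain of evictions has depth at most three ($i^{*}\to i\to i'$) and that no item of either smaller class is counted twice in the cascade. Both follow from the structural restriction that recursive calls are barred from $\BMbins$ and that $\BBbins$ case~$2$ can cascade only into $\BSbins$ (which itself performs no further recursive big-item insertion), but formalising this termination and accounting argument rigorously is the most error-prone part of the proof.
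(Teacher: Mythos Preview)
Your argument is correct and follows essentially the same strategy as the paper: bound the depth of the recursive big-item reinsertions (at most $i^{*}\to i\to i'$, terminating in $\BSbins$) and bound the migration incurred at each step. The paper presents the bookkeeping slightly differently—separating the ``at most two recursively moved big items'' (size $\le 2$) from the ``direct'' cost of each insertion type ($\le 3$ for $\BMbins$, $\le 5$ for $\BSbins$, $\le 1$ for $\BBbins$), then summing to $11$—whereas you trace each path explicitly. Your estimates are sharper in two places: you observe that the $\greedypull$ calls filling $B^{*}$ jointly move less than $1$ (since $s(i^{*})>1/2$ and the pulled items are medium or small), and you bound the small items in a barely-covered $\BSbins$ bin by $1/2+\eps$ rather than $1$. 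This yields a constant around $7$, which of course also establishes the stated bound of $11$.
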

\begin{proof}
  First, note that an insertion into $\BSbins$ can not trigger the reinsertion
  of a big item. The insertion into $\BBbins$ can only trigger the reinsertion
  of a single big item into $\BSbins$ and the insertion into $\BMbins$ can only
  trigger the reinsertion into $\BBbins$ or $\BSbins$. Hence, each insertion of
  a big item can trigger at most two other insertions of big items in total and thus only
  move a total size of $2$ this way.

  Concerning the direct reassignments:
  \begin{itemize}
  \item If $i^*$ is inserted into $\BMbins$, this may directly cause
    reassignments of items with size at most $3$: there may be two calls of
    $\greedypull$ in each of which items with overall size at most $1$ may be
    moved and additionally a big item with size at most $1$ may be moved.
  \item If $i^*$ is inserted into $\BSbins$, the corresponding bound is $5$:
    there may be two calls of $\greedypull$ in both of which items with overall
    size at most $1$ may be moved and additionally a big item with size at most
    $1$ and small items with overall size at most $2$ may be moved.
\item If $i^*$ is inserted into $\BBbins$, the corresponding bound is $1$:
  either small items with overall size at most $1$ or a big item are moved.
\end{itemize}
Hence, the total size of items migrated is at most $2+3+5+1=11$.

\end{proof}

\paragraph{Insertion of Medium Items.}

If a medium item $i^*$ arrives, $\greedypush(i^*,\Mbins)$ is
called. Afterwards, the invariant \ref{Invariant:MediumBins} may
be infringed and if this happens, we have $\BSbins\cup\BBbins \neq
\emptyset$ and $s(\Mbins) \geq 1 - s_{\max}(\BSbins\cup\BBbins)$,
and we continue as follows. We will now describe how to pack a
barely covered bin using the items from $\Mbins$ and a largest big
item from $\BSbins\cup\BBbins$ to maintain
\ref{Invariant:MediumBins} again.

If $\BSbins=\emptyset$, \ref{Invariant:BB=BS} implies that
$\BBbins$ contains a single bin $B$ including two items $i$ and
$i'$ with $s_{\max}(\BSbins\cup \BBbins)=s(i)\geq s(i')$. We
remove~$i'$ from $B$, and call $\greedypull(B,\Mbins)$ to create a
$\BMbins$ bin. Afterwards, $s(\Mbins)$ and
$s_{\max}(\BSbins\cup\BBbins)$ are at most as big as they were
before $i^*$ arrived as the first item we pulled from $\Mbins$ is
at least as big as $i^*$, and therefore \ref{Invariant:MediumBins}
holds. Furthermore $\BSbins = \BBbins = \emptyset$ and
\ref{Invariant:BB=BS} still holds. Lastly, we reinsert the big
item $i'$.

If, on the other hand, $\BSbins\neq\emptyset$, the corresponding
big item $i$ with $s_{\max}(\BSbins\cup \BBbins)=s(i)$ is
contained in a bin $B\in\BSbins$, because of
\ref{Invariant:BigItems}. In this case, we remove the small items
from $B$ and call $\greedypull(B,\Mbins)$. Afterwards
\ref{Invariant:MediumBins} holds, but \ref{Invariant:BB=BS} may be
infringed due to the removal of a bin from $\BSbins$, \ie
$|\BBbins| = |\BSbins| + 2$. In this case, we remove the two
biggest items $i_1$ and $i_2$ from the bins $B_1,B_2\in\BBbins$
and if $B_1\neq B_2$ merge the two bins. This yields $|\BBbins| =
|\BSbins| + 1$ and \ref{Invariant:BB=BS} holds. Afterwards, we
reinsert the two items $i_1$ and $i_2$, which both will be
inserted in $\BSbins$ due to their sizes. No matter whether we had
to rebalance $|\BBbins|$ and $|\BSbins|$ or not, we reinsert the
removed small items from $B$ as a last step.

Figure~\ref{fig:static_insert_medium} contains an illustration of this process.

\begin{figure}[h]
  \centering

  \begin{tikzpicture}
    \drawBin{0}{}{b0}
    \addToBin{0}{0}{.2}{\tiny $i^{*}$}{fill=medium}{}
    \addToBin{0}{.2}{.17}{}{fill=medium}{}
    \addToBin{0}{.37}{.17}{}{fill=medium}{}
    \addToBin{0}{.54}{.17}{}{fill=medium}{}
    \node[below = of b0]{$\Mbins$};

    \drawBin{2}{}{b1}
    \addToBin{2}{0}{.6}{\tikz\node[fill=white]{$i$};}{fill=big}{}
      \foreach \i in {0,1,2,3}{
        \addToBin{2}{.6+\i*.12}{0.12}{\tiny $j_{\i}$}{fill=small}{}
      }
      \node[below = of b1]{$B$};

      \drawBin{4}{}{b2}
      \addToBin{4}{0}{.6}{\tikz\node[fill=white]{$i_{1}$};}{fill=big}{}
      \addToBin{4}{.6}{.55}{\tikz\node[fill=white]{$i'_{1}$};}{fill=big}{}
      \node[below = of b2]{$B_{1}$};

      \drawBin{6}{}{b3}
      \addToBin{6}{0}{.6}{\tikz\node[fill=white]{$i_{2}$};}{fill=big}{}
      \addToBin{6}{.6}{.55}{\tikz\node[fill=white]{$i'_{2}$};}{fill=big}{}
      \node[below = of b3]{$B_{2}$};
      \begin{scope}[yshift=-4cm]
      \drawBin{0}{}{b1p}
      \addToBin{0}{0}{.6}{\tikz\node[fill=white]{$i$};}{fill=big}{}
      \addToBin{0}{.6}{.2}{\tiny $i^{*}$}{fill=medium}{}
      \foreach \i in {0,1}{
        \addToBin{0}{.8+\i*.17}{0.17}{}{fill=medium}{}
      }
      \node[below = of b1p]{$B$};

      \drawBin{2}{}{b2p}
      \addToBin{2}{0}{0.17}{}{fill=medium}{}
      \node[below = of b2p]{$\Mbins$};

      \drawBin{4}{}{b3p}
      \addToBin{4}{0}{.55}{\tikz\node[fill=white]{$i'_{1}$};}{fill=big}{}
      \addToBin{4}{.55}{.55}{\tikz\node[fill=white]{$i'_{2}$};}{fill=big}{}

      \node[right = of b3p]{\texttt{insert}$(i_{1},i_{2},j_{0},\ldots,j_{3})$};
      \draw[-implies, double equal sign distance, thick, shorten >=1.2cm,shorten <=1.2cm] ($ (b1) !0.5! (b2)$) --  ($ (b2p) !0.5! (b3p)$);
      \end{scope}

  \end{tikzpicture}
  \caption{Insertion of a medium item $i^*$: push $i^{*}$ into $\Mbins$; remove
    small items $j_{0},\ldots,j_{3}$ from bin $B\in \BSbins$ containing the largest big i; pull from
    $\Mbins$ into $B$; merge $\BBbins$ bins $B_{1}$ and $B_{2}$ containing the
    largest big items $i_{1}, i_{2}$; reinsert remaining items $i_{1},i_{2},j_{0},\ldots,j_{3}$.\\
    Big items are drawn in dark gray, medium items in light gray, and
    small items in white.}
  \label{fig:static_insert_medium}
\end{figure} 
\begin{lemma}\label{lem:static_migration_medium}
The overall size of items migrated due to the insertion of a medium item $i^*$ is upper bounded by $27$.
\end{lemma}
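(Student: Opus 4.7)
The plan is to account for the migration contributions of each sub-step of the medium-item insertion procedure and to apply Lemma~\ref{lem:static_migration_big} to bound the recursively triggered big-item insertions. The initial call $\greedypush(i^*,\Mbins)$ only places the newly arriving item $i^*$ and does not move any existing item, so it contributes nothing to the migration. If invariant~\ref{Invariant:MediumBins} still holds afterwards, the bound is trivial; otherwise the algorithm enters one of two subcases, which I would bound separately.

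For the case $\BSbins=\emptyset$, invariant~\ref{Invariant:BB=BS} forces $\BBbins$ to consist of a single bin $B=\{i,i'\}$. The migrating actions are: removing $i'$ from $B$ (size at most $1$), the call $\greedypull(B,\Mbins)$ (items of total size strictly less than $1$), and the recursive reinsertion of $i'$ as a big item, whose cost is at most $11$ by Lemma~\ref{lem:static_migration_big}. The total is at most $1+1+11=13$, well within the claimed bound.

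For the case $\BSbins\neq\emptyset$, the item $i$ of size $s_{\max}(\BSbins\cup\BBbins)$ sits in some $B\in\BSbins$, and I would charge the following contributions: the removal of the small items of $B$ (total size at most $1$, since $B$ contains a big item of size exceeding $0.5$), the call $\greedypull(B,\Mbins)$ (total size at most $1$), the rebalancing step triggered by $|\BBbins|=|\BSbins|+2$ which removes $i_1,i_2$ from $B_1,B_2\in\BBbins$ (total size at most $2$) and, if $B_1\neq B_2$, merges the two remaining big items into a new $\BBbins$-bin by moving one of them (size at most $1$), and finally the recursive reinsertions of $i_1$ and $i_2$ as big items, each contributing at most $11$ via Lemma~\ref{lem:static_migration_big}. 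The removed small items of $B$ are reinserted through $\greedypush$, which does not relocate any other already packed items, so it adds no cost beyond the single move per small item already counted. Summing gives $1+1+1+2+11+11=27$.

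The main subtlety I expect is making sure each item movement is counted exactly once. A removed-and-reinserted item (a small item of $B$, or $i_1$, or $i_2$) is moved only once in total, from its original bin to its new bin, so its size is charged once; and the bound of Lemma~\ref{lem:static_migration_big} applied to the recursive insertions of $i_1$ and $i_2$ covers the cascaded direct reassignments and further cascaded big items, but does not re-count $i_1$ or $i_2$ themselves, which are already accounted for in the removal step of the rebalancing. Verifying this bookkeeping carefully, together with the size bounds on the small items of $B$ and on the pull from $\Mbins$ (both rely on $B$ containing a big item of size above $0.5$), is the only non-routine part.
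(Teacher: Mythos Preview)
Your proposal is correct and follows essentially the same approach as the paper's proof: you split into the two subcases $\BSbins=\emptyset$ and $\BSbins\neq\emptyset$, bound each direct reassignment step separately, and invoke Lemma~\ref{lem:static_migration_big} for the recursive big-item insertions, arriving at $13$ and $27$ respectively. Your breakdown $1+1+1+2+11+11$ in the second case is just a regrouping of the paper's $1+1+3+2\cdot 11$ (where the paper's ``three big items'' are your merged item plus $i_1,i_2$), and your explicit bookkeeping remark about counting each removed-and-reinserted item exactly once is a welcome clarification that the paper leaves implicit.
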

\begin{proof}
In the first case, one big item is reinserted, resulting in the migration of items with size at most $1+11=12$ (see Lemma \ref{lem:static_migration_big}).
Furthermore, there is one call of $\greedypull$ moving items with overall size at most $1$.
This yields a size of at most $13$.

In the second case, medium items with size at most $1$ are moved
using $\greedypull$ and small items with overall size at most $1$
are moved in the last step. Furthermore, three big items may be
moved and two of them may be reinserted, causing total migration
of size at most $1+1+3+2\cdot 11=27$. Hence, the overall size of
migrated items is upper bounded by~$27$.
\end{proof}

\paragraph{Analysis.}

The migration bound stated in Theorem~\ref{thm:wc_and_static_alg}
or more precisely $\frac{27}{\eps}$ is already implied by
Lemma~\ref{lem:static_migration_big} and
Lemma~\ref{lem:static_migration_medium}, as a medium item has size
above $\eps$. Furthermore, it is easy to see that:
\begin{remark}
  \label{rem:static:running_time}
The presented algorithm for static online bin covering has a polynomial running time.
\end{remark}
\noindent Hence, the only thing left to show is the stated asymptotic competitive ratio:
\begin{lemma}
  \label{lem:static:ratio}
The presented algorithm has an asymptotic competitive ratio of
$1.5 + \eps$ with an additive constant of $3$.
\end{lemma}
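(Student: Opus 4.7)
I plan to prove the bound by starting from the volume bound $\mathrm{OPT}(I)\leq s(I)$ (each OPT-covered bin has load at least 1) and then carefully upper bounding $s(I)$ using the load bounds per bin type together with the invariants. Concretely I will use: $s(B)\leq 2$ for $B\in\BBbins$; $s(B)\leq 3/2$ for $B\in\BMbins\cup\Mfullbins$ (barely covered with largest item of its smallest class at most $1/2$); $s(B)\leq 1+\eps$ for $B\in\BSfullbins\cup\Sfullbins$ (barely covered with smallest-class item of size $\leq\eps$); and $s(B)<1$ for any uncovered bin. By Invariant \ref{Invariant:SmallandMedium} there is at most one uncovered bin in each of $\Sbins$ and $\Mbins$, contributing at most $2$ in total.

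The centerpiece is the pairing argument already highlighted just after Invariant~\ref{Invariant:BB=BS}. Since $||\BBbins|-|\BSbins||\leq 1$, I pair each bin of $\BBbins$ with a bin of $\BSbins$, leaving at most one unpaired bin (of load $\leq 2$). A pair has load at most $2+(1+\eps) = 3+\eps$ if the $\BSbins$ bin lies in $\BSfullbins$ and at most $2+1=3$ if it lies in $\BSwaitbins$. Averaging gives
\[ s(\BBbins\cup\BSbins)\;\leq\;\Bigl(\tfrac32+\tfrac{\eps}{2}\Bigr)\bigl(|\BBbins|+|\BSbins|\bigr)+2. \]

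I then split into two cases. In Case 1, $\BBbins\cup\BSbins=\emptyset$, there are no big items so $|\BMbins|=0$, and straightforward summation with the load bounds and Invariant~\ref{Invariant:SmallandMedium} yields $s(I)\leq \tfrac32|\Mfullbins|+(1+\eps)|\Sfullbins|+2\leq(\tfrac32+\eps)\mathrm{ALG}(I)+2$. In Case 2, $\BBbins\cup\BSbins\neq\emptyset$, Invariant~\ref{Invariant:MediumBins} forces $s(\Mbins)<1$ and in particular $|\Mfullbins|=0$; adding the $\BMbins$ and $\Sfullbins$ contributions (coefficients $\tfrac32$ and $1+\eps$, both $\leq \tfrac32+\eps$) to the paired bound gives an overall inequality of the shape
\[ s(I)\leq\Bigl(\tfrac32+\tfrac{\eps}{2}\Bigr)(|\BBbins|+|\BSbins|)+\Bigl(\tfrac32+\eps\Bigr)\bigl(|\BMbins|+|\Sfullbins|\bigr)+3. \]
The remaining algebraic step converts $(\tfrac32+\tfrac{\eps}{2})(|\BBbins|+|\BSbins|)$ into $(\tfrac32+\eps)(|\BBbins|+|\BSfullbins|)$ plus a controlled excess, by writing $|\BSbins|=|\BSfullbins|+|\BSwaitbins|$ and invoking Invariants~\ref{Invariant:BB=BS}, \ref{Invariant:SmallBins} and \ref{Invariant:BSPonlyonewithsmall} to bound $|\BSwaitbins|$ against $|\BBbins|$ and to limit the non-big items occurring in $\BSwaitbins$.

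The step I expect to be the most delicate is precisely this last conversion: the bins of $\BSwaitbins$ appear in $|\BSbins|$ but not in $\mathrm{ALG}(I)$, so their load needs to be absorbed either into the $\eps$-slack generated by pairing or into the additive constant. This is exactly where Invariants~\ref{Invariant:BB=BS} and \ref{Invariant:SmallBins} play their role: the former guarantees that every $\BSwaitbins$ bin is paired with a $\BBbins$ bin that does count in $\mathrm{ALG}(I)$ and carries the extra load, while the latter (combined with Invariant \ref{Invariant:MediumBins}) ensures that whenever $\BSwaitbins\neq\emptyset$ the sets $\Sbins$ and $\Mfullbins$ vanish, preventing additional unaccounted volume. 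Putting these pieces together yields $s(I)\leq(\tfrac32+\eps)\mathrm{ALG}(I)+3$, from which the asymptotic competitive ratio follows.
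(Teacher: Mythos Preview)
Your volume-based approach works when $\BSwaitbins=\emptyset$ (and is essentially what the paper does in that case), but it breaks down when $\BSwaitbins$ is large. The crucial difficulty is not just that bins of $\BSwaitbins$ are uncounted in $\Alg(I)$; it is that the inequality $\Opt(I)\le s(I)$ itself is too weak to yield a $1.5+\eps$ ratio in this regime.

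Concretely, take $6k$ big items all of size $1-\delta$ for tiny $\delta>0$ (and nothing else). The algorithm produces $|\BBbins|=|\BSwaitbins|=2k$ and $|\BSfullbins|=|\BMbins|=|\Mfullbins|=|\Sfullbins|=0$, so $\Alg(I)=2k$. Here $s(I)=6k(1-\delta)$, hence your chain of inequalities would need $6k(1-\delta)\le(1.5+\eps)\cdot 2k+3$, which is false for large $k$. Each $(\BBbins,\BSwaitbins)$ pair carries load close to $3$ but contributes only $1$ to $\Alg$, so no amount of $\eps$-slack or Invariants~\ref{Invariant:BB=BS}, \ref{Invariant:SmallBins}, \ref{Invariant:BSPonlyonewithsmall} can absorb the linear term $(3/2+\eps/2)|\BSwaitbins|$; those invariants only tell you $|\BSwaitbins|\le|\BBbins|+1$ and that $\Sbins=\Mfullbins=\emptyset$, which is exactly the bad situation rather than a remedy for it. Of course $\Opt(I)=3k=(3/2)\Alg(I)$ does hold here, but you cannot see it through volume.

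This is why the paper, in the case $\BSwaitbins\neq\emptyset$ and $\BBbins\neq\emptyset$, abandons the bound $\Opt(I)\le s(I)$ altogether. Instead it passes to a modified instance $I^*$ in which every big item has size exactly $\xi=s_{\max}(\BSwaitbins)$ and each $\BSfullbins\cup\BMbins$ bin contributes one additional ``special'' item of size $0.5$; calling these items $X\cup Y$, one then bounds $\Opt(I^*)$ by a \emph{counting} argument ($2k_2+k_1\le|X\cup Y|$) combined with a bound on the residual mass $s(Z)$. That counting step is what captures the fact that each covered bin needs at least two big-ish items, which the raw volume bound cannot express. Your proposal is missing this ingredient.
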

\begin{proof}
First, we consider the case $\BSwaitbins = \emptyset$ (see Figure
\ref{fig:examples_static}, second and third).
In this case, the claim holds because the bins on average have not too much excess size.
More precisely, we obviously have $\Opt(I) \leq s(I)$, and invariants \ref{Invariant:Binstructure} and \ref{Invariant:SmallandMedium} imply:
\[ s(I) < 2|\BBbins| + (1+\eps)|\BSbins| + 1.5 |\BMbins| + 1.5|\Mfullbins| + (1+\eps)|\Sfullbins| + 2\]
Furthermore, we have $0.5|\BBbins|\leq 0.5(|\BSbins| + 1)$, due to
Invariant \ref{Invariant:BB=BS}, and $|\BSbins|=|\BSfullbins|$, as
$\BSwaitbins=\emptyset$ holds in the case we are currently
considering. Hence:
\[\Opt(I) < (1.5 + \eps)(|\BBbins| + |\BSfullbins| + |\BMbins| + |\Mfullbins| + |\Sfullbins|) + 2.5 < (1.5 + \eps)\Alg(I) + 3.\]

A similar argument holds, if $\BSwaitbins \neq \emptyset$ but
$\BBbins = \emptyset$. In this case, we have $|\BSwaitbins| = 1$,
because of invariant \ref{Invariant:BB=BS} and thus
$\BSbins=\BSwaitbins$; and $\Sbins = \emptyset$, because of
Invariant \ref{Invariant:SmallBins}. Hence:
\[\Opt(I) \leq s(I) < 1.5 |\BMbins| + 1.5|\Mfullbins| + 2 = 1.5 \Alg(I) + 2.\]

Next, we consider the case $\BSwaitbins \neq \emptyset$ and
$\BBbins \neq \emptyset$ (see Figure \ref{fig:examples_static},
first example). In this case, we have $\Mfullbins = \emptyset$,
because of Invariant \ref{Invariant:MediumBins}, and $\Sbins =
\emptyset$, because of Invariant \ref{Invariant:SmallBins}. Note
that every bin of $\BSfullbins \cup \BMbins$ has at least one item
that is not big, since big items have sizes below $1$, and these
bins are covered.

Let $\xi = s_{\max}(\BSwaitbins)$ be the size of a big item from
$\BSwaitbins$ with maximal size. Then all items in
$\BBbins\cup\BSwaitbins$ are upper bounded in size by $\xi$
(\ref{Invariant:BigItems}) and $\xi > 0.5$. We construct a
modified instance $I^*$ as follows:
\begin{enumerate}
\item The size of each big item with size below $\xi$ is increased
to $\xi$. \item Every big item of size larger than $\xi$ is split
into a big item of size $\xi$ and a
  medium or small item, such that the total size of these two items is equal to the size of the original item.
  Let $X$ be the set of items with sizes of $\xi$, which we will call $\xi$-items in the instance after these transformations ($X$ includes also items whose sizes were $\xi$ in $I$).
\item For each bin from $\BSfullbins\cup\BMbins$, select the
largest item of $I$ that is not big and call it {\it special}. By
increasing item sizes if necessary, change the sizes of all
special items to $0.5$. Let $Y$ be the set of special items (whose
sizes are now all equal to $0.5$). Let $Z$ be the set of the
remaining items not belonging to $X$ or $Y$ (in the instance $I^*$
after the transformations, so there maybe be items that did not
exist in $I$ resulting from splitting a big item).
\end{enumerate}

The set of items in $I^*$ is just $X\cup Y\cup Z$. For the
instance $I^*$, any bin of $\BBbins$ contains two items of $X$
(and no other items). Any bin of $\BSwaitbins$ has an item of $X$,
and one of these bins may also have small items of $Z$, but it is
not covered. Any bin of $\BSfullbins\cup\BMbins$ has one item of
$X$, one item of $Y$, and possibly items of $Z$. There may be one
uncovered bin of $\Mbins$, containing items of $Z$.

Note that $\Opt(I)\leq\Opt(I^*)$, since any packing for $I$ can be
used as a packing for $I^*$ with at least the same number of
covered bins. Next, we investigate the relationship between
$\Opt(I^*)$ and the packing of the algorithm for the original
instance $I$. For some optimal solution for $I^*$ without
overpacked bins (more than barely covered), let $k_2$, $k_1$ and
$k_0$ be the sets of covered bins with $2$, $1$ and $0$ items from
$X\cup Y$, respectively. Then we have $\Opt(I^*) = k_2 + k_1 +
k_0$ and due to counting:
\[2k_2 + k_1 = |X\cup Y| = (2|\BBbins| + |\BSbins| + |\BMbins|) + (|\BMbins| + |\BSfullbins|).\]
Since each item in $X\cup Y$ is upper bounded by $\xi$, we have:
\[(1-\xi)k_1 + k_0 \leq s(Z) . \]

The total size of items (of $Z$ only) packed into the bin of
$\Mbins$ is below $1-\xi$ since $\BSwaitbins$ has a big item of
size $\xi$ in $I$, by Invariant \ref{Invariant:MediumBins}, since
every item of $\BSbins \cup \BBbins$ is smaller than $1-s(M)$. For
$\BSwaitbins$ only one bin may contain items of $Z$ by Invariant
\ref{Invariant:BSPonlyonewithsmall}, and this bin has an item of
size $\xi$ in $I$ (and it is not covered), so it also has items of
$Z$ of total size below $1-\xi$. Consider a bin of
$\BSfullbins\cup\BMbins$. The total size of items excluding the
special item is the same for $I$ and $I^*$. Since such a bin is
barely covered and for $I$ it has items of one class except for
the big item (small or medium), removing the special item results
in a load below $1$. The total size of items of $I^*$ in such a
bin excluding the $\xi$-item and the special item is below
$1-\xi$.

Therefore, we find that \[s(Z) \leq (1-\xi)(|\BMbins| +
|\BSfullbins| + 2) . \] Hence:
\begin{align*}
2\Opt(I)   &\leq 2(k_2 + k_1 + k_0)\\
           &\leq (2k_2 + k_1) + (k_1 + (1-\xi)^{-1}k_0)\\
           &\leq (2|\BBbins| + |\BSbins| + |\BMbins| + |\BMbins| + |\BSfullbins|) + (|\BMbins| + |\BSfullbins| + 2)\\
           &\leq 3|\BBbins| + 3|\BMbins| + 2|\BSfullbins| + 3 \\
           &\leq 3\Alg(I) + 3.
\end{align*}
In the second to last step, we again used invariant \ref{Invariant:BB=BS}.
\end{proof}

\newpage

\section{Non-amortized Migration in the Dynamic Case}

In this section we extend the result of the static case and show:
\begin{theorem}\label{thm:wc_and_dynamic_alg}
For each $0 < \eps < 1$ with $1/\eps\in\ZZ$, there is an algorithm
$\Alg$ for dynamic online bin covering with polynomial running
time, an asymptotic competitive ratio of $1.5 + \eps$ with
additive constant~$\Oh(\log 1/\eps)$, and a non-amortized
migration factor of $\Oh(\frac{\log^{2}(1/\eps)} {\eps ^{5}})$.
\end{theorem}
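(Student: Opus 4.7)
The plan is to extend the static algorithm of Theorem~\ref{thm:wc_and_static_alg} so that a departure of an item of size $s$ can be absorbed using only migration proportional to $s$. The static algorithm already knows how to reinsert the items of an entire barely-covered bin, but a single small departure from a covered bin would force migrating an expensive big item, which is forbidden under non-amortized migration. My approach is to augment the bin structure so that repairs after small departures stay local to items of the same size scale.

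First I would refine the size classification of medium and small items into $\Oh(\log(1/\eps))$ geometric classes $C_j$ of ratio $2$, and refine the bin types $\BMbins$, $\BSfullbins$, $\Mfullbins$, $\Sfullbins$ so that the non-big items of a covered bin all come from a single class $C_j$. Within each class, the bins of each type are linked into a chain ordered by load and share at most one partially filled buffer bin holding surplus items of that class; these buffers account for the $\Oh(\log(1/\eps))$ additive term. Arrivals use the static insertion procedures restricted to the appropriate chain head. On a departure of a class-$C_j$ item of size $s$ from a covered bin, the missing load is restored by pulling a class-$C_j$ replacement out of the class buffer; if the buffer is empty, the pull is propagated along the class-$C_j$ chain, each step shifting $\Oh(1/\eps)$ items of size $\Theta(s)$, so that the total migration remains $\Oh(s/\poly(\eps))$.

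The main obstacle is the departure of a big item. Because the big item has size $>1/2$, the migration budget permits rebuilding an entire bin, but doing so without violating Invariants \ref{Invariant:Binstructure}--\ref{Invariant:SmallandMedium} is delicate. The orphaned non-big items must be returned to their class buffers (potentially propagating through the chains); a replacement big item of a suitable size class must be promoted from $\BBbins$ or $\BSbins$ to re-establish the ordering $\BBbins \leq \BSwaitbins \leq \BSfullbins \leq \BMbins$ (Invariant \ref{Invariant:BigItems}); and the balance between $\BBbins$ and $\BSbins$ (Invariant \ref{Invariant:BB=BS}) has to be restored by invoking the big-item reinsertion procedure of the static algorithm. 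Bounding the overall migration by $\Oh(\log^{2}(1/\eps)/\eps^{5})$ requires arguing that these recursive repairs terminate after $\Oh(\log(1/\eps))$ steps (at most one per class, since each recursive call strictly decreases the size class involved) and that each step performs $\Oh(\log(1/\eps)/\eps^{5})$ work on items whose aggregate size is $\Oh(1)$; the potential-style accounting used in the static migration lemmas~\ref{lem:static_migration_big} and~\ref{lem:static_migration_medium} generalises to this setting.

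Once all invariants are maintained, the competitive analysis of Lemma~\ref{lem:static:ratio} carries over with essentially no change, since the only modifications are the $\Oh(\log(1/\eps))$ uncovered buffer bins (absorbed by the additive constant) and the requirement that each $\BMbins$ or $\BSfullbins$ bin be covered using one item-class, which does not affect the load bounds used in the proof. This yields the claimed asymptotic competitive ratio of $1.5+\eps$ with additive constant $\Oh(\log(1/\eps))$ and migration $\Oh(\log^{2}(1/\eps)/\eps^{5})$.
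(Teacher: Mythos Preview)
Your proposal has a genuine structural gap that the paper's construction is specifically designed to avoid.

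First, a technical point: small items lie in $(0,\eps]$, which is unbounded on a logarithmic scale, so you cannot cover them with $\Oh(\log(1/\eps))$ geometric classes of ratio~$2$. Either the bottom class is a catch-all (and then ``items of size $\Theta(s)$'' fails there), or you need unboundedly many classes.

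More importantly, even setting that aside, your repair rule only pulls replacements from the \emph{same} class $C_j$. Nothing prevents the adversary from first covering every $\BSbins$ bin with items of one small class, then inserting a large supply of items of a much smaller class (which you park in a different chain), and finally deleting all items of the first class. Your mechanism now has no class-$C_j$ items anywhere to pull, so every $\BSbins$ bin is left uncovered while an optimal solution happily refills them with the smaller items. The single-class-per-bin restriction therefore does not merely add buffers to the static analysis; it breaks the competitive ratio.

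The paper does \emph{not} introduce item size classes. Instead it keeps all small items in one global sorted order and organises the bins of $\BSbins$ into $\Oh(\log(1/\eps))$ \emph{groups} whose sizes grow geometrically in \emph{bin count} (group $g$ has $2^{g-1}\lceil\eps|\BSbins|\rceil$ bins), with each group split into parallel chains of length $\Oh(1/\eps)$ sharing a single ``group buffer'' bin. Both big and small items are sorted across groups, and a key invariant states that there is a single frontier group $g^*$ such that all earlier groups are fully well-covered and all later groups contain no small items. When a small item leaves a bin, a $\chainpull$ draws replacements along its chain and, if needed, from the first bins of the \emph{next} group; because items there are no larger, the migration stays proportional to the departed size, and because there are only $\Oh(\log(1/\eps))$ groups the recursion depth is bounded. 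This cross-group flow is exactly what lets the algorithm survive the depletion scenario above, and it is the idea missing from your sketch.
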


\paragraph{Motivation and Discussion.}
It is not too hard to see that the insertion procedures for big
and medium items designed for the static case can be reversed in
order to deal with departures of such items. However, small items
never cause migration in the static case algorithm and borrowing
this approach in the dynamic case causes immediate problems: Let
$N$ be some positive integer. Consider the case that $N^2$ items
of size $1/N$ arrived and were placed into $N$ bins, covering each
of them perfectly. Next, one item from each bin leaves yielding a
solution without any covered bin while the optimum number of
covered bins is $N - 1$. Hence, a migration strategy for
arbitrarily small items is needed in order to design a competitive
algorithm. Now, coming up with a migration strategy to deal with
the present example is rather simple, since all the items are of
the same size, but in principle small items may differ in size by
arbitrary multiplicative factors. That is, it is possible that
when a new small item which is relatively small arrives, some of
the existing small items cannot be migrated at all. Still, the
case with only small items can be dealt with comparatively easily
by adapting a technique that was developed for dynamic online bin
packing with migration \cite{DBLP:conf/approx/BerndtJK15}. The
basic idea is to sort the items non-increasingly and maintain a
packing that corresponds to a partition of this sequence into
barely covered bins. If an item arrives, it is inserted into the
correct bin and excess items are pushed to the right, that is, to
the neighboring bin containing the next items in the ordering, and
this process is repeated until the packing is restored.
Correspondingly, if an item departs, items are pulled in from the
next bin to the right. In this process the arrival or departure of
a small item can only cause movements of items that are at most as
big as the original one. While this is a useful property, it
obviously does not suffice to bound migration: Too many bins have
to be repacked. In order to deal with this, the bins are
partitioned into \emph{chains} of appropriate constant length with
a \emph{buffer bin} at the end, which is used to interrupt the
migration process. This technique is sufficient for the set of
bins $\Sbins$ containing only small items.

The small items are also placed together with big items in the
bins of $\BSbins$ and in principle the same problem can occur
here: Few items with very small overall size can leave many bins
uncovered when leaving and hence we need a migration strategy and
additional structure for these bins as well. Unfortunately, a
straight-forward combination of the chain approach with bins
containing big items fails. The main reason for this is that in
order to adapt our analysis, we need to cover the bins in
$\BSbins$ containing larger big items with higher priority and
furthermore guarantee that there are no (or only few) bins
contained in $\Sbins$ if there are bins containing big items that
are not covered, \ie $\BSwaitbins\neq\emptyset$. It is not hard to
see that spreading one sequence of chains out over the bins of
$\BSbins$ and $\Sbins$ will not suffice.

To overcome these problems, we developed a new technique: We
partition the bins of $\BSbins$ into few, that is, $\Oh(\log
1/\eps)$ many, \emph{groups}. Each of the groups is in turn
partitioned into \emph{parallel chains} of length $\Oh(1/\eps)$.
The groups are defined such that they comply with a non-increasing
ordering of both the big and the small items: the first group
contains the largest big and small items, the next group the
remaining largest, and so on. Similarly, the big or small items
contained in a bin of a \emph{single} parallel chain are at most
as big as the ones in the predecessor bin of that chain. However,
no such structure is maintained in between the parallel chains of
the same group. Now, whenever a buffer bin of a parallel chain
becomes empty or overfilled, items are pushed or pulled directly
into or out of the next group. This may spark a recursive
reaction, but since there are only few groups overall the total
migration can be bounded. Furthermore, we are able to guarantee
(i) that there is at most one group $\mathtt{G}$ containing
uncovered bins, and that all groups before are covered. Finally,
if $\Sbins\neq\emptyset$, we can guarantee (ii) that all bins of
$\BSbins$ are barely covered. Properties (i) and (ii) are the
essential properties we need in order to adapt our approach to the
dynamic case.

The buffer bins of the parallel chains act as the interface between one group and another, and therefore have a special role in this construction.
We require that the buffer bins of the same group have some special properties, e.\,g., that they can exchange items among each other without migration cost.
To guarantee these properties, we simulate the buffer bins of chains belonging
to a fixed group using one \emph{group buffer} bin. Hence, all chains within the
same group share the same buffer bin.
Because of their special role in the procedure, these group buffer bins are considered as a separate class, denoted by $\Gbuffer$.
These bins exclusively contain small items but may be overpacked, that is, unlike any other bin in the construction they may be covered but not barely covered.
To avoid confusion, we slightly alter the definition of $\Sbins$, excluding the group buffer bins regardless of whether they are at most barely covered or not.
Furthermore, we require the bins of $\BSfullbins$ to comply with a more
restricted notion of being barely covered: bins of this type need to be  \emph{well-covered}. A bin $B$ is called well-covered, if it is barely covered and additionally has the following property:
If $s_{\max}(\fbig(B)) + s_{\max}(B\setminus\fbig(B))\geq 1$, then $|B|\leq 2$.
In particular, consider the situation that $B$ contains one big item and more
than one small item. As $B$ is barely covered, removing the largest small item
from $B$ makes it not covered. But $B$ is also well-covered, hence removing all
of the small items \emph{but} the largest one will also leave $B$ not covered.
We call a bin $B$ \emph{more than well-covered}, if one can remove a subset of
items in $B$ to make $B$ well-covered and \emph{at most well-covered}, if one can
add a (possibly empty) set of items to $B$ to make it well-covered.

\paragraph{The Algorithm.}
In the following, we will describe in detail the additional
structure needed to deal with the small items along with the
insertion procedures for small items. We first describe the chain
structure used for $\Sbins$ and then the group structure used for
$\BSbins$. Next, we discuss the insertion and deletion procedures
of medium and big items. The main challenge here is to properly
deal with the insertion and deletion of big items in $\BSbins$,
because these can interfere with the group structure. Lastly, we
will argue that the migration, competitive ratio and running time
of the overall algorithm is properly bounded. In order to do so we
will, in the course of this section, again introduce invariants
and argue that they are maintained by the algorithm. Indeed, some
of the invariants can be transferred directly or with only small
changes from the static case:
\begin{enumerate}[label = I\arabic*,series=invariants]

\item The solution has the proposed bin type structure, \ie $\bins
= \BBbins\ \dot{\cup}\ \BMbins\ \dot{\cup}\ \BSbins\ \dot{\cup}\
\Mbins\ \dot{\cup}\ \Sbins\ \dot{\cup}\ \Gbuffer$  and $\BSbins =
\BSfullbins\ \dot{\cup}\ \BSwaitbins$.
\label{DInvariant:Binstructure}

\item The sets $\BBbins$ and $\BSbins$ are balanced in size, \ie
$\abs[\big]{|\BBbins|-|\BSbins|} \leq 1 $.
\label{DInvariant:BB=BS}

\item The big items contained in $\BMbins$ are at least as big as
the ones in
  $\BSbins$ which in turn are at least as big as the ones in $\BBbins$, \ie
  $s(i) \geq s(i')$ for each $i\in\fbig(\BMbins)$ and
  $i'\in\fbig(\BSbins\cup\BBbins)$; or $i\in\fbig(\BMbins\cup\BSbins)$ and
  $i'\in\fbig(\BBbins)$. \label{DInvariant:BigItems}

\item The items in $\Mbins$ cannot be used to cover a bin together
with a big  item from $\BSbins$ or $\BBbins$, \ie
$\BSbins\cup\BBbins \neq \emptyset
  \implies s(\Mbins) < 1 - s_{\max}(\BSbins\cup\BBbins)$. \label{DInvariant:MediumBins}

\item If there is a bin which only contains small items and is not a group
  buffer bin, then all bins in $\BSbins$ are covered, \ie $|\Sbins| > 0 \implies
  |\BSwaitbins| = 0$. \label{DInvariant:SmallBins}

\item The set $\Mbins$ contains at most one bin that is not covered. \label{DInvariant:MediumoneUncovered}

\end{enumerate}
In the following, we slightly change our notions of bins, that is, we allow empty bins.
Until now, we assumed that a bin ceases to exist as soon as it becomes empty.
In the following, however, there are some situations in which empty bins are maintained by the algorithm.
This will be made clear in each case.  Furthermore, without loss of generality we assume that $1/\eps$ is an integer.

\subsection{Chains}

\paragraph{Preliminaries.}

We formally define a \emph{chain} as a totally ordered finite set
of bins, where each bin contains at most one big item and no
medium items, such that the following conditions are fulfilled.
The ordering of the chain $\mathtt{C}$ is consistent with a
non-increasing ordering of the small items contained in
$\mathtt{C}$, that is, for bins $B,B'\in\mathtt{C}$ and small
items $i\in B$ and $i'\in B'$, we have $s(i)\geq s(i')$, if $B$
precedes $B'$ in $\mathtt{C}$. Furthermore, the corresponding
condition for big items holds as well, and if there is a bin $B$
in $\mathtt{C}$ containing no big item, then the successors of $B$
contain no big items as well.

Let $\mathtt{C}$ be a chain, $B\in\mathtt{C}$, and $X$ a set of
small items. We say that $X$ is \emph{eligible} for $\mathtt{C}$
and $B$, if $\mathtt{C}$ remains a chain after $X$ is inserted
into $B$. Furthermore, we denote the $j$-th bin in the ordering of
$\mathtt{C}$ as $B(\mathtt{C},j)$. We define two basic operations
on chains, which are recursive procedures. Intuitively,
$\chainpush(X,B)$ puts a set of small items $X$ into bin $B$ and
pushes the now superfluous small items to the next bin. In
contrast, $\chainpull(B)$ pulls small items from these next bins
to cover $B$. More formally, these operations behave as follows:
\begin{itemize}

\item $\chainpush(X,B,\mathtt{C})$ is given a chain $\mathtt{C}$, a bin $B\in\mathtt{C}$, and a set $X$ of small items eligible for $\mathtt{C}$ and $B$.
The set of items $X$ is inserted into $B$.
If $B$ is now overpacked, that is, more than well-covered, we repeatedly remove a smallest small item from $B$ until $B$ is well-covered and call the set of removed items $X'$.
If $B$ is the last bin in the chain, the set $X'$ is returned.
Otherwise there is a successor $B'$ of $B$ and $\chainpush(X',B',\mathtt{C})$ is
called.

\item $\chainpull(B,\mathtt{C})$ is given a chain $\mathtt{C}$ and a bin $B\in\mathtt{C}$.
If $B$ is not covered and not the last bin in the chain, we repeatedly move the biggest small item from $B'$ to $B$, where $B'$ is the successor of $B$ in $\mathtt{C}$, until either $B$ becomes well-covered or $B'$ does not contain small items anymore.
Afterwards $\chainpull(B',\mathtt{C})$ is called.

\end{itemize}
Remember that a bin $B$ is called well-covered, if it is barely covered and additionally we have that $s_{\max}(\fbig(B)) + s_{\max}(B\setminus\fbig(B))\geq 1$ implies $|B|\leq 2$.
Note that in both procedures the inputs of the recursive calls are legal, e.g.,
when $\chainpush(X',B',\mathtt{C})$ or $\chainpull(B',\mathtt{C})$ are called, $\mathtt{C}$ is still a chain, $B'\in\mathtt{C}$, and $X'$ is eligible for $\mathtt{C}$ and $B'$.
This immediately yields:
\begin{remark}\label{rem:dyn_chain_remains_chain}
After a call of $\chainpush(X,B,\mathtt{C})$ or $\chainpull(B,\mathtt{C})$, the set $\mathtt{C}$ remains a chain.
\end{remark}
If $Y$ is the set of items moved away from a bin $B$ due to a call of $\chainpush$, we say that $Y$ was pushed out of $B$ and moreover, if $Y$ is pushed out of the last bin of the chain, we say that it was pushed out of the chain.
Correspondingly, if $Y$ was moved away from a bin $B$ due to a call of $\chainpull$, we say that $Y$ was pulled out of $B$.
Our next goal is to formulate and prove some easy technical results concerning
these operations. These results will be among the main tools for bounding the migration due to the small items.
We consider $\chainpush$ first and $\chainpull$ second.

\begin{lemma}\label{lem:chainpush_basic}
Let $\mathtt{C}$ be a chain, $B\in\mathtt{C}$ be at most well-covered, and $X$
be a
set of small items eligible for $\mathtt{C}$ and $B$ such that $X\cup\fbig(B)$ forms a bin that is at most well-covered. Furthermore, assume that $s_{\min}(X)\geq s_{\max}(\fsmall(B))$ or $|X| = 1$.
Finally, let $Y\neq \emptyset$ be the set of items that are pushed out of $B$, due to a call of $\chainpush(X,B,\mathtt{C})$.
Then $B$ is well-covered after the call and we have:
\begin{enumerate}
\item $Y\cup\fbig(B)$ forms a bin that is at most well-covered.
\item \label{assertion2Lemma9} $s_{\max}(X)\geq s_{\max}(Y)$.
\item \label{assertion3Lemma9} $s(Y)\leq s(X) + s_{\max}(X)$.
\item If $B$ has a direct successor $B'$ in $\mathtt{C}$, then $Y$
is eligible for $\mathtt{C}$ and $B'$ and $s_{\min}(Y)\geq
s_{\max}(\fsmall(B'))$ (before the recursive call of
$\chainpush(Y,B',\mathtt{C})$).

\end{enumerate}
\end{lemma}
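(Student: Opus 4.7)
Plan: Let $y_1 \le \cdots \le y_k$ be the items removed by the procedure in the order they are removed (so $Y = \{y_1,\ldots,y_k\}$), and let $B_{\mathrm{after}} = (B \cup X) \setminus Y$ denote the final bin; only small items are removed, so $\fbig(B) \subseteq B_{\mathrm{after}}$.

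I would first check that $B_{\mathrm{after}}$ is well-covered (and not undercovered). The key observation is that whenever a bin $B'$ is more than well-covered, there is a nonempty $S \subseteq B'$ with $B' \setminus S$ well-covered, hence $s(B') \ge 1 + s(S) \ge 1 + s_{\min}(B')$. So removing the smallest item preserves coveredness, and since bins have finitely many items the procedure must halt at a well-covered state.

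For assertion~\ref{assertion2Lemma9} I would analyse the removal order. If $s_{\min}(X) \ge s_{\max}(\fsmall(B))$, items of $\fsmall(B)$ are removed before any element of $X$: either $Y \subseteq \fsmall(B)$ and $s_{\max}(Y)\le s_{\max}(\fsmall(B))\le s_{\min}(X)\le s_{\max}(X)$, or $Y$ extends into $X$ and $s_{\max}(Y)$ is realized by an element of $X$. In the $|X|=1$ case I would first rule out the possibility that the unique $x\in X$ is the smallest item of $B\cup X$: a direct check (splitting on whether $B$ has a big item and/or any small items) shows that then $B\cup X$ would already be at most well-covered, contradicting $Y\neq\emptyset$; hence some item of $\fsmall(B)$ is strictly smaller than $x$, and the conclusion $s_{\max}(Y) \le s_{\max}(X)$ follows as before. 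Assertion~\ref{assertion3Lemma9} is then obtained from the identity $s(Y) = s(B) + s(X) - s(B_{\mathrm{after}})$ together with $s(B_{\mathrm{after}}) \ge 1$ and $s(B) \le 1 + s_{\max}(\fsmall(B)) \le 1 + s_{\max}(X)$, where the first bound uses that $B$ is at most well-covered and the second uses the hypothesis on $X$ (or the preceding argument in the $|X|=1$ case).

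For assertion~(1) I would inspect the penultimate state $B_{\mathrm{after}}\cup\{y_k\}$: it fails to be well-covered while $B_{\mathrm{after}}$ succeeds, so $y_k$ is the single item tipping the balance and $y_k \le s_{\min}(\fsmall(B_{\mathrm{after}}))$, giving the same upper bound on every item of $Y$. Combining this with the size bound from assertion~\ref{assertion3Lemma9} and the hypothesis that $X \cup \fbig(B)$ is at most well-covered, a short case analysis on whether $\fbig(B)$ is empty and whether $|Y\cup\fbig(B)|\le 2$ yields that $Y \cup \fbig(B)$ is at most well-covered. Finally, assertion~(4) is immediate from the chain ordering: since $X$ is eligible for $\mathtt{C}$ at $B$, every small item in $B \cup X$ (and hence every item in $Y$) is at least as large as every small item in any successor $B'$, which gives $s_{\min}(Y) \ge s_{\max}(\fsmall(B'))$ and eligibility of $Y$ for $B'$. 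The main obstacle is assertion~(1), which requires a careful case analysis of the well-covered definition when $Y$ is combined with a possibly-present big item (in particular, the interplay between $b + s_{\max}(Y)\ge 1$ and the $|Y|+|\fbig(B)|\le 2$ condition); the $|X|=1$ case of assertion~\ref{assertion2Lemma9} is the other delicate point, since the clean ordering between $X$ and $\fsmall(B)$ is unavailable there.
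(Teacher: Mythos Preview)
Your overall strategy is close to the paper's, but there is a genuine gap in the $|X|=1$ branch of assertion~(2). You propose to ``rule out the possibility that the unique $x\in X$ is the smallest item of $B\cup X$'' by arguing that then $B\cup X$ would be at most well-covered. This is false. Take $\eps=0.2$, $B=\{b,s\}$ with $b=0.9$ and $s=0.18$, and $x=0.15$. Then $B$ is well-covered, $X\cup\fbig(B)=\{b,x\}$ is well-covered, and $x$ is the smallest item of $B\cup X$; yet $B\cup\{x\}=\{0.9,0.18,0.15\}$ has $b+s_{\max}(\fsmall)\ge 1$ with three items, so it is \emph{not} at most well-covered and the procedure does push out $Y=\{x\}$. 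The conclusion $s_{\max}(Y)\le s_{\max}(X)$ happens to hold here, but your route to it does not. This gap then propagates to your proof of assertion~(3): the inequality $s(B)\le 1+s_{\max}(X)$ that you rely on can fail in exactly this situation (e.g.\ $b=0.9$, $s=0.4$, $x=0.2$ with $\eps=0.45$ gives $s(B)=1.3>1.2=1+s_{\max}(X)$), so the size identity alone does not finish the job.

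The paper avoids this by arguing structurally rather than by contradiction. In the $|X|=1$ case it observes that once the procedure has removed all items of size strictly below $s(i^*)$ together with a single item of size $s(i^*)$, what remains is (as a multiset of sizes) a subset of $B$ obtained by deleting smallest small items; such a subset of an at-most-well-covered bin is again at most well-covered, so the procedure has already stopped. This gives $s_{\max}(Y)\le s(i^*)$ directly and, combined with the observation that removing total size $s(X)$ (plus at most one more item, of size at most $s_{\max}(X)$) already returns the bin to an at-most-well-covered state, yields assertion~(3) as well. The same ``$Y$ is essentially a subset of $\fsmall(B)$'' observation is also what drives assertion~(1) in the paper, whereas your plan to derive (1) from (2) and (3) via a case analysis is left vague and would need the corrected bounds to go through. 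The fix is not hard---replace your ``$x$ is not smallest'' step by the subset argument above---but as written the $|X|=1$ branch does not stand.
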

\begin{proof}

Note that $B$ is well-covered, as $Y\neq \emptyset$. We now
distinguish two cases, namely $s_{\min}(X)\geq
s_{\max}(\fsmall(B))$ and $s_{\min}(X)< s_{\max}(\fsmall(B))$.
In the first case, note that no item belonging to $X$ will be
pushed out of $B$, or, more precisely: if $Y'$ is the subset of
$X$ that is pushed out, there is a set $Z$ of items remaining in
$B$ that does not belong to $X$ and that is equivalent to $Y'$ in
the sense that there exists a bijection between the two, mapping
items of equal size onto each other. Therefore, $Y\cup\fbig(B)$ is
at most well-covered, because $B$ was at most well-covered. In the
second case, we have $s_{\min}(X)< s_{\max}(\fsmall(B))$ and hence
$|X| = 1$ by assumption. Let $X = \set{i^*}$. Note that the items
that are pushed out of $B$ each have size at most $s(i^*)$, and
there can be at most one item with size $s(i^*)$ in $Y$ because
$B$ was at most well-covered before $i^*$ was inserted. Hence
$Y\cup\fbig(B)$ is at most well-covered, because replacing the
item with size $s(i^*)$ from $Y$~---~if it exists~---~with the
biggest item from $\fsmall(B)$, yields a set of items completely
contained in $B$ before the call to $\chainpush$. The above
considerations also show $s_{\max}(X)\geq s_{\max}(Y)$.

Since $B$ was at most well-covered, removing the smallest items
from $B$ with overall size $s(X)$ after $X$ was inserted will
suffice to make it at most well-covered again. The exact value
$s(X)$ may not be met and therefore one extra item may be needed
whose size is bounded by $s_{\max}(X)$, due to the already
established property that  $s_{\max}(X)\geq s_{\max}(Y)$. Note
that these considerations also hold, if $B$ was well-covered with
only one small item, or $|X| = 1$ and $X\cup\fbig(B)$ is
well-covered. This implies the third point and the fourth holds,
because the smallest items are pushed out of $B$ and $\mathtt{C}$
is a chain.
\end{proof}

\begin{lemma}\label{lem:chainpull_basic}
Let $\mathtt{C}$ be a chain, $B\in\mathtt{C}$ a bin that is not covered and not the last bin in $\mathtt{C}$, and $B'$ its successor.
Furthermore, let $Y$ be the set of items that are pulled out of $B'$ due to a call of $\chainpull(B,\mathtt{C})$.
We have:
\begin{enumerate}
\item $Y\subseteq\fsmall(B)$.
\item $B$ is well-covered after the call, if $B'$ was well-covered.
\item\label{assertion3lemma10} $s(Y) \leq (1 - s(B)) + s_{\max}(\fsmall(B'))$.
\end{enumerate}
\end{lemma}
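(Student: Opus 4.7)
The plan is to handle the three assertions in turn. Assertion~1 is immediate: $\chainpull$ only moves items it selects as the biggest small item of $B'$, so every element of $Y$ is small, and since these items end up in $B$ we have $Y\subseteq \fsmall(B)$ after the call. For assertion~3, assume $Y\neq\emptyset$ (otherwise the bound is trivial) and let $y_k$ be the last item transferred. By the stopping rule, just before $y_k$ was pulled the bin $B$ was not well-covered; as will be verified in the second sub-claim below, $B$ becomes well-covered as soon as it becomes covered at all, so $B$ was not yet covered either, i.e., $s(B)+s(Y\setminus\{y_k\})<1$. Combined with $s(y_k)\leq s_{\max}(\fsmall(B'))$ (items are pulled in non-increasing order of size, so $s(y_k)\leq s(y_1)$ and $y_1$ is the initial maximum), this gives $s(Y) < (1-s(B)) + s_{\max}(\fsmall(B'))$.

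For assertion~2, I split into two sub-claims. The first is that, when $B'$ is initially well-covered, the outer call terminates because $B$ becomes well-covered and not because $B'$ runs out of small items. Even if all of $\fsmall(B')$ were transferred, the resulting load of $B$ would be at least $s(\fbig(B))+s(\fsmall(B'))\geq s(\fbig(B'))+s(\fsmall(B'))=s(B')\geq 1$; here the middle inequality uses that $\fbig(B')\neq\emptyset$ forces $\fbig(B)\neq\emptyset$ with a big item of at least the same size by the chain property, while the case $\fbig(B')=\emptyset$ is trivial since then $s(\fsmall(B'))=s(B')\geq 1$ by itself. Consequently the loop must halt at an earlier moment where $B$ is well-covered.

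The second sub-claim is that $B$ really is well-covered, not merely covered, at this stopping moment, which I argue by cases on the initial contents of $B$. If $\fsmall(B)$ was non-empty, the chain order gives $s_{\max}(\fsmall(B'))\leq s_{\min}(\fsmall(B))$, so pulls do not change $s_{\max}(\fsmall(B))$, and since $s_{\max}(\fbig(B))+s_{\max}(\fsmall(B))\leq s(B)<1$ the extra well-covered condition is vacuous throughout. If $\fsmall(B)$ was empty with a big item $b$, then either $s(b)+s(y_1)\geq 1$ so that $B$ becomes covered after one pull with $|B|=2$, or $s(b)+s(y_1)<1$ and the extra condition stays vacuous for all subsequent pulls because the largest small item in $B$ is fixed at $y_1$. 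If $\fbig(B)=\emptyset$ the extra condition is trivially vacuous since small items have size at most $\eps\leq 1/2$. The barely-covered property at the stopping moment follows by the same calculation used for assertion~3: letting $y^*$ be the largest small item of $B$ after the call, $s(y_k)\leq s(y^*)$ and hence $s(B)+s(Y)-s(y^*) = [s(B)+s(Y\setminus\{y_k\})] + [s(y_k)-s(y^*)] < 1$. The main obstacle is precisely this case analysis: each individual case is short, but one must carefully track how the chain ordering interacts with the well-covered extra condition for each configuration of $B$ and $B'$.
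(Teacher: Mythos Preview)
Your proof is correct and follows the same line as the paper's: both rely on the chain ordering to argue that $\fsmall(B')\cup\fbig(B)$ already forms a covered bin (your first sub-claim), and both bound $s(Y)$ by the deficit $1-s(B)$ plus one overshooting item of size at most $s_{\max}(\fsmall(B'))$. Your second sub-claim---that $B$ becomes well-covered the instant it becomes covered, via the case analysis on the initial contents of $B$---makes explicit a verification the paper's proof only gestures at with the remark ``this also holds if $|\fsmall(B')|=1$''; the apparent cross-reference between assertion~3 and this sub-claim is not actually circular, since the inequality $s(B)+s(Y\setminus\{y_k\})<1$ follows directly from $y_k$ being the first pull after which $B$ is covered.
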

\begin{proof}
The first assertion follows directly from the definition of
$\chainpull$. Note that $\fsmall(B')\cup\fbig(B)$ forms a covered
bin due to the definition of chains, if $B'$ was well-covered.
This together with the definition of $\chainpull$ implies the
second assertion and this also holds if $|\fsmall(B')|=1$.
Furthermore, it suffices to pull items with size $(1 - s(B))$ out
of $B'$ in order to cover $B$. However, the exact value $(1 -
s(B))$ may not be met and therefore one extra item may be needed
whose size is bounded by $s_{\max}(\fsmall(B'))$ and this proves
the last claim.
\end{proof}

\paragraph{Sequential Chains.}

We maintain a partition of $\Sbins$ into sequential chains $\schain_1$, \dots,
$\schain_C$ with $|\schain_c| \in [(1/\eps + 1), (2/\eps + 1)]$ for $c<C$ and
$|\schain_C|\leq 2/\eps + 2$. Herein, the exact value of $C$ depends on the
instance. The last bin of each chain is its buffer bin. See Figure~\ref{fig:sequential_chains} for an
illustration.
\begin{figure}[h]
  \centering
  \begin{tikzpicture}
    \begin{scope}[yscale=.3, xscale=.5]
\foreach \j in {0, 5, 10}{
    \foreach \i in {0,1,2}{
      \pgfmathtruncatemacro{\ipj}{\i+\j}
      \drawBin{2*\ipj}{}{b\ipj}
    }
    \pgfmathtruncatemacro{\jfive}{3+\j}
    \drawBin{2*\jfive}{pattern=north west lines}{b\jfive}

    \foreach \i in {0,1,2}{
      \pgfmathtruncatemacro{\ipj}{\i+\j}
      \pgfmathtruncatemacro{\ipjpo}{\i+\j+1}
      \draw ($ (b\ipj) +(.65cm,0) $) -- ($ (b\ipjpo) - (.65cm,0) $); 
    }
  }
  \draw ($ (b3) +(.65cm,0) $) -- ($ (b5) - (.65cm,0) $);
  \draw ($ (b8) +(.65cm,0) $) -- ($ (b10) - (.65cm,0) $);
      \end{scope}
  \end{tikzpicture}
  
  \caption{Sketch of a sequential chain decomposition. Buffer bins are hatched.}
  \label{fig:sequential_chains}
\end{figure}
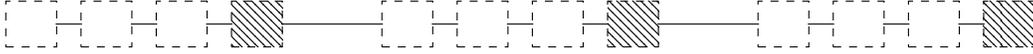 

We require that all bins that are not buffer bins are well-covered.
Buffer bins, on the other hand, are non-empty and at most well-covered.
The set of buffer bins is denoted by $\Sbuffer$.
These properties are summarized in the following invariant:
\begin{enumerate}[resume*=invariants]
\item $\Sbins$ has the described chain structure, in particular, $\Sbins\setminus\Sfullbins \subseteq \Sbuffer$ and $|\Sbuffer| \leq \eps |\Sfullbins| + 1$.\label{DInvariants:Chainstructure}
\end{enumerate}
The distribution of small items in $\Sbins$ is already limited by our definition of chains.
However, we will require an even stronger property such that all of the
orderings of the small items are consistent with each other:
\begin{enumerate}[resume*=invariants]
\item The non-increasing ordering of the small items in $\Sbins$ by size is consistent with both the ordering of the chains and their internal ordering.\label{DInvariants:Distribution_Small_sequential_chains}
\end{enumerate}
More precisely, due to this invariant for each item $i\in B(\schain_c,j)$ and $i'\in B(\schain_{c'},j')$, we have $s(i)\geq s(i')$, if $(c,j)$ is lexicographically smaller than $(c',j')$.
We next describe the insertion and deletion of a small item $i^*$ into $\Sbins$ maintaining these invariants.

\paragraph{Insertion.}

We describe the insertion of a set of small items $X$ into $\Sbins$ in two insertion scenarios:
In scenario 1, a new item $i^*$ was inserted into the instance, that is, $X=\set{i^*}$, and in
scenario 2, a set of items $X$ with $s(X)\leq 1$ and
$s_{\min}(X)\geq s_{\max}(\Sbins)$ was moved from $\BSbins$ to $\Sbins$.
In both scenarios, we treat the case $\Sbins = \emptyset$ as if there was exactly one chain with one bin, which is empty.
In scenario 1, we select the first bin $B$ (with respect to the ordering of the chains and their internal ordering) containing items that are smaller than~$i^*$, or the last bin if no such bin exists.
Let $\mathtt{C}$ be the chain $B$ is contained in.
In scenario 2, $\mathtt{C}$ is the first chain and $B$ its first bin.
In both cases, we call $\chainpush(X,B,\mathtt{C})$.
Due to the choice of $B$, Remark \ref{rem:dyn_chain_remains_chain} and a simple
inductive application of Lemma \ref{lem:chainpush_basic}, we can conclude  that the ordering of the small items is maintained (invariant \ref{DInvariants:Distribution_Small_sequential_chains}) and each non-buffer bin in $\mathtt{C}$ remains well-covered (\ref{DInvariants:Chainstructure}).
Let $Y$ be the set of items that are pushed out of $\mathtt{C}$.
If $Y \neq \emptyset$, we know that the buffer bin $B'$ of $\mathtt{C}$ is now well-covered, due to Lemma \ref{lem:chainpush_basic}, and furthermore $Y$ forms a bin that is at most well-covered, again due to Lemma \ref{lem:chainpush_basic}.
Hence, we turn the buffer bin into a regular bin and declare $Y$ the new buffer bin of $\mathtt{C}$.
If $\mathtt{C}$ now has length $2/\eps + 2$, we split it evenly into two chains each of which contains $\frac{1}{\eps}+1$ bins and declare the last bin of the first chain its buffer bin.
This yields two chains with length $1/\eps + 1$, thereby maintaining invariant
\ref{DInvariants:Chainstructure}. An example of this insertion is illustrated in
Figure~\ref{fig:dynamic_insert_small}.
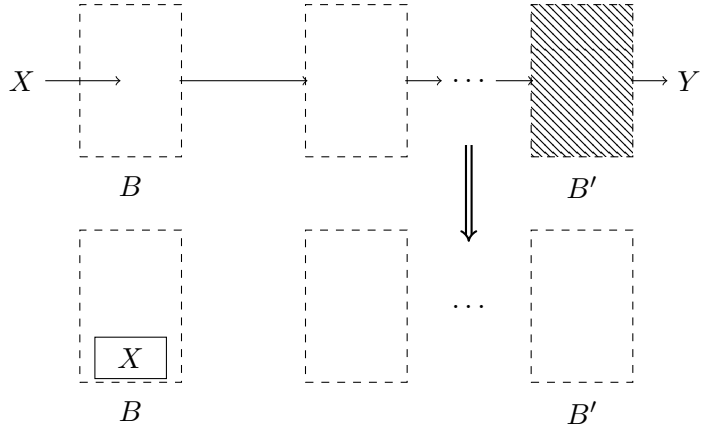
\begin{figure}[h]
  \centering

  \begin{tikzpicture}
    \drawBin{0}{}{b0}
    \node[below = of b0]{$B$};
    \node[left = of b0] (X) {$X$};
    \draw[->] (X) to (b0);

    \drawBin{3}{}{b1}

    \drawBin{6}{pattern=north west lines}{b2}
    \node[below = of b2]{$B'$};

    \node[right = of b1] (dots) {$\ldots$};
    \node[right = of b2] (y) {$Y$};
    \draw[->] \rightBin{b0} -- \leftBin{b1};
    \draw[->] \rightBin{b1} -- (dots);
    \draw[->] (dots) -- \leftBin{b2};
    \draw[->] \rightBin{b2} -- (y);

    \begin{scope}[yshift=-3cm]
      \drawBin{0}{}{b0}
      \node[below = of b0]{$B$};
      \addToBin{0}{0}{.3}{$X$}{fill=small}{}
      
    \drawBin{3}{}{b1}

    \drawBin{6}{}{b2}
    \node[below = of b2]{$B'$};

    \drawBin{9}{pattern=north west lines}{b3}
      \addToBin{9}{0}{.3}{$Y$}{fill=small}{}
    
    \node[right = of b1] (dotsp) {$\ldots$};

    \end{scope}
    \draw[-implies, double equal sign distance, thick, shorten >=.7cm,shorten <=.7cm] (dots) -- (dotsp);

  \end{tikzpicture}
  \caption{Insertion of a set of small items $X$. The current buffer bin is
    drawn hatched. The set $Y$ is pushed out of the chain and put into a new
    buffer bin.}
  \label{fig:dynamic_insert_small}
\end{figure}

\begin{remark}\label{rem:dyn_ins}
The overall size of migrated items due to the insertion of $X$ into $\Sbins$ is at most $\Oh(1/\eps^2)s(X)$, and therefore $\Oh(1/\eps^2)s(i^*)$ in scenario 1.
\end{remark}
\begin{proof}
Let $X_k$ be the $k$-th set pushed out of a bin due to the call of
$\chainpush$. A simple inductive application of assertions
(\ref{assertion2Lemma9}) and (\ref{assertion3Lemma9}) of Lemma
\ref{lem:chainpush_basic} yields that $s(X_k) \leq s(X) + k
s_{\max}(X)\leq (k+1)s(X)$. There are at most $2/\eps + 2$ bins in
the chain, and hence the overall size of migrated items is at most
$(2/\eps + 2)(2/\eps + 2)s(X) = \Oh(1/\eps^2)s(X)$.
\end{proof}

\paragraph{Deletion.}

Next, we consider the case that $i^*$ is deleted from a bin $B$
belonging to chain~$\mathtt{C}$ (that is a part of $S$). If $B$ is
still well-covered or a buffer bin, we do nothing. Hence, we
design a procedure to deal with the case that a non-buffer bin $B$
has become not covered. Similar to the insertion case, this
procedure will work regardless of whether $B$ is not covered
because $i^*$ was deleted, or because some set of small items has
been removed from $B$ to be placed in another group of bins. We
call $\chainpull(B,\mathtt{C})$. Since the invariants
\ref{DInvariants:Chainstructure} and
\ref{DInvariants:Distribution_Small_sequential_chains} did hold
before and due to  Remark \ref{rem:dyn_chain_remains_chain} and a
simple inductive application of Lemma \ref{lem:chainpull_basic},
we know that the ordering of the small items is maintained
(invariant \ref{DInvariants:Distribution_Small_sequential_chains})
and each non-buffer bin in $\mathtt{C}$ except for the bin $B^*$
directly preceding the buffer remains well-covered. If $B^*$ is
well-covered as well, the chain structure (invariant
\ref{DInvariants:Chainstructure}) is maintained, and we do nothing
else. Otherwise, the old buffer bin of $\mathtt{C}$ is deleted and
$B^*$ becomes the new buffer bin, which is not a problem, if
$|\mathtt{C}|\geq 1/\eps + 1$ or $\mathtt{C}$ is the last chain.
Hence, we consider the case that $|\mathtt{C}| = 1/\eps$ and there
is a direct successor chain $\mathtt{C}'$ of $\mathtt{C}$. If
$|\mathtt{C}'| > 1/\eps + 1$ the first bin of $\mathtt{C}'$ is
removed from this chain and declared the new buffer bin of
$\mathtt{C}$, and otherwise $\mathtt{C}$ and $\mathtt{C}'$ are
concatenated and we call the resulting chain again $\mathtt{C}$.
In both cases, we call $\chainpull(B^*,\mathtt{C})$, and the
invariants \ref{DInvariants:Distribution_Small_sequential_chains}
and \ref{DInvariants:Chainstructure} hold afterwards (and this
call to $\chainpull$ will not result in the need to concatenate
two chains).
\begin{remark}\label{rem:dyn_del}
The overall size of migrated items is at most $\Oh(1/\eps)(1-s(B)) + \Oh(1/\eps^2)\psi$, where $\psi$ is the maximum size of a small item in any bin succeeding $B$.
In particular, if $i^*$ was deleted from $B$ this amounts to $\Oh(1/\eps^2)s(i^*)$.
\end{remark}
\begin{proof}
Let $X_k$ be the $k$-th set pulled out of a bin due to one of the at most two calls of $\chainpull$.
Note that there can be at most $2/\eps + 1$ such bins.
A simple inductive application of assertion (\ref{assertion3lemma10}) of Lemma \ref{lem:chainpull_basic} yields that $s(X_k) \leq (1 - s(B)) + k s_{\max}(B')$.
In the case that an item $i^*$ was deleted from $B$, we additionally have $(1 - s(B))\leq s(i^*)$ and $s_{\max}(B')\leq s(i^*)$.
\end{proof}

\subsection{Groups}

We maintain a partition of $\BSbins$ into groups $\group_1$, \dots, $\group_G$ with $|\group_g| = 2^{g-1}\ceil{\eps |\BSbins|}$ for $g<G$ and $|\group_G| \leq 2^{G-1}\ceil{\eps |\BSbins|}$.
Note that $|\group_g| \leq \ceil{\eps |\BSbins|} + \sum_{g'= 1}^{g-1} |\group_{g'}| $ for each $g\in [G]$ and $G \leq \log (1/\eps) + 2 $.
These properties are summarized in the following invariant:
\begin{enumerate}[resume*=invariants]
\item $\BSbins$ has the described group structure. \label{DInvariants:Groupstructure}
\end{enumerate}
Furthermore, it will be useful in the following to consider the bins $\Sbins$ containing exclusively small items as the last group, that is, $\Sbins=\group_{G+1}$.
Each group $\group_g$, for $g\leq G$, is in turn partitioned into $C_g = \ceil{\eps|\group_g|}$ sets such that each of these sets has cardinality $1/\eps$ except for the last which may be smaller.
For each of these sets there is a buffer bin and the set of these buffer bins is called $\VGbuffer_g$.
As mentioned above, these buffer bins have a special role that will be described in more detail below.
The sets together with their buffer bins form \emph{parallel chains}, and since
all the bins of such a chain are contained in the corresponding group, we
sometimes say that the parallel chains are contained in or part of their
respective group. See Figure~\ref{fig:parallel_chains} for an illustration.

  \begin{figure}[h]
    \centering
  
  \begin{tikzpicture}
    \begin{scope}[yscale=.3, xscale=.5]
      \foreach \k in {0, -4cm, -8cm}{
        \begin{scope}[yshift=\k]
\foreach \j in {0}{
    \foreach \i in {0,1,...,4}{
      \pgfmathtruncatemacro{\ipj}{\i+\j}
      \drawBin{2*\ipj}{}{b\ipj\k}
    }
    \pgfmathtruncatemacro{\jfive}{5+\j}
    \drawBin{2*\jfive}{pattern=north west lines}{b\jfive\k}

    \foreach \i in {0,1,2,3,4}{
      \pgfmathtruncatemacro{\ipj}{\i+\j}
      \pgfmathtruncatemacro{\ipjpo}{\i+\j+1}
      \draw \rightBin{b\ipj\k} -- \leftBin{b\ipjpo\k};
    }
  }
          \end{scope}
        }
        \begin{scope}[yshift=-4cm]
        \drawBin{13}{pattern=north west lines}{bb}  
      \end{scope}
      \draw[-, double equal sign distance] \rightBin{b50} -- \leftBin{bb};
      \draw[-, double equal sign distance] \rightBin{b5-8cm} -- \leftBin{bb};
      \draw[-, double equal sign distance] \rightBin{b5-4cm} -- \leftBin{bb};
            
      \end{scope}
  \end{tikzpicture}
  \caption{Sketch of a parallel chain decomposition. Buffer bins are hatched. }
  \label{fig:parallel_chains}
\end{figure}
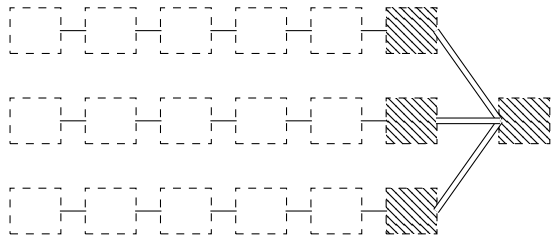 

Each bin in a parallel chain is at most well-covered.
The well-covered bins are all placed in a sequence at the start of the chain, and there can be at most one bin of the chain containing small items that is not covered, and this bin is positioned directly following the last well-covered bin of the chain.
Note that this implies $|\sett{B\in\bigcup_{g\in[G]}\VGbuffer_g}{\fsmall(B)\neq \emptyset}| \leq \eps|\BSfullbins| + G$.
Again, we summarize:
\begin{enumerate}[resume*=invariants]
\item Each group has the described internal parallel chain structure. \label{DInvariant:parallel_chain_structure}
\end{enumerate}
Note that due to the definition of chains, the above invariant also restricts the \emph{internal} distribution of big and small items in each group.
With the next two invariants, we control the distribution of big and small items \emph{between} the different groups.
\begin{enumerate}[resume*=invariants]

\item The ordering of the groups (including $\Sbins=\group_{G+1}$) is consistent with a non-increasing ordering of both the big and the small items. \label{DIvariants:distribution_big_small_groups}

\item There is a group-index $g^*\in [G+1]$ such that each bin in each group
  preceding $\group_{g^*}$ is well-covered and each succeeding group does not
  contain small items. \emph{Small items will always be used to create well-covered bins.
  }
\label{DInvariants:wellpacked_groups}

\end{enumerate}
More precisely, the former invariant together with
\ref{DInvariant:parallel_chain_structure} yields the following.
Let $i\in B(\mathtt{C},j)$ and $i'\in B(\mathtt{C}',j')$, where
$\mathtt{C}$ and $\mathtt{C}'$ are (parallel or sequential) chains
belonging to group $\group_{g}$ and $\group_{g'}$ respectively and
$i,i'$ belong to a common class (either both are big or both are
small). Then $g < g'$, implies $s(i)\geq s(i')$ and the same
holds, if $\mathtt{C}=\mathtt{C}'$ and $j<j'$. Furthermore, due to
the latter invariant, in some sense, there is always exactly one
group being filled with or emptied of small items (\ie the group
of index $g^*$ of invariant \ref{DInvariants:wellpacked_groups}).
Next, we discuss the buffer bins of parallel chains in more
detail.

\paragraph{Group Buffer Bins.}
If we can always maintain the group structure and the chain structures described
above, we are able to guarantee an asymptotic competitive ratio of
$3/2+\eps$. Unfortunately, maintaining this structure will be rather
complicated. Firstly, the buffer bins of parallel chains behave very different
from other bins. In order to have a notion of well-covered buffer bins, we
assume that the buffer bin of each parallel chain  contains a
\emph{virtual} big item that is of the same size as the big item in the second
to last bin of the chain. This virtual big item will only be used in the
theoretical analysis.

Furthermore, in a parallel chain setting, each chain by itself is
ordered with regards to the small and big items. But the
distribution of item sizes of chains within the same group might
be very complicated. For example, we might want to add a small
item into a chain $\mathtt{C}_{1}$ in group $\group_{g}$. This
will trigger a push of small items along this chain and might
remove some small items $Y$ from the buffer bin of
$\mathtt{C}_{1}$. The natural approach would be to try to insert
these items into the subsequent group $\group_{g+1}$, but there
might be another chain $\mathtt{C}_{2}$ in group $\group_{g}$ that
contains items smaller than those in $Y$. Hence, we first would
need to insert the items of $Y$ into $\mathtt{C}_{2}$, triggering
even more repacking. To prevent this cascade of repacking, we
observe that the smallest small items in $\group_{g}$ must be
contained in the buffer bins. Now, if we push small items along
$\mathtt{C}_{1}$, we will first rearrange the small items in the
buffer bins such that the buffer bin of $\mathtt{C}_{1}$ contains
only the smallest items. In this way, we can indeed push the
removed items $Y$ to $\group_{g+1}$. To simplify this
rearrangement of small items in buffer bins, we will make the
following two assumptions that may seem counter-intuitive at first
glance: Firstly, the buffer bins of the same group are allowed to
exchange items without migration costs; and secondly, items may be
split between buffer bins of the same group, with part of an item
placed in one buffer bin and the other part in another.

We justify these assumptions as well as the virtual big items, by simulating
these bins using one \emph{group buffer bin} for each group $\group_g$ (as
indicated in Figure~\ref{fig:parallel_chains}).
All the small items from the buffer bins of $\group_g$ are in fact placed in the group buffer bin of $g$ and the buffer bins of the chains are only virtually maintained by the algorithm.
Clearly, splitting items and moving them without cost does not pose a problem
anymore, as all of these items are placed on the same bin and do not leave it.
The set of group buffer bins is denoted by $\Gbuffer$.
Note that we do neither consider the (virtual) buffer bins of the parallel chains to be a part of $\BSbins$, nor the actual group buffer bins to be a part of $\Sbins$.

Let $B$ be the buffer bin of a parallel chain $\mathtt{C}$. If an
item is placed fractionally in $B$ this item is at most as big as
any other item placed in a preceding bin of the chain.
Furthermore, if $B$ contains small item pieces, we guarantee that
there is a small item $i$ fully contained in $B$ with the property
that $i$ is at least as big as any other small item for which item
pieces are placed in $B$. Due to this property, the definition of
well-covered can be extended to the buffer bins quite naturally:
the bin $B$ is called \emph{well-covered}, if (i) it is covered,
if (ii) the small item pieces together with the virtual big item
but excluding $i$, cannot cover a bin where $i$ is a largest small
item fully contained in $B$, and if (iii) the virtual big item and
$i$ together can cover a bin, then there is no other small item
piece placed in $B$.
\begin{enumerate}[resume*=invariants]
\item The buffer bins of each group and the corresponding group buffer bin have the described structure. \label{DInvariants:group_buffer_bin}
\end{enumerate}

If $\chainpush$ is called on the parallel chain $\mathtt{C}$, fractional item pieces may be pushed out of~$\mathtt{C}$.
When this case occurs, we will carefully redistribute the item pieces in the buffer bins of the group, or remove the full item from the group.
The operation $\chainpull$, on the other hand, could pull fractional items
further into the chain and we would like to avoid that also.
However, as described above, we guarantee that the biggest small item in the buffer bin $B$ is fully placed in it.
Hence, it should be pulled first.
Afterwards, if $B$ still contains small item pieces, we simply choose a small item $i'$ with maximal size that is at least fractionally placed on $B$, and pull all pieces of $i'$ onto $B$.
Since we only move item pieces from other buffer bins of the group, no additional migration costs occur.

\paragraph{Insertion.}

When a small item $i^{*}$ arrives, we have to be careful to
maintain the described structure when inserting it. In particular,
we have to maintain the parallel chain structure in each group
(invariant \ref{DInvariant:parallel_chain_structure}) and the
correct distribution of small items
(\ref{DIvariants:distribution_big_small_groups} and
\ref{DInvariants:wellpacked_groups}). For the group structure
(\ref{DInvariants:Groupstructure}) and the other invariants, it
will be easy to see that they are maintained.

The first step is to choose the appropriate group $\group_g$
depending on the size of $i^*$ and with respect to the correct
distribution of the small items
(\ref{DIvariants:distribution_big_small_groups} and
\ref{DInvariants:wellpacked_groups}). More precisely, $\group_g$
is either (i) the first group containing an item that is smaller
than $i^*$ if such a group exists; or (ii) the last group
containing small items, if that group is the last group or
contains a bin that is not covered; or (iii) the group directly
after the last group containing small items, otherwise.

Instead of describing the insertion of $i^{*}$ into $\group_g$, we take a
slightly more general approach and consider two scenarios, where a set $X$ of
small items has to be inserted into~$\group_g$, similar to the insertion into
$\Sbins$.
We do so, because the insertion procedure works recursively.
In scenario 1 we have $X=\set{i^*}$, and in scenario 2 all items in $X$ are at least as big as any small item in $\group_g$, and adding a big item from $\group_g$ to $X$ forms a bin that is at most well covered.
For the case $g=G+1$ we already discussed the same scenarios and refer to the section on
sequential chains.
Hence we assume $g<G+1$.

We distinguish two cases in the following depending on the existence of bins
without any small items.
If $\group_g$ contains a chain $\mathtt{C}$ that includes a bin without any
small item, the procedure is rather easy: we choose an appropriate bin $B$ with respect to the parallel chain structure (\ref{DInvariant:parallel_chain_structure} and \ref{DInvariants:wellpacked_groups}).
More precisely, in scenario 1, we choose the first bin containing an item that is smaller than $i^*$ if such a bin exists; and otherwise the first bin that is not covered and is either the first bin of the chain or has a well-covered predecessor.
In scenario 2, we simply pick the first bin of the chain.
We perform $\chainpush(X,B,\mathtt{C})$.
Remark \ref{rem:dyn_chain_remains_chain} and an easy inductive application of Lemma \ref{lem:chainpush_basic} yield that no items are pushed out of $\mathtt{C}$ and all invariants are maintained.

On the other hand, if each bin in $\group_g$ contains small items, only the
buffer bins may not be covered. This complicates the maintenance of our
invariants.

We pick a chain $\mathtt{C}$ with the property that the size of a largest small item contained in its buffer bin is minimal.
This chain has the property that each small item placed in its buffer bin could also be placed in any other buffer bin of the group without violation of the ordering of small items in the chain (\ref{DInvariant:parallel_chain_structure}).
Again, we choose an appropriate bin $B$ with respect to the parallel chain structure, that is, the first bin in scenario 2, or in scenario 1 the first bin containing an item that is smaller than $i^*$ if such a bin exists, or the last bin of the chain.
We call $\chainpush(X,B,\mathtt{C})$ and denote the set of items pushed out of $\mathtt{C}$ as $Y$.

If $Y=\emptyset$, the insertion procedure is now complete and all invariants hold (see above).
Hence, we consider the case $Y\neq\emptyset$. Note that $Y$ may contain the biggest small item previously placed in the buffer bin of $\mathtt{C}$.
However, in this case all items now contained in the buffer bin are not fractional and hence, the new biggest item is neither.
In any case, invariant \ref{DInvariants:group_buffer_bin} is maintained. Ideally, we would like to simply
insert $Y$ into the next group $\group_{g+1}$, but the structure of $Y$
might be unsuited to do so without violating our invariants. We will thus construct a more suitable set $Z$ from
$Y$ and $\VGbuffer_{g}$ and insert this set into $\group_{g+1}$. The remaining
items $Y'=Y\setminus Z$ will be placed on the buffer bins of $\group_{g}$.

More formally, let $Z\subseteq \fsmall(\VGbuffer_g \cup Y)$ be a
set of small items constructed greedily by choosing the smallest
available item and putting it into $Z$ until $s(Z)\geq s(Y)$. Ties
are broken in favor of items (fractionally) contained in $Y$.
Clearly, we have $s(Z) - s_{\max}(Z) < s(Y)$ and $s_{\max}(Z)\leq
s_{\min}(Y\setminus Z)$. We remove all items or item pieces
belonging to $Z$ from $Y$ and $\group_g$ and define $Y'=Y\setminus
Z$. Note, that in this process of creating $Z$, no buffer bin can
become empty, because  no item that is the biggest small item of a
buffer bin can be included in $Z$ due to the choice of chain
$\mathtt{C}$. Hence, invariant \ref{DInvariants:group_buffer_bin}
is maintained. Next, we place the items and item pieces from $Y'$
onto buffer bins of $\group_g$: If a piece of an item $i\in Z$ was
previously placed on a buffer bin $B$, we can imagine that
removing it left a \emph{gap}. As $s(Z)\geq s(Y)$, the space
$s(Z\cap \VGbuffer_{g})$ freed up by constructing $Z$ is at least
$s(Y')$. Hence, we can place the items from $Y'$ fractionally into
these gaps. Note that at this point all invariants hold and $Z$
contains items that are appropriate for the insertion in
$\group_{g+1}$ with respect to the distribution of small items in
the groups (invariant
\ref{DIvariants:distribution_big_small_groups}). As a last step,
we split $Z$ into two sets $Z_1$ and $Z_2$ such that their
insertion into $\group_{g+1}$ is consistent with scenario 2. The
sets are defined as follows. Let $q$ be the size of the biggest
item piece contained in $Y$. The first set $Z_1$ is chosen such
that $s(Z_1)\geq q$, $s(Z_1) - s_{\min}(Z_1) < q$, and
$s_{\min}(Z_1)\geq s_{\max}(Z\setminus Z_1)$ (with
$s_{\max}(\emptyset) = 0$). We set $Z_2 = Z\setminus Z_1$ and
first insert $Z_2$ and then~$Z_1$.
\begin{lemma}
Inserting $Z_1$ and $Z_2$ into $\group_{g+1}$ is consistent with scenario 2, and this includes the case $g=G$.
\end{lemma}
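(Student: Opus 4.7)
The plan is to verify, for both $Z_2$ (inserted first) and $Z_1$ (inserted second), the two requirements of scenario 2 for target group $\group_{g+1}$: (a) each item of $Z_i$ is at least as big as any small item currently in $\group_{g+1}$, and (b) together with a big item from $\group_{g+1}$, the set $Z_i$ forms a bin that is at most well-covered. I would also deal separately with the boundary case $g = G$, where $\group_{G+1}=\Sbins$ contains no big items and scenario 2 instead demands $s(Z_i)\leq 1$ and $s_{\min}(Z_i)\geq s_{\max}(\Sbins)$.

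For (a), I would observe that $Z\subseteq \fsmall(\VGbuffer_g\cup Y)$, so every item of $Z$ either already resides in the buffer bins of $\group_g$ or was pushed out of the chain $\mathtt{C}\subseteq \group_g$, hence in both cases originates in $\group_g$. Invariant~\ref{DIvariants:distribution_big_small_groups} then ensures that such items are at least as big as any small item of $\group_{g+1}$. For the insertion of $Z_1$ after $Z_2$ has already been absorbed into $\group_{g+1}$, the additional property $s_{\min}(Z_1)\geq s_{\max}(Z_2)$ from the construction extends the ordering to the new small items of $\group_{g+1}$.

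For (b), the key ingredient is Lemma~\ref{lem:chainpush_basic}, applied inductively along the $\chainpush$ call on $\mathtt{C}$ in $\group_g$: the set $Y$ pushed out of the last bin of $\mathtt{C}$, together with its big item $b_0$, forms a bin that is at most well-covered. Invariant~\ref{DIvariants:distribution_big_small_groups} further gives $s(b)\leq s(b_0)$ for any big item $b$ of $\group_{g+1}$. The split into $Z_1,Z_2$ is engineered around the threshold $q$, the largest piece-size appearing in $Y$: the inequalities $s(Z_1)\geq q$, $s(Z_1)-s_{\min}(Z_1)<q$, and $s_{\min}(Z_1)\geq s_{\max}(Z_2)$ imply bounds of the form $s(Z_i)-s_{\max}(Z_i)<q$ for each $i$. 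Combined with the inequality $s(Y)+s(b_0)<1+s_{\max}(Y)$ inherited from at-most-well-coveredness, and with $s(Z)-s_{\max}(Z)<s(Y)$ from the construction of $Z$, these bounds are enough to conclude that $Z_i\cup\{b\}$ is either uncovered or satisfies the well-covered condition (including the extra $|B|\leq 2$ clause when $s(b)+s_{\max}(Z_i)\geq 1$, which is enforced precisely because at most one item in $Z_i$ can achieve the critical size).

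The main obstacle is this arithmetic bookkeeping in (b): one has to show that splitting $Z$ at threshold $q$ distributes the excess $s(Z)-s(Y)$ in such a way that neither $Z_1$ nor $Z_2$ individually overshoots the well-covered regime against the largest big item of $\group_{g+1}$. I would argue by case distinction on whether $Y\cup\{b_0\}$ was uncovered or well-covered in $\group_g$, and in the well-covered subcase further on whether $s_{\max}(Z_i)$ comes from $Y$ or from $\VGbuffer_g$, propagating the bound $s(b)\leq s(b_0)$ through each case. The case $g=G$ is then easy: there is no big item to test, $s_{\min}(Z_i)\geq s_{\max}(\Sbins)$ is another instance of invariant~\ref{DIvariants:distribution_big_small_groups}, and $s(Z_i)\leq 1$ follows from the same size bounds derived above together with the fact that small items have size at most $\eps\leq 1$.
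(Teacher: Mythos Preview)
Your overall plan is reasonable, but the arithmetic core of part (b) does not go through as stated. The claim that the three inequalities for $Z_1$ imply bounds of the form $s(Z_i)-s_{\max}(Z_i)<q$ for \emph{each} $i$ is simply false for $Z_2$. Concretely, take two parallel chains in $\group_g$ with virtual big items $0.51$ and $0.6$; let the first buffer contain four items of size $0.1$ (so its largest small item, $0.1$, is minimal and this chain is the one selected as $\mathtt{C}$), and let the second buffer contain one item of size $0.15$ together with many items of size $0.01$. A push into the first buffer can expel $Y=\{0.1,0.1,0.1\}$, so $q=0.1$ and $s(Y)=0.3$; the greedy construction then returns $Z$ consisting of thirty items of size $0.01$, whence $Z_1$ has ten such items and $Z_2$ has twenty, giving $s(Z_2)-s_{\max}(Z_2)=0.19>q$. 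The lemma still holds in this example (because $s(Z_2)+s(b)<1$ for any big $b$ of $\group_{g+1}$), but not via the bound you assert. Relatedly, your treatment of the $|B|\le 2$ clause (``at most one item in $Z_i$ can achieve the critical size'') does not yield $|Z_i|=1$: knowing that only one item of $Z_i$ has size at least $1-s(b)$ says nothing about how many smaller items $Z_i$ contains.

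The paper avoids both issues by a different case split: it fixes the biggest big item $b$ of $\group_{g+1}$ and the biggest item $i$ (fractionally) in $Y$, and asks whether $s(i)+s(b)\ge 1$. If yes, the well-covered property of the buffer together with the particular choice of $\mathtt{C}$ (minimal largest small item among all buffers) forces every buffer to contain a single small item of size at least $s(i)$, so the greedy construction returns $Z=\{i\}$, $Z_1=\{i\}$, $Z_2=\emptyset$; this handles the $|B|\le 2$ clause structurally. If no, then $s_{\max}(Z)\le s(i)<1-s(b)$ (because the greedy process cannot pick any item larger than $s(i)$ without first absorbing all of $Y$), so the clause is vacuous and the remaining size bound $s(Z_1)-s_{\min}(Z_1)<q$ together with a direct comparison of $s(Z_2)$ against $s(Y)$ minus its largest piece suffices. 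You are missing this structural shortcut in the first case; without it, the $|B|\le 2$ obstacle cannot be discharged by the inequalities you list.
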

\begin{proof}
For the case $g=G$ note that $s(Z_1)\leq 2\eps\leq 1 $ and $s(Z_2)<s(Y)\leq 1$ which is sufficient for scenario 2 in this case.
We assume $g<G$ in the following and consider two cases.
Let $i$ be the biggest item (fractionally) contained in $Y$.

In the first case, we assume that $i$ is big enough to cover a bin together with the biggest (big) item $b$ contained in $\group_{g+1}$.
Note that due to invariant \ref{DIvariants:distribution_big_small_groups} the
big items contained in $\group_{g}$ are at least as big as $b$.
Since the buffer bin $B'$ of chain $\mathtt{C}$ was at most well-covered, this
implies that $Y = \set{i}$.   Remember that we
choose the chain $\mathtt{C}$ in a way such that the size of the largest small
item, call it $i'$, in its buffer bin $B'$ is minimal. As $Y=\{i\}$,  either $i$ is this largest small item $i'$ placed on $B'$ before the push, or scenario 1 applies and $i=i^{*}$ was just inserted
into the instance.
In both cases we have $s(i)\leq s(i')$.
Due to our choice of $\mathtt{C}$, any largest small item contained in a buffer bin of $\group_{g}$ is large enough to cover a bin together with any big item contained in $\group_{g}$.
Therefore, each buffer bin contains exactly one small item and that item is at least as big as $i$.
This in turn implies that $Z = Y = \set{i}$, $Z_1 = Z$ and $Z_2 = \emptyset$, yielding the assertion in this case.

In the second case, $i$ is not big enough to cover a bin together with $b$.
Hence, $Z_1$ fits the requirements.
Furthermore, note that $Z_2$ is at most as big as $Y$ minus the biggest item piece contained in that set.
This implies that $Z_2$ is suitably small as well:
either $Y$ was fully contained in the buffer bin before (yielding the assertion), or scenario 1 applies and $i^*\in Y$.
In the latter case, $i^*$ has to be the biggest item (and item piece) contained in $Y$ and an item that is bigger than $i^*$ remains in the buffer bin $B'$ of $\mathtt{C}$.
Hence, $Z_2$ is also suitably small.
\end{proof}

\begin{lemma}\label{lem:dyn_in_small_migration}
The overall size of items that are migrated due to the insertion
of a small item $i^*$ is $\Oh((1/\eps^3)\cdot
\log^2(1/\eps))s(i^*)$.
\end{lemma}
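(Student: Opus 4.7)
The plan is to analyze the recursive cascade of insertions triggered by inserting $i^*$, bounding the total migration by tracking two quantities per recursion level: the total size of inputs and the number of active sub-insertions. The insertion of $i^*$ into its designated group $\group_{g_0}$ invokes one call to $\chainpush$ on a parallel chain of length $\leq 1/\eps$; if a set $Y$ is pushed out of this chain, a greedy construction produces a set $Z \subseteq \fsmall(\VGbuffer_{g_0} \cup Y)$ with $s(Z) \leq s(Y) + s_{\max}(Z)$, which is split into $Z_1, Z_2$, each inserted recursively into $\group_{g_0+1}$ as a scenario-2 insertion. The recursion terminates at $\group_{G+1} = \Sbins$ after at most $G+1 = \Oh(\log(1/\eps))$ levels, with Remark~\ref{rem:dyn_ins} supplying the base-case bound.

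The crucial observation I would establish first is that the maximum item size $s_{\max}$ of every set encountered in the cascade is bounded by $s(i^*)$. This follows inductively from assertion~2 of Lemma~\ref{lem:chainpush_basic} (chainpush never increases $s_{\max}$) together with the smallest-first construction of $Z$, which enforces $s_{\max}(Z) \leq s_{\min}(Y \setminus Z) \leq s_{\max}(Y)$. With this bound in hand, for each sub-insertion with input $X$ at any level, assertion~3 applied iteratively along the $\leq 1/\eps$ bins of the chain yields $s(Y) \leq s(X) + (1/\eps)\,s(i^*)$, and hence $s(Z) \leq s(X) + \Oh(s(i^*)/\eps)$.

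Let $N_k$ and $T_k$ denote the number and combined input size of sub-insertions at level $k$. Splitting $Z$ into $Z_1, Z_2$ at most doubles the count, so $N_k \leq 2^k \leq 2^G = \Oh(1/\eps)$; summing the $s(Z)$ bound over all sub-insertions gives the recurrence $T_{k+1} \leq T_k + N_k \cdot \Oh(s(i^*)/\eps)$, which together with $T_0 = s(i^*)$ yields $T_k \leq \Oh(s(i^*)/\eps^2)$ for every $k \leq G$. A single $\chainpush$ on a parallel chain of length $\leq 1/\eps$ with input $X$ incurs migration $\sum_j s(X_j) \leq \Oh((1/\eps)\,s(X) + (1/\eps^2)\,s_{\max}(X))$ by iterating Lemma~\ref{lem:chainpush_basic}, so the total migration at level $k$ is at most $\Oh((1/\eps)\,T_k + (1/\eps^2)\,N_k\,s(i^*)) = \Oh(s(i^*)/\eps^3)$. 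Summing over the $\Oh(\log(1/\eps))$ recursion levels, with the sequential-chain base case contributing $\Oh(1/\eps^2)\,T_G$ by Remark~\ref{rem:dyn_ins}, produces the claimed bound $\Oh((\log^2(1/\eps))/\eps^3)\,s(i^*)$, where one of the log-factors is absorbed in loose constants for the per-level cost.

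The main obstacle I foresee is bookkeeping rigorously across the two sub-branches $Z_1, Z_2$ together with the fractional item pieces stored in $\Gbuffer$: I must verify that intra-group-buffer rearrangements remain migration-free by construction, that the greedy choice of $Z$ with ties broken in favor of items already in $Y$ really propagates the $s_{\max} \leq s(i^*)$ invariant along every branch of the recursion, and that the chain invariants \ref{DInvariant:parallel_chain_structure}--\ref{DInvariants:group_buffer_bin} are preserved after each recursive step so that Lemma~\ref{lem:chainpush_basic} remains applicable throughout. Once these structural properties are confirmed, the quantitative argument reduces to the summation above.
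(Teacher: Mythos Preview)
Your proposal is correct and follows essentially the same strategy as the paper's proof: both track the cascade through at most $G+1=\Oh(\log(1/\eps))$ groups with branching factor $2$, use assertions 2 and 3 of Lemma~\ref{lem:chainpush_basic} inductively along each parallel chain of length $\Oh(1/\eps)$, and exploit the uniform bound $s_{\max}\leq s(i^*)$ on all propagated sets. The only notable differences are organizational: you aggregate via $T_k$ and $N_k$ rather than bounding each individual $s(X_k)$, and you justify the $s_{\max}$ bound directly from the greedy construction of $Z$ (which is cleaner) whereas the paper appeals to invariant~\ref{DIvariants:distribution_big_small_groups}; as you observe, your accounting in fact yields only one factor of $\log(1/\eps)$, matching what a careful reading of the paper's summation also gives, so the second $\log$ in the stated bound is slack.
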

\begin{proof}
Let $X_0 =\set{i^*}$ and $g_0$ be the index of the group $i^*$ is inserted to.
For the case $g_0 = G+1$, we refer to Remark \ref{rem:dyn_ins} that already
guarantees the claimed bound.
Hence, we assume $g_0\leq G$.
Furthermore, let $g_k = g_0 + k$ for $k\in\set{1,\dots,G+1-g_0}$ and $X_k$ be a set of items that is (recursively) inserted into $\group_{g_k}$.
Note that at most $2^k$ sets are inserted into $\group_{g_k}$ for each
$k\in\set{0,\dots,G+1-g_0}$ (due to the insertion of both parts of $Z$) and $k\leq G + 1\leq \log(1/\eps) + 3$ (invariant \ref{DInvariants:Groupstructure}).

We first consider $\group_{g_0}$. When $X_0$ is inserted into this
group, $\chainpush$ is called exactly once (not counting recursive
calls). Let $X_{0,r}$ be the $r$-th set pushed out of a bin due to
this call. A simple inductive application of Lemma
\ref{lem:chainpush_basic} yields that $s(X_{0,r}) \leq s(X_0) + r
s_{\max}(X_0) = (r+1)s(i^*)$. The corresponding chain has length
at most $1/\eps + 1$. Hence, the overall size of items moved in
this group due to the insertion of $X_0$ is at most
$\Oh(1/\eps^2)s(i^*)$. Let $Z_1$ and $Z_2$ be the sets that are
inserted into the next group. Considering the choice of these sets
and the lengths of the parallel chains, we have $s(Z_1) + s(Z_2)
\leq (1/\eps + 3)s(i^*)$, and therefore this upper bound holds for
both sets independently. This proves the base case of the
inductive claim that  $s(X_k) \leq (1 + k(1/\eps + 2))s(i^*)$. The
claim for $k$ holds by noting that $s_{\max}(X_k) \leq s(i^*)$
(using \ref{DIvariants:distribution_big_small_groups}), and that
the chain to which $X_k$ is inserted has at most $\frac 1{\eps}+1$
bins (including the buffer bin).

Furthermore, the overall size of migrated items in group
$\group_{g_k}$ due to the insertion of $X_k$ is at most $(1/\eps +
1)(1 + k(1/\eps + 2) + 1/\eps + 1)s(i^*) = \Oh((1/\eps^2)\cdot
\log(1/\eps))s(i^*)$ and this bound also holds for $k = G+1-g_0$
(see Remark \ref{rem:dyn_ins}). Hence, the overall size of
migrated items is bounded by $(G+1)2^{G+1}\Oh((1/\eps^2)\cdot
\log(1/\eps))s(i^*) = \Oh((1/\eps^3)\cdot \log^2(1/\eps))s(i^*)$.
\end{proof}

\paragraph{Deletion.}

We now consider the case that a small item $i^{*}$ is deleted.
Like in the case of insertion, we have to be careful to maintain the parallel chain structure in each group (invariant \ref{DInvariant:parallel_chain_structure}) and the correct distribution of small items (\ref{DIvariants:distribution_big_small_groups} and \ref{DInvariants:wellpacked_groups}).
Let $i^*$ be removed from a bin $B^*$ positioned in a chain $\mathtt{C}^*$ in a group $\group_{g}$.
We already considered the case $g=G+1$ and hence assume $g\leq G$.
There are several cases in which we do nothing:
If (i) $B^*$ is still well-covered; or if (ii) $B^{*}$ is a buffer bin; or if
(iii) we have $g<G+1$ and $\group_{g}$ is the last group containing small items and $B^*$ was the last bin in $\mathtt{C}^*$ that contained small items.
In these cases, it is easy to see that the group structure and all invariants
are maintained. For the steps that we take otherwise, we again take a slightly more general approach, because they are recursive in nature, and because they will also be useful when considering big and medium items.

A \emph{gap bin} $B$ is a bin that is part of the group structure belonging to some
parallel or sequential chain and should be covered with respect to the
invariants \ref{DInvariant:parallel_chain_structure},
\ref{DInvariants:wellpacked_groups} and
\ref{DInvariants:Distribution_Small_sequential_chains}, but (due to repacking)
is currently not.
More precisely, $B$ is not a buffer bin, and there are small items positioned in
subsequent bins, which may either be part of  the same parallel or sequential
chain, or may be part of a subsequent group or sequential chain.
Furthermore, we call a gap bin \emph{maximal}, if there are no gap bins among the subsequent bins.
In the following, we design a procedure $\gapfill(B)$ used to repair such a maximal gap bin $B$.
It may be called recursively in situations in which there are more gap bins
and therefore the parallel chain structure (invariant  \ref{DInvariant:parallel_chain_structure}) and the correct distribution of small items (\ref{DInvariants:wellpacked_groups}) may not hold.
However, we can guarantee that for each initial (non-recursive) call of
$\gapfill(B)$, the bin $B$ will be
the only gap bin, and we will guarantee that there are no gap bins left after the call (including all recursive calls).
Now, it is easy to see that the bin $B^*$ is indeed a maximal gap bin and we
will handle the remaining cases by a call of $\gapfill(B^*)$.

Let $B$ be a maximal gap bin positioned in a chain $\mathtt{C}$ in a group $\group_{g}$.
First, we discuss the special case $g=G+1$.
In the section on chains, we already designed a procedure for the case that a non-buffer bin $B$ becomes not covered, that is, in the scenario discussed there, we have only a single (maximal) gap bin in the group.
In all application scenarios of $\gapfill$ in the last group, this will actually
be enough, but it is also very easy to see that the described procedure will
work for any maximal gap bin, the only difference being that invariant
\ref{DInvariants:Chainstructure} may not hold because of other gap bins that
might be present.

From now on, we assume $g<G+1$.
As a first step, we call $\chainpull(B,\mathtt{C})$.
Afterwards, $\mathtt{C}$ remains a chain, and $B$ as well as each bin which was followed by a well-covered bin before is well-covered (see Remark \ref{rem:dyn_chain_remains_chain} and Lemma \ref{lem:chainpull_basic}).
Hence, if $g$ is the last group containing small items, or the second to last bin $B'$ in $\mathtt{C}$ remains well-covered, we do nothing else, because all invariants hold (except for gap bins which may be present in preceding groups or a parallel chain).
Otherwise, $B'$ is not covered, but all bins between $B$ and $B'$ are well-covered.
In this case, we repeatedly move the largest small item from the subsequent
group $\group_{g+1}$ to $B'$ until either $\group_{g+1}$ does not contain small items anymore or $B'$ becomes well-covered.
Afterwards, each bin in the chain $\mathtt{C}$ starting from $B$ has the correct structure with respect to the correct distribution of small items (invariants \ref{DIvariants:distribution_big_small_groups} and \ref{DInvariants:wellpacked_groups}).
Note that, if there is a small item in $\group_{g+1}$ that can cover a bin together with the big item from $B'$, than the same holds for the biggest small item from $\group_{g+1}$ and $B'$ did not contain small items before, because these would be even bigger (due to \ref{DIvariants:distribution_big_small_groups}) and cover the bin.
Hence, in that case only one item would be pulled from $\group_{g+1}$ and it would be the only small item in $B'$ afterwards.
Furthermore, note that at most one bin from $\group_{g+1}$ can lose all of its
small items in the above process, if it was well-covered to begin with, and if
this happens, all the pulled items were pulled from this very bin (due to the distribution of big items \ref{DIvariants:distribution_big_small_groups}).
Therefore, each small item that was pulled from $\group_{g+1}$ was previously
contained in the first bin of a chain.

If we stopped pulling items from $\group_{g+1}$ because it did not contain small items anymore, we do nothing else, because then $\group_{g}$ has become the last group containing small items, and $B'$ does not have to be covered with respect to invariant \ref{DInvariants:wellpacked_groups}.
Hence, we consider the case that $\group_{g+1}$ still contains small items.
By pulling the items from $\group_{g+1}$, we might have created multiple maximal
gap bins.
We now could consider each such bin individually to recurse, but in order to
have bounded migration, we need to guarantee that only a constant number of
maximal gap bins, namely four, have to be considered. We will achieve this by
moving items within the first bins of the chains of $\group_{g+1}$ such that at
most four gap bins remain.

To this end, we first consider the case that $\group_{g+2}$ does also contain
small items, \ie $\group_{g+1}$ is not the last group containing small items.
Let $\mathtt{Fst}$ be the set of first bins in chains belonging to $\group_{g+1}$.
We may assume that $\mathtt{Fst}$ contains more than four bins that are not well-covered because otherwise there is nothing to do.
In this case, each of the bins in $\mathtt{Fst}$ still contains at least one small item and no small item in the group can cover a bin together with the biggest big item in the group (see considerations above).
We pick a set $\mathtt{F}\subseteq\mathtt{Fst}$ with $|\mathtt{F}|=4$ and the
property that a smallest small item contained in a bin belonging to $\mathtt{F}$
is at least as big as a smallest small item from a bin belonging to
$\mathtt{Fst}\setminus\mathtt{F}$, \ie $s_{\min}(\fsmall(\mathtt{F}))\geq
s_{\min}(\fsmall(\mathtt{Fst}\setminus \mathtt{F}))$.
Due to that choice, any small item from $\mathtt{F}$ can be moved to any bin from $\mathtt{Fst}\setminus\mathtt{F}$ without violating the ordering of the small items (Invariant \ref{DInvariant:parallel_chain_structure}).
We now repeatedly pick a bin from $\mathtt{F}$ with maximal load and move a biggest small item from that bin to a bin $\mathtt{Fst}\setminus\mathtt{F}$ that is not covered.
By this process each bin can receive items whose size is at most twice as big as
the items that were pulled from this bin.
Furthermore, for each bin $B''$ contained in $\mathtt{F}$, the overall size of small items contained in $B''$ before small items were pulled to $\group_{g}$ was at least half as big as the overall size of items that were pulled.
This is due to the distribution of big items (invariant \ref{DIvariants:distribution_big_small_groups}) and due to the fact that there was no small item in $\group_{g+1}$ big enough to cover a bin together with the big item from $B'$ in this case (see considerations above).
Hence, each bin in $\mathtt{Fst}\setminus\mathtt{F}$ can be covered by this
process and we can recurse on the remaining at most four gap bins in
$\mathtt{F}$.

If, on the other hand, $\group_{g+1}$ is the last group containing small items there are two cases.
In the case that $g<G$, we can use the same approach as before with one difference.
In this case there can be bins that are positioned at the start of the chain but do not need to be covered, because there are no small items in any subsequent bin of the chain.
Hence, we can simply exclude these bins in the considerations.
Lastly, if $g=G$, we do not have parallel chains in $\group_{g+1} = \Sbins$ and all items have been pulled from the same bin.
In any case, for each maximal gap bin $B''$ in  $\group_{g+1}$ we call $\gapfill(B'')$.

Note that the call of $\gapfill(B)$ repairs the gap bin $B$ as well as all created gap bins via recursive calls.
Hence, after the call of $\gapfill(B^*)$ in the deletion procedure all invariants and the correct distribution of small items (\ref{DIvariants:distribution_big_small_groups} and \ref{DInvariants:wellpacked_groups}) in particular hold.

\begin{lemma}\label{lem:dyn_del_small_migration}
Let $B$ be a maximal gap bin.
The overall size of items that are migrated due to a call of
$\gapfill(B)$ is at most $\Oh((1/\eps^3)\cdot \log^2(1/\eps))$.
Furthermore, the migration due to the deletion of an item $i^*$ is
bounded by $\Oh((1/\eps^4)\cdot \log^2(1/\eps))s(i^*)$.
\end{lemma}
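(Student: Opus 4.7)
The plan is to induct on the recursion depth of $\gapfill$, parameterized by the group index $g$ of the argument bin. I would first fix a single non-recursive execution of $\gapfill(B)$ with $B$ in a chain $\mathtt{C}$ of $\group_g$ for $g\le G$ and bound its direct migration. The execution consists of: a call $\chainpull(B,\mathtt{C})$ on a chain of length $\Oh(1/\eps)$, which by iterating Lemma \ref{lem:chainpull_basic} moves mass at most $\Oh(1/\eps)(1-s(B))+\Oh(1/\eps^{2})\psi$, where $\psi$ denotes the largest small item in the bins following $B$; a subsequent pulling of the largest small items from $\group_{g+1}$ into the second-to-last bin $B'$ of $\mathtt{C}$, of mass at most $1+\eps$; and a rearrangement within the first bins of $\group_{g+1}$, moving items out of the four-bin set $\mathtt{F}$ into $\mathtt{Fst}\setminus\mathtt{F}$, of mass at most $|\mathtt{F}|(1+\eps)=\Oh(1)$. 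Using $\psi\le\eps$ for small items, the per-call direct migration is $\Oh(1/\eps)$. The base case $g=G+1$ (sequential chain in $\Sbins$) is handled by Remark \ref{rem:dyn_del}, giving the same order of bound.

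Next I would account for the recursion. Each call of $\gapfill$ at group $\group_g$ spawns at most four recursive calls, one per maximal gap bin in $\group_{g+1}$. Because the recursion strictly advances through groups and there are only $G+1\le\log_{2}(1/\eps)+3$ of them, the recursion tree has depth $\Oh(\log(1/\eps))$ and at most $4^{G+1}=\Oh(1/\eps^{2})$ nodes. Multiplying with the per-call bound of $\Oh(1/\eps)$ yields an absolute migration of $\Oh(1/\eps^{3})$, which is comfortably within the claimed $\Oh((1/\eps^{3})\log^{2}(1/\eps))$.

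For the scaled deletion bound, I would exploit the chain orderings. Invariants \ref{DInvariant:parallel_chain_structure}, \ref{DIvariants:distribution_big_small_groups}, and \ref{DInvariants:Distribution_Small_sequential_chains} imply that every small item moved anywhere in the recursion triggered by removing $i^{*}$ from $B^{*}$ has size at most $s(i^{*})$: any such item lies in a bin after $B^{*}$ in the same chain, in a later chain of the same group, or in a subsequent group, and is therefore no larger than any small item of $B^{*}$, in particular no larger than $s(i^{*})$. Hence $\psi\le s(i^{*})$ for the initial call, and since $B^{*}$ was well-covered before losing only $i^{*}$ we have $(1-s(B^{*}))\le s(i^{*})$, so the initial $\chainpull$ migrates $\Oh(1/\eps^{2})s(i^{*})$. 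Inductively, the deficit $(1-s(B))$ that each recursive $\gapfill$-call must repair is bounded by the mass previously moved out of that bin, which itself traces back to items of size at most $s(i^{*})$; and the additional mass drawn from $\group_{g+1}$ or moved within $\mathtt{F}$ at deeper levels consists of items whose sizes are again bounded by $s(i^{*})$ via the same orderings. Each node in the recursion tree therefore contributes $\Oh(1/\eps^{2})s(i^{*})$ in migration, and multiplying by the $\Oh(1/\eps^{2})$ nodes in the tree gives the desired $\Oh((1/\eps^{4})\log^{2}(1/\eps))s(i^{*})$.

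I expect the main obstacle to lie in the scaled bound rather than the absolute one: one has to argue carefully that the ``deficit'' $(1-s(B))$ to be repaired at each recursive invocation remains proportional to $s(i^{*})$ as we descend through groups with strictly smaller items, rather than being inflated by the $\Oh(1/\eps)$ factors accumulated at each level. This hinges on charging each moved item to the original loss caused by deleting $i^{*}$ via the strict orderings enforced by the invariants on big and small items across chains and groups.
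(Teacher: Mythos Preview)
Your absolute bound is correct and in fact cleaner than the paper's: using $\psi\le\eps$ and $(1-s(B))\le 1$ gives an $\Oh(1/\eps)$ per-call cost (for the $\chainpull$, the pull from $\group_{g+1}$, and the rearrangement within $\mathtt{F}$), and multiplying by the $4^{G+1}=\Oh(1/\eps^{2})$ nodes of the recursion tree yields $\Oh(1/\eps^{3})$ directly, comfortably inside the claimed bound.

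For the scaled bound, however, the obstacle you flag at the end is real and your sketch does not resolve it. The statement that the deficit ``traces back to items of size at most $s(i^{*})$'' only bounds each \emph{individual} item moved, not the total mass removed from a bin; during a single $\chainpull$ along a length-$\Oh(1/\eps)$ chain, $\Oh(1/\eps)$ such items can leave the last non-buffer bin, so the deficit genuinely inflates as you descend through groups. What is needed is precisely what the paper does: show inductively that if the call at $B^{*}$ spawns a recursive gap bin $B''$ in the next group, then $(1-s(B''))\le (1-s(B^{*}))+\psi^{*}/\eps+\psi^{*}$, by tracking (via Lemma~\ref{lem:chainpull_basic}) how much is pulled out of the last non-buffer bin of $\mathtt{C}^{*}$, then out of $\group_{g^{*}+1}$ into that bin, and then out of each bin of $\mathtt{F}$ during the rearrangement. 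Iterating over the $k\le G=\Oh(\log(1/\eps))$ recursion levels gives $(1-s(B_{k}))\le (1-s(B))+\Oh(k\psi/\eps)$, so the per-node cost at depth $k$ is $\Oh(1/\eps)\bigl((1-s(B))+G\psi/\eps\bigr)=\Oh(\log(1/\eps)/\eps^{2})\,s(i^{*})$ rather than the $\Oh(1/\eps^{2})\,s(i^{*})$ you assert. Multiplying by the $\Oh(1/\eps^{2})$ nodes still lands inside the stated $\Oh((1/\eps^{4})\log^{2}(1/\eps))\,s(i^{*})$, so your overall framework is sound---but the explicit induction on the deficit is the missing ingredient, and it cannot be replaced by a charging argument based on item sizes alone.
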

\begin{proof}
The initial call of $\gapfill(B)$ may cause several recursive
calls in later groups. Let $B^*$ be a maximal gap bin for which
such a call occurs, and $\mathtt{C}^*$ its chain, $\group_{g^*}$
its group, $\varphi^* = (1-s(B^*))$ be its free space (at the time
of the call), and $\psi^*$ be the maximal size of a small item in
any subsequent bin. We first consider the case $g^*< G+1$. The
call of $\gapfill(B^*)$ results in a call of $\chainpull$. Let
$X_{r}$ be the $r$-th set pulled out of a bin due to this call. A
simple inductive application of Lemma \ref{lem:chainpull_basic}
yields that $s(X_{r}) \leq (1 - s(B^*)) + r \psi^*$, and we have
$r\leq 1/\eps$. Hence, the overall size of items pulled from the
last non-buffer bin $B'$ of $\mathtt{C}^*$ is at most $(1 -
s(B^*)) + \psi^*/\eps$, and the overall size of items migrated due
to the call of $\chainpull$ is at most $1/\eps((1 - s(B^*)) +
\psi^*/\eps )$. In the next step, items may be pulled to $B'$ from
the next group, and because of the above observation their size is
as well upper bounded by $(1 - s(B^*)) + \psi^*/\eps+\psi^*$. This
in turn implies, that the repacking that may be caused in the next
group in order to bound the number of emerging gap bins is upper
bounded by $\Oh((1 - s(B^*)) + \psi^*/\eps)$. After this
repacking, at most $4$ maximal gap bins remain in the next group.
It is easy to see that for each of these gap bins $B''$ we have
$(1-s(B'')) \leq (1 - s(B^*)) + \psi^*/\eps + \psi^*$ as well.
Summing up, the migration directly caused by the call of $\gapfill(B^*)$ (excluding recursive calls) is bounded by $\Oh(1/\eps((1 - s(B^*)) + \psi^*/\eps ))$.
Due to Remark \ref{rem:dyn_del} this bounds also holds for the case $g^{*}= G+1$.

Now, let $g_0$ be the index of the group $B$ is positioned in, and $g_0 + K$ the last group in which a (recursive) call of $\gapfill$ occurs.
Furthermore, let $g_k = g_0 + k$ for $k\in\set{0,\dots, K}$, $B_k$ a maximal gap bin created in $\group_{g_k}$, and $\psi_k$ the maximal size of a small item in any subsequent bin.
Note that at most $4^k\leq 4^G$ such gap bins are created in $\group_{g_k}$ for $k\leq G$ and the bound of $4^G$ also holds for the last group.
Clearly, we have $B_0 = B$ and we set $\psi = \psi_0$.
The observations above yield that the migration directly caused by the call of $\gapfill(B_k)$ is at most
\[\Oh(1/\eps((1 - s(B_k)) + \psi_k/\eps )) = \Oh(1/\eps((1 - s(B)) + (G + 1) \psi/\eps )),\]
because $\psi_r\leq\psi$ for each $r\in\set{0,\dots, K}$ (due to invariant \ref{DIvariants:distribution_big_small_groups}) and $K\leq G$.
Therefore the overall migration can be bounded by:
\begin{align*}
&\Oh(G\times 4^G\times 1/\eps((1 - s(B)) + G \psi/\eps )) \\
&= \Oh((1/\eps^3)\cdot \log(1/\eps)((1 - s(B)) + (1/\eps)\cdot \log(1/\eps)\psi))\\
&\leq \Oh((1/\eps^{4})\cdot \log^2(1/\eps)[(1 - s(B)) +\psi]) .
\end{align*}
In the case that an item $i^*$ is deleted, we have $(1 - s(B))\leq s(i^*)$ and $\psi \leq s(i^*)$, yielding the proposed bound for that case.
\end{proof}

\subsection{Big and Medium Items}

To deal with the arrival or departure of big and medium items, we mainly use the ideas developed in the static case.
However, changing bins belonging to $\BSbins$ is now much more complicated because we have to be careful to maintain the group and chain structure (invariants \ref{DInvariants:Groupstructure} and following).
For example, removing a big item and its bin from $\BSbins$ will decrement the size of the corresponding group and possibly $\ceil{\eps |\BSbins|}$.
Furthermore, the corresponding parallel chain may become too short.

In this section, we first design some auxiliary procedures to carefully manipulate the group structure.
Utilizing these, we next describe the deletion and insertion procedures of big items, and lastly the procedures for medium items.
The latter are much simpler but may involve the insertion or deletion of big items from $\BSbins$ and hence are dealt with in the end.

\paragraph{Preliminaries.}

We introduce a series of auxiliary procedures used for small local changes in the groups maintaining their internal structure.
These procedures use $\gapfill$ and the insertion procedure for small items as subroutines, and are used in situations in which the group structure (invariant \ref{DInvariants:Groupstructure}) may be slightly disturbed.
In particular, the sizes of the groups may be different and $G$~--~the total
number of groups~--~may be disturbed by some additive constant.
It is easy to see that this is not a problem regarding the analysis of $\gapfill$ and the insertion procedure for small items, because the sizes of the groups do not influence and are not influenced by these procedures; and a change to $G$ by some additive constant merely results in some additional constant factor in the analysis of their migration.
Furthermore, $\BSbins$ and $\BBbins$ may not be balanced (invariant \ref{DInvariant:BB=BS}) before or after the call of one of the auxiliary procedures.
This will be handled, when they are used.

\begin{itemize}

\item $\groupinsert(i,\group_{g})$ is called to insert a big item
$i$ into a group $\group_{g}$ with $g<G+1$. To do so, we create a
new bin $B$ containing $i$. There are two cases. If $\group_{g}$
contains a parallel chain $\mathtt{C}$ with length smaller than
$1/\eps + 1$, bin $B$ is inserted into $\mathtt{C}$ at a position
that maintains the correct distribution of big items in the chain
(invariant \ref{DInvariant:parallel_chain_structure}). More
precisely, right before the buffer bin, if all big items in
$\mathtt{C}$ are at least as big as $i$, or right in front of the
first bin containing a big item that is smaller than $i$. If $B$
was inserted right in front of the buffer bin, we adjust the size
of the virtual big item placed in the buffer bin of $\mathtt{C}$
to be $s(i)$, thereby maintaining invariant
\ref{DInvariants:group_buffer_bin}. Otherwise, a new parallel
chain is created in $\group_{g}$ containing only $B$ and a buffer
bin whose virtual big item has size $s(i)$. Lastly, $B$ now may be
a maximal gap bin and we call $\gapfill(B)$ in this case.
Afterwards, the parallel chain structure (invariant
\ref{DInvariant:parallel_chain_structure}), as well as the correct
distribution of small items (\ref{DInvariants:wellpacked_groups})
are guaranteed.

\item $\groupdelete(B,\group_{g})$ is used to repair the internal structure of a group~$\group_{g}$ with $g<G+1$ after the big item of a bin $B$ positioned in $\group_{g}$ was removed.
Let $\mathtt{C}$ be the parallel chain $B$ is positioned in.
We remove all small items from $B$ and delete the bin.
Next, we remove all small items from the buffer bin of $\mathtt{C}$.
If $\mathtt{C}$ now has length one, we delete the chain, and if additionally $\group_{g}$ has become empty, we also delete the group.
Otherwise, we consider two cases: If $\mathtt{C}$ is the only
chain with length smaller than $1/\eps + 1$, we adjust the size of
the virtual big item in its buffer to be the same as the size of
the big item in its direct predecessor, thereby maintaining
invariant \ref{DInvariants:group_buffer_bin}. In the second case
there is another chain $\mathtt{C}'$ with length smaller $1/\eps +
1$ and hence $\mathtt{C}$ has length $1/\eps$ (due to
\ref{DInvariant:parallel_chain_structure}). We remove the last
non-buffer bin $B'$ from $\mathtt{C}'$, and remove all small items
from $B'$ and the buffer bin of $\mathtt{C}'$. Then, we insert
$B'$ into $\mathtt{C}$ at the position maintaining the correct
distribution of big items
(\ref{DInvariant:parallel_chain_structure}). If $B'$ was inserted
right in front of the buffer bin, we adjust the size of the
virtual big item placed in the buffer of $\mathtt{C}$ (maintaining
\ref{DInvariants:group_buffer_bin}). If $\mathtt{C}'$ now has
length $1$, we delete it, and otherwise we adjust the size of the
corresponding virtual big item in its buffer bin as well, again
maintaining \ref{DInvariant:parallel_chain_structure} and
\ref{DInvariants:group_buffer_bin}. Now $B'$ may be a maximal gap
bin and we call $\gapfill(B')$ in this case. Lastly, we reinsert
all small items that have been removed. Note that afterwards the
parallel chain structure (invariant
\ref{DInvariant:parallel_chain_structure}), as well as the correct
distribution of small items (\ref{DInvariants:wellpacked_groups})
are guaranteed.

\item $\grouppush(\group_{g})$ is used to decrement the size of a group $\group_{g}$, with $g\leq G$, by pushing a big item of minimal size to the next group.
We select a non-buffer bin $B$ from $\group_{g}$ containing a big item $i$ of minimal size, remove $i$ from $B$, and call $\groupdelete(B,\group_{g})$.
If $g < G$, we then call $\groupinsert(i,\group_{g+1})$.
Otherwise, we create a new group belonging to $\BSbins$ with only one parallel chain.
The chain has two bins, namely a bin containing only $i$ and a buffer bin whose virtual item has size $s(i)$.

\item $\grouppull(\group_{g})$ is used to increment the size of a group $\group_{g}$, with $g< G$, by pulling a big item of maximal size from the next group.
We select a bin $B$ from $\group_{g + 1}$ containing a big item $i$ of maximal size, remove $i$ from $B$, call $\groupdelete(B,\group_{g+1})$, and then  $\groupinsert(i,\group_{g})$.

\end{itemize}

\begin{lemma}\label{lem:dyn_migration_auxil}
The overall size of items migrated due to a call of one of the
auxiliary procedures above is at most $\Oh((1/\eps^3)\cdot
\log^2(1/\eps))$.
\end{lemma}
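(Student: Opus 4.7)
The plan is to decompose each auxiliary procedure into a constant number of elementary actions whose migration can be bounded by Lemmas~\ref{lem:dyn_in_small_migration} and~\ref{lem:dyn_del_small_migration}, together with the movement of a constant number of big items of size at most one. Concretely, I would argue that each of the four procedures performs: (i)~the movement, insertion or removal of $\Oh(1)$ big items; (ii)~the removal of the small items from at most a constant number of (non-buffer and buffer) bins, whose total size is $\Oh(1)$; (iii)~at most one call to $\gapfill$; and (iv)~the reinsertion of the small items removed in~(ii) via the insertion procedure for small items.

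For $\groupinsert(i,\group_g)$ only (i)~and~(iii) occur: we pay $\Oh(1)$ for placing $i$ into (possibly a new parallel chain of) $\group_g$, plus one optional call $\gapfill(B)$ costing $\Oh((1/\eps^3)\cdot\log^2(1/\eps))$ by Lemma~\ref{lem:dyn_del_small_migration}. For $\groupdelete(B,\group_g)$, all four actions appear: the small items are removed from at most two ordinary bins ($B$ and possibly $B'$) and at most two buffer bins (of $\mathtt{C}$ and possibly $\mathtt{C}'$), each carrying total small-item size at most~$2$, so the total size $S$ of small items stripped out is $\Oh(1)$; there is one optional call $\gapfill(B')$ costing $\Oh((1/\eps^3)\cdot\log^2(1/\eps))$; and the subsequent reinsertion of each small item $i$ contributes $\Oh((1/\eps^3)\cdot\log^2(1/\eps))\,s(i)$ migration, summing to $\Oh((1/\eps^3)\cdot\log^2(1/\eps))\cdot S = \Oh((1/\eps^3)\cdot\log^2(1/\eps))$ by Lemma~\ref{lem:dyn_in_small_migration}. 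The procedures $\grouppush$ and $\grouppull$ consist of one invocation each of $\groupdelete$ and $\groupinsert$, plus the movement of a single big item of size at most one, and therefore inherit the same bound.

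The only point that needs care is the remark made before the lemma statement, namely that the group structure may be temporarily disturbed when these procedures are invoked (group sizes slightly off, $G$ off by an additive constant, and $\BSbins$ and $\BBbins$ not necessarily balanced). As observed there, this has no impact on the internal workings of $\gapfill$ and of the small-item insertion procedure, since neither reads nor modifies the exact group cardinalities, and an additive constant change in $G$ enters the bounds of Lemmas~\ref{lem:dyn_in_small_migration} and~\ref{lem:dyn_del_small_migration} only inside the $\log(1/\eps)$ factor, absorbed by the $\Oh$-notation. Summing the contributions of (i)--(iv) over all procedures yields the desired bound $\Oh((1/\eps^3)\cdot\log^2(1/\eps))$.

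The main obstacle I anticipate is the bookkeeping in (ii): one has to verify that in each branch of $\groupdelete$ (in particular the case where a bin $B'$ is transferred from a shorter sibling chain $\mathtt{C}'$) the number of bins whose small items are removed, as well as the buffer bins touched, remains bounded by an absolute constant independent of $\eps$. Once this is established, together with the fact that every touched bin carries total small-item size at most $2$, the rest of the proof is a mechanical application of the two migration lemmas.
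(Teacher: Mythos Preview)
Your proposal is correct and follows essentially the same approach as the paper's proof: decompose each auxiliary procedure into $\Oh(1)$ big-item moves, the removal of $\Oh(1)$ total size of small items, at most one call to $\gapfill$, and the reinsertion of those small items, then invoke Lemmas~\ref{lem:dyn_in_small_migration} and~\ref{lem:dyn_del_small_migration}. Your write-up is in fact more explicit than the paper's (which simply states that in $\groupdelete$ and $\groupinsert$ at most one big item and small items of total size at most~$4$ are removed, and then applies the two lemmas), and your treatment of the temporarily disturbed group structure matches the paper's remark preceding the lemma.
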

\begin{proof}
In both $\groupdelete$ and $\groupinsert$ at most one big item and
small items with overall size of at most $4$ are removed. The
small items are reinserted causing migration of at most
$\Oh((1/\eps^3)\cdot \log^2(1/\eps))$ (see Lemma
\ref{lem:dyn_in_small_migration}), and there is a call of
$\gapfill$  yielding the same bound (see Lemma
\ref{lem:dyn_del_small_migration}). Both in $\grouppush$ and
$\grouppull$ one big item is moved and the other two auxiliary
procedures are called once.
\end{proof}

\paragraph{Deletion of Big Items.}

When a big item $i^*$ is deleted, there are three cases we have to
consider based on the bin $B^*$ that item $i^*$ was positioned in.
Firstly, if $B^*\in\BMbins$, we simply delete the bin and reinsert
the corresponding medium items starting with the largest one.
Next, if $B^*\in\BBbins$, we have to be a bit more careful in
order to keep the sets $\BSbins$ and $\BBbins$ balanced (invariant
\ref{DInvariant:BB=BS}). If $|\BBbins| \geq |\BSbins|$ did hold
before, we select a bin $B\in\BBbins$ containing a big item $i$ of
maximal size (with a preference for $B^*$ itself), replace $i$
with the remaining item from $B^*$, and reinsert $i$ (thereby
deleting $B^*$). If, on the other hand, $|\BBbins| = |\BSbins| -
1$ did hold, we select a big item $i$ of minimal size from
$\BSbins$ and move it to $B^*$. Afterwards, \ref{DInvariant:BB=BS}
holds, but the removal of $i$ from $\BSbins$ may have violated
several invariants concerning the structure of $\BSbins$. To deal
with this, we use the deletion procedure for the third case
$B^*\in\BSbins$, which is by far the most complicated and
described in the following.

Let $\group_{g}$ be the group $B^*$ is positioned in. We perform
$\groupdelete(B^*,\group_{g})$. Now, $|\group_{g}|$ is decreased
by one and $\ceil{\eps |\BSbins|}$ may have decreased by one. We
first deal with the former and then consider the latter. In
particular, if $g<G$, we iteratively perform
$\grouppull(\group_{g'})$ for one group after another with $g' =
g,\dots, G-1$. Afterwards, $G$ may have been decreased by one and
we have $|\group_g| = 2^{g-1}\ceil{\eps (|\BSbins| + 1)}$ for
$g<G$ and $|\group_G| \leq 2^{G-1}\ceil{\eps (|\BSbins| + 1)}$. If
$\ceil{\eps (|\BSbins| + 1)} = \ceil{\eps |\BSbins|}$ the group
structure, that is, \ref{DInvariants:Groupstructure}, again holds.
Otherwise, there are $2^{g-1}$ too many bins positioned in group
$\group_{g}$ for $g<G$ and up to $2^{G-1}$ too many in
$\group_{G}$. Hence, for each $g'\in[G-1]$, we perform $\sum_{r =
1}^{g'}2^{r-1} = 2^{g'}-1$ many calls of
$\grouppush(\group_{g'})$. Afterwards, there are $\ell\leq 2^G -
1$ too many bins in $\group_{G}$ and we perform $\ell$ calls of
$\grouppush(\group_{G})$, thereby creating up to one additional
group. The last step is to rebalance $\BSbins$ and $\BBbins$, if
necessary, that is, if $|\BSbins|=|\BBbins|-2$, we select a bin
$B\in\BBbins$ containing a big item of maximal size. If necessary,
we switch the second item in $B$ such that the big items in $B$
are as big as any other in $\BBbins$. Then we delete $B$ and
reinsert the big items. These big items will be inserted into
$\BSbins$ but will not incur further recursive calls for insertion
or deletion of big items.

\subparagraph{Insertion of Big Items.}

The steps that we take, when a big item $i^*$ has to be inserted in the dynamic case are very similar to the static case:
We consider inserting $i^*$ into $\BMbins$, $\BSbins$ and $\BBbins$ in this order and based on the same conditions.
The insertion for $\BMbins$ works exactly as in the static case and the one for $\BBbins$ is changed only slightly.
In particular, there are two cases considered in the procedure, namely $|\BSbins| = |\BBbins| + 1$ and $|\BSbins| < |\BBbins| + 1$.
In the former, a bin $B\in\BSbins$ is selected containing a big item $i$ of minimal size, and this item $i$ is used to form a new bin together with $i^*$.
The only change to the procedure is that we deal with the removal of $i$ from $\BSbins$ using the deletion procedure for big items.
We now describe how to insert $i^*$ into $\BSbins$.

When $i^*$ is to be inserted into $\BSbins$ we first select the
correct group $\group_{g}$ with respect to the distribution of big
items (invariant \ref{DIvariants:distribution_big_small_groups}),
that is, $g=G$ if $i^*$ is at most as big as any other big item in
$\BSbins$, or the first group containing a big item smaller than
$i^*$ otherwise. We call $\groupinsert(i^*, \group_g)$. Now,
$|\group_{g}|$ is increased by one and $\ceil{\eps |\BSbins|}$ may
have increased by one, that is, invariant
\ref{DInvariants:Groupstructure} may not hold. We first deal with
the former and then consider the latter. In particular, we
iteratively call $\grouppush(\group_{g'})$ for one group after
another with $g' = g,\dots, G-1$. Furthermore, if the last group
was of full size, that is $|\group_{G}| = 2^{G-1}\ceil{\eps
|\BSbins|}$ before the above operations, we also call
$\grouppush(\group_{G})$. Afterwards, $G$ may have been increased
by one and we have $|\group_g| = 2^{g-1}\ceil{\eps (|\BSbins| -
1)}$ for $g<G$ and $|\group_G| \leq 2^{G-1}\ceil{\eps (|\BSbins| -
1)}$. Hence, if $\ceil{\eps (|\BSbins| - 1)} = \ceil{\eps
|\BSbins|}$ the group structure(\ref{DInvariants:Groupstructure})
again holds. Otherwise, there are $2^{g-1}$ bins too few
positioned in group $\group_{g}$ for $g<G$.For each $g'\in[G-2]$, we perform $\sum_{r = 1}^{g'}2^{r-1} =
2^{g'}-1$ many calls of $\grouppull(\group_{g'})$. Furthermore, if
$G>1$, we perform $\min\set{2^{G-1}-1, |\group_{G}|}$ many calls
of $\grouppull(\group_{G-1})$, thereby possibly deleting the last
group. Lastly, the insertion of $i^*$ might have violated
\ref{DInvariant:BB=BS}, that is, we now have $|\BSbins| =
|\BBbins| + 2$. In this case, we select a big item $i$ from
$\BBbins$ and create a bin containing two dummy items of the same
size as $i$. Afterwards \ref{DInvariant:BB=BS} and all other
invariants hold. We then delete the two dummy items one after
another using the deletion procedures described above.

\paragraph{Medium items.}

Dealing with medium items in the dynamic case is not a big problem, because we can essentially use the same ideas as in the static case.
In fact, we have to make only one small adjustment to the insertion procedure in the static case:
When the arriving medium item has been inserted into $\Mbins$ and we have the situation that $s(\Mbins) \geq 1 - s_{\max}(\BSbins\cup\BBbins)$ and $\BSbins\neq\emptyset$, then a big item $i$ is to be removed from $\BSbins$.
We handle this as follows:
We replace $i$ by a dummy item of the same size, form a new bin $B$ containing $i$ and call $\greedypull(B,\Mbins)$.
Afterwards \ref{DInvariant:MediumBins} holds.
Then we delete the big dummy item and call the corresponding deletion procedure.

When a medium item $i^*$ is deleted from its bin $B^*$, we distinguish a few cases.
If $B^*$ remains barely covered, we do nothing.
Otherwise, if $B^*\in\Mbins$, we delete the bin and reinsert all the medium items.
Note that these items will simply be reinserted into $\Mbins$ via $\greedypush$ and cause no further migration, because of invariant \ref{DInvariant:MediumBins}.
Lastly if $B^*\in\BMbins$, we call $\greedypull(B^*,\Mbins)$.
If $B^*$ is barely covered afterwards, we do nothing else.
Otherwise, $\Mbins = \emptyset$ and we remove the big item $i$ from $B^*$ (yielding $\Mbins = \set{B^*}$), and reinsert $i$.
Note that in this case $i$ will not be inserted into $\BMbins$.

\begin{lemma}\label{lem:dyn_migration_big_med}
The overall size of items that are migrated due to the insertion
or deletion of a big or medium item $i^*$ is $\Oh((1/\eps)^4\cdot
\log^2(1/\eps))$.
\end{lemma}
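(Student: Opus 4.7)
The plan is to decompose the total migration caused by a single arrival or departure of a big or medium item $i^*$ into three contributions: (a) direct relocations of a bounded number of items of total size $\Oh(1)$ (big items moved between $\BBbins, \BMbins, \BSbins$, and small or medium items removed from the affected bins); (b) reinsertions of small items of total size $\Oh(1)$ via the small-item insertion procedure analyzed in Lemma \ref{lem:dyn_in_small_migration}; and (c) calls to the auxiliary procedures $\groupinsert$, $\groupdelete$, $\grouppush$, $\grouppull$ invoked by the $\BSbins$ machinery, each of which costs $\Oh((1/\eps^3)\cdot \log^2(1/\eps))$ by Lemma \ref{lem:dyn_migration_auxil}.

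For (a), a case analysis of the six procedures (insertion/deletion of a big item in $\BMbins$, $\BSbins$, $\BBbins$, and insertion/deletion of a medium item in $\BMbins$ or $\Mbins$) mirrors the bookkeeping carried out in Lemmas \ref{lem:static_migration_big} and \ref{lem:static_migration_medium}: each step directly moves $\Oh(1)$ big items plus small and medium items of total size $\Oh(1)$. Combined with Lemma \ref{lem:dyn_in_small_migration}, contributions (a) and (b) together are $\Oh((1/\eps^3)\cdot \log^2(1/\eps))$.

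The heart of the argument is (c), for which the key counting claim is that a single top-level insertion or deletion of a big item in $\BSbins$ triggers at most $\Oh(1/\eps)$ auxiliary calls. Indeed, beyond the initial $\groupinsert$ or $\groupdelete$, propagating the size change through the groups requires $G - 1 = \Oh(\log(1/\eps))$ calls of $\grouppush$ or $\grouppull$, and the rebalancing step (invoked only when $\ceil{\eps|\BSbins|}$ changes) performs $\sum_{g'=1}^{G-1}(2^{g'}-1) + \Oh(2^{G}) = \Oh(2^{G}) = \Oh(1/\eps)$ further calls, because the geometric growth of the group sizes lets this sum telescope. Multiplying by Lemma \ref{lem:dyn_migration_auxil} yields $\Oh(1/\eps)\cdot \Oh((1/\eps^3)\log^2(1/\eps)) = \Oh((1/\eps)^4\log^2(1/\eps))$. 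Every other top-level event then invokes this $\BSbins$-machinery at most a constant number of times at constant recursion depth: inserting into $\BBbins$ may trigger one removal from $\BSbins$; inserting into $\BMbins$ may trigger a reinsertion eventually landing in $\BSbins$; a medium-item insertion or deletion may similarly trigger a single big-item operation on $\BSbins$; and these chains terminate exactly as in the static case, contributing only an $\Oh(1)$ factor.

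The main obstacle I expect is step (c), specifically verifying that the rebalancing is geometric (so the total number of auxiliary calls is $\Oh(2^G) = \Oh(1/\eps)$ rather than $\Omega(G\cdot 2^G)$) and that the recursive triggering between the big-item and medium-item procedures does not compound: as in the static analysis, each of the recursive hand-offs can happen at most a constant number of times per top-level event, so they only affect the hidden constant. Combining (a), (b) and (c) then gives the claimed bound $\Oh((1/\eps)^4\cdot \log^2(1/\eps))$.
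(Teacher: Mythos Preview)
Your proposal is correct and follows essentially the same approach as the paper: a case analysis showing $\Oh(1)$ direct migration, $\Oh(1/\eps)$ auxiliary calls (with the dominant term coming from the geometric rebalancing sum $\sum_{g'}(2^{g'}-1)=\Oh(2^G)=\Oh(1/\eps)$), each costing $\Oh((1/\eps^3)\log^2(1/\eps))$ by Lemma~\ref{lem:dyn_migration_auxil}, and a constant bound on the cross-procedure recursion depth exactly as in the static analysis. The only organizational difference is that the paper folds your contribution~(b) into Lemma~\ref{lem:dyn_migration_auxil} (the small-item reinsertions happen inside $\groupdelete$), whereas you account for it separately; both yield the same bound.
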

\begin{proof}
We consider the same cases as above and in the same order.

If $i^*$ is big and deleted from a bin $B^*\in\BMbins$, then medium items with size at most $1$ are reinserted.
Because the largest item is reinserted first and because of invariants \ref{DInvariant:MediumBins} and \ref{DInvariant:BigItems}, only one of these items can trigger further migration and all others will be inserted into $\Mbins$.
Hence, we get the same migration as for medium items plus some constant.

If $B^*\in\BBbins$, the deletion may cause a reinsertion of a big item, or the deletion of a big item positioned in $\BSbins$, as well as $\Oh(1)$ further migration.
In the first case, the corresponding big item will be inserted into $\BSbins$, because of its size, and because of the considered case, the insertion cannot cause further deletions of big items.
Similarly, in the second case, the deletion cannot cause further insertions of big items.
Hence, the migration is given by the migration for insertions or deletions of big items into $\BSbins$ plus some constant.

If $B^*\in\BSbins$ we may perform one call of $\groupdelete$,
$\Oh(G) = \Oh(\log (1/\eps))$ calls of $\grouppull$, $2^{G+\Oh(1)}
= \Oh(1/\eps)$ calls of $\grouppush$ and some additional constant
migration. Furthermore, there may be two big items that are
reinserted. Because of the choice of the items and invariant
\ref{DInvariant:BigItems}, they are inserted into $\BSbins$, and
because of the corresponding case no deletions of big items are
caused by these insertions. Hence, the migration in this case is
at most $\Oh((1/\eps)^4\cdot \log^2(1/\eps))$ (see Lemma
\ref{lem:dyn_migration_auxil}) plus the migration due to the
possible insertions of big items.

We consider the case that $i^*$ is big and has to be inserted.
As described in Lemma \ref{lem:static_migration_big} inserting a big item into $\BMbins$ may directly cause $\Oh(1)$ migration, and the insertion of a big item into $\BSbins$ or $\BBbins$.
Furthermore, the insertion of $i^*$ into $\BBbins$ may cause $\Oh(1)$ migration and either a deletion of a big item from $\BSbins$ or the reinsertion of a big item, which will be placed into $\BSbins$.
In both cases this causes no further insertions or deletions of big items.
Hence, in both cases the migration is essentially given by the migration for the insertion or deletion of a big item into $\BSbins$.

If a big item $i^*$ is inserted into $\BSbins$, we may perform one
call of $\groupinsert$, $\Oh(G) = \Oh(\log (1/\eps))$ calls of
$\grouppush$, $2^{G+\Oh(1)} = \Oh(1/\eps)$ calls of $\grouppull$
and some additional constant migration. Furthermore, two deletions
of big items from $\BBbins$ may be caused. Because of the
corresponding case no reinsertions of big items are caused by
these deletions. Hence, the migration in this case is at most
$\Oh((1/\eps)^4\cdot\log^2(1/\eps))$ (see Lemma
\ref{lem:dyn_migration_auxil}) plus the migration due to the
possible insertions of big items.

If $i^*$ is a medium item and has to be inserted, we have $\Oh(1)$
migration and the deletion or insertion of a big item may be
caused. In case of a deletion the corresponding item was
positioned in $\BSbins$. If the item $i^*$ is deleted, this may
cause $\Oh(1)$ migration as well as the reinsertion of a big item.
Hence, the migration in this case is at most
$\Oh((1/\eps)^4\cdot\log^2(1/\eps))$.

Summarizing, the overall size of items that are migrated due to
the insertion or deletion of a big or medium item $i^*$ is
$\Oh((1/\eps)^4\cdot\log^2(1/\eps))$.
\end{proof}

\subsection{Analysis}

In this section, we finish the proof of Theorem
\ref{thm:wc_and_dynamic_alg}. We already proved bounds on the
migration that may be caused due to insertions and deletions of
algorithms (see Lemmas
\ref{lem:dyn_in_small_migration},\ref{lem:dyn_del_small_migration}
and \ref{lem:dyn_migration_big_med}). Note that medium items may
cause the migration of items with overall size
$\Oh((1/\eps)^4\cdot\log^2(1/\eps))$ but may have size $\eps$
themselves yielding the bound of
$\Oh((1/\eps)^5\cdot\log^2(1/\eps))$ stated in Theorem
\ref{thm:wc_and_dynamic_alg}. Furthermore, we argued that the
invariants are maintained by the insertion and deletion procedures
throughout the section. Using these invariants, we prove the
stated asymptotic competitive ratio and additive constant.

\begin{lemma}\label{lem:dynamic:ratio}
The algorithm for the dynamic case has an asymptotic competitive ratio of $1.5 + \eps$ with additive constant $\Oh(\log 1/\eps)$.
\end{lemma}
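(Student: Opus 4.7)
The plan is to adapt the three-case analysis used in the static proof (Lemma \ref{lem:static:ratio}) to the dynamic setting, splitting again on whether $\BSwaitbins$ and $\BBbins$ are empty. The new ingredients to absorb are the group buffer bins $\Gbuffer$, the sequential-chain buffer bins in $\Sbins \setminus \Sfullbins$, and the strengthened well-covered bound for bins in $\BSfullbins$ and $\Sfullbins$. The main conceptual tool is still a load bound of the form $s(I) \leq \sum_{T} w_T |T|$ over bin types $T$, combined with invariant \ref{DInvariant:BB=BS} to trade $0.5|\BBbins|$ against $0.5|\BSfullbins|$.

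In the easy Case 1 ($\BSwaitbins = \emptyset$), I write the same static weighted sum and add two slack contributions. First, by invariant \ref{DInvariants:Chainstructure} there are at most $\eps|\Sfullbins|+1$ sequential buffer bins, each of size below $1+\eps$. Second, although a group buffer bin can be overpacked, each of the $C_g = \ceil{\eps |\group_g|}$ virtual parallel-chain buffers it simulates has a virtual big item of size above $0.5$ and is at most well-covered, which forces the real small-item load to be below $1$ per virtual buffer (barely covered with largest small $m\leq \eps$ and virtual big $s(b)>0.5$ gives $\ell < 1-s(b)+m < 1$). Summing over the groups bounds the total $\Gbuffer$-load by $\sum_g C_g \leq \eps|\BSbins| + G = \eps|\BSbins| + \Oh(\log 1/\eps)$. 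Both slack terms are of the form $\Oh(\eps)\cdot\Alg(I) + \Oh(\log 1/\eps)$, so the static calculation yields $\Opt(I) \leq (1.5 + \Oh(\eps))\Alg(I) + \Oh(\log 1/\eps)$, and rescaling $\eps$ gives the stated ratio. Case 2 ($\BSwaitbins \neq \emptyset$, $\BBbins = \emptyset$) is immediate: invariants \ref{DInvariant:BB=BS} and \ref{DInvariant:SmallBins} force $|\BSbins| = |\BSwaitbins| = 1$, $\Sbins = \emptyset$, and $|\Gbuffer| = \Oh(1)$, so the direct size bound $\Opt(I) \leq s(I) < 2 + 1.5|\BMbins| + 1.5|\Mfullbins| + \Oh(1) \leq 1.5\Alg(I) + \Oh(1)$ suffices.

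For the hard Case 3 ($\BSwaitbins, \BBbins \neq \emptyset$) I import the modified-instance construction from the static proof essentially unchanged. With $\xi = s_{\max}(\BSwaitbins) > 0.5$, I round every big item to $\xi$ (splitting those that are larger), round every special (largest non-big) item in each $\BSfullbins \cup \BMbins$-bin up to $0.5$, and call the resulting classes $X$, $Y$, $Z$. Because invariants \ref{DInvariant:MediumBins}, \ref{DInvariant:BigItems}, and \ref{DInvariant:MediumoneUncovered} carry over verbatim and the well-covered property is precisely what the static argument needs from $\BSfullbins$-bins, each $\BSfullbins \cup \BMbins$-bin contributes at most $1-\xi$ to $s(Z)$, the unique uncovered $\Mbins$-bin contributes below $1-\xi$, and the single $\BSwaitbins$-bin containing small items contributes below $1-\xi$. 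The new contributions of $\Sbuffer$ and $\Gbuffer$ to $s(Z)$ are bounded by the Case~1 calculation and add only $\Oh(\eps)\cdot\Alg(I) + \Oh(\log 1/\eps)$. Applying the counting identity $2k_2+k_1 = |X\cup Y|$ together with $(1-\xi)k_1 + k_0 \leq s(Z)$ then yields $2\Opt(I) \leq 3\Alg(I) + \Oh(\log 1/\eps)$. The principal technical obstacle throughout is exactly the $\Gbuffer$-load bound: a group buffer bin can in principle be heavily overpacked, and it is only the interplay between the virtual big items (size above $0.5$) and the at-most-well-covered invariant on each virtual parallel-chain buffer that forces the total load to be $\Oh(\eps|\BSbins|) + \Oh(\log 1/\eps)$, which in turn is what keeps the additive constant at $\Oh(\log 1/\eps)$ instead of polynomial in $1/\eps$.
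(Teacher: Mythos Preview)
Your Cases~1 and~2 are essentially the paper's argument; the size bound for $\Gbuffer$ is slightly looser than the paper's $(0.5+\eps)(\eps|\BSfullbins|+G)$ but still adequate.

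Case~3, however, has a real gap. You import the static construction verbatim, in particular the claim that ``the single $\BSwaitbins$-bin containing small items contributes below $1-\xi$''. That claim relied on the static invariant \ref{Invariant:BSPonlyonewithsmall}, which has no analogue in the dynamic algorithm: in the dynamic setting there can be \emph{many} bins in $\BSwaitbins$ carrying small items---in fact, up to an entire group's worth. This is precisely why the dynamic algorithm maintains the geometric group structure and invariant \ref{DInvariants:wellpacked_groups}: there is a distinguished group $\group_{g^*}$ such that all earlier groups are well-covered and all later groups carry no small items, but $\group_{g^*}$ itself may contain many uncovered bins with small items. The paper therefore sets $\xi = s_{\min}(\fbig(\group_{g^*}))$ (not $s_{\max}(\BSwaitbins)$) so that every bin in $\group_{g^*}$ contributes at most $1-\xi$ to $s(Z)$, and then absorbs the resulting $|\group_{g^*}|$ term using the group-size inequality $|\group_{g^*}| \le \ceil{\eps|\BSbins|} + \sum_{g'<g^*}|\group_{g'}| \le \ceil{\eps|\BSbins|} + |\BSfullbins|$ from invariant \ref{DInvariants:Groupstructure}. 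This is the step that turns a potentially $\Theta(|\BSbins|)$ contribution into $(2+\eps)|\BSfullbins| + \eps|\BSbins| + \Oh(\log 1/\eps)$ and makes the final inequality $2\Opt(I) \le (3+2\eps)\Alg(I) + 2\log(1/\eps) + 8$ go through.

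Your choice $\xi = s_{\max}(\BSwaitbins)$ also fails at a technical level: a bin $B\in\group_{g^*}\cap\BSwaitbins$ may have a big item strictly smaller than $\xi$, in which case its small items can have total size up to $1-s(\fbig(B)) > 1-\xi$, and your per-bin bound on the $Z$-contribution breaks. To repair the argument you must (i) switch to $\xi = s_{\min}(\fbig(\group_{g^*}))$, (ii) extend the $Y$-set to include the well-covered virtual buffer bins, and (iii) replace the ``single $\BSwaitbins$ bin'' term by $|\group_{g^*}|$ and then invoke the geometric group bound.
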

\begin{proof}
First, we consider the case $\BSwaitbins = \emptyset$.
In this case, the claim holds because the bins on average have not too much excess size.
More precisely, we obviously have $\Opt(I) \leq s(I)$, and the invariants \ref{DInvariant:Binstructure} and \ref{DInvariant:MediumoneUncovered} imply:
\begin{align*}
s(I) &= s(\BBbins) + s(\BSbins) + s(\Gbuffer) + s(\BMbins) + s(\Mbins) + s(\Sbins)\\
& < 2|\BBbins| + (1+\eps)|\BSfullbins| + s(\Gbuffer) + 1.5
|\BMbins| + 1.5|\Mfullbins| + 1 + (1+\eps)|\Sfullbins| +
s(\Sbins\setminus\Sfullbins) .
\end{align*}
Furthermore, we have $s(\Gbuffer) < (1/2 + \eps)(\eps |\BSfullbins| + G)\leq \eps |\BSfullbins| + G$, due to \ref{DInvariants:Groupstructure} and \ref{DInvariant:parallel_chain_structure}; $s(\Sbins\setminus\Sfullbins) < \eps |\Sfullbins| + 1$, due to \ref{DInvariants:Chainstructure}; $(0.5-\eps)|\BBbins|\leq (0.5-\eps)(|\BSbins| + 1)$, due to \ref{DInvariant:BB=BS}; and $|\BSbins|=|\BSfullbins|$, because of the case we are considering.
Hence:
\begin{align*}
\Opt(I) &< (1.5 + \eps)|\BBbins| + (1.5 + \eps)|\BSfullbins| + 1.5|\BMbins| + 1.5|\Mfullbins| + (1 + 2\eps)|\Sfullbins| + G + 3\\
& \leq (1.5 + \eps)(|\BBbins| + |\BSfullbins| + |\BMbins| + |\Mfullbins| + |\Sfullbins|) + \log(1/\eps) + 5\\
& =  (1.5 + \eps)\Alg(I) + \log(1/\eps) + 5 .
\end{align*}

A similar argument holds, if $\BSwaitbins \neq \emptyset$ and $\BBbins = \emptyset$.
In this case, we have $\BSbins=\BSwaitbins$ and $|\BSwaitbins| = 1$, because of invariant \ref{DInvariant:BB=BS}; $\Gbuffer= \emptyset$, because of \ref{DInvariants:Groupstructure} and \ref{DInvariant:parallel_chain_structure}; and $\Sbins = \emptyset$, because of invariant \ref{DInvariant:SmallBins}.
Hence:
\[\Opt(I) \leq s(I) < 1.5 |\BMbins| + 1.5|\Mfullbins| + 2 = 1.5 \Alg(I) +
2  . \]

Next, we consider the case $\BSwaitbins \neq \emptyset$ and $\BBbins \neq \emptyset$.
In this case, we have $\Mfullbins = \emptyset$, because of Invariant \ref{DInvariant:MediumBins}, and $\Sbins = \emptyset$, because of Invariant \ref{DInvariant:SmallBins}.
According to invariant \ref{DInvariants:wellpacked_groups}, there is a $g^*\in [G+1]$ such that each bin in each group preceding $\group_{g^*}$ is well-covered and each succeeding group does not contain small items, and because $\BSwaitbins \neq \emptyset$, we have $g^*< G+1$.
Let $\xi=s_{\min}(\fbig(\group_{g^*}))$ be the size of a big item from $\group_{g^*}$ with minimal size.
Because of the invariants \ref{DInvariant:BigItems} and \ref{DIvariants:distribution_big_small_groups}, the items in $\BBbins$ and each group succeeding $\group_{g^*}$ are upper bounded in size by $\xi$.
We construct a modified instance $I^*$ as follows:
\begin{enumerate}
\item The remaining big items (the ones bigger than $\xi$) are
split into a big item of size $\xi$ and a medium or small item.
Let $X$ be the set of items with size $\xi$. \item For each bin
from
$\BSfullbins\cup\BMbins\cup\sett{B}{B\in\bigcup_{g\in[G]}\VGbuffer_g,
B\text{ is well-covered}}$, we increase the size of a biggest item
that is not big to $0.5$. Let $Y$ be the set of these items and
$Z$ be the set of the remaining items not belonging to $X$ or $Y$.
\end{enumerate}
Note that $\Opt(I)\leq\Opt(I^*)$.
For some optimal solution for $I^*$ in which each bin is at most barely covered, let $k_2$, $k_1$ and $k_0$ be the numbers of covered bins with $2$, $1$ or $0$ items from $X\cup Y$ respectively.
We have:
\[2k_2 + k_1 = |X\cup Y| \overset{\text{\ref{DInvariant:parallel_chain_structure}}}{\leq} (2|\BBbins| + |\BSbins| + |\BMbins|) + (|\BMbins| + |\BSfullbins| + \eps|\BSfullbins| + G) . \]
Furthermore, $(1-\xi)k_1 + k_0 \leq s(Z)$ and, due to \ref{DInvariant:parallel_chain_structure}, \ref{DInvariants:Groupstructure}, \ref{DInvariant:MediumBins}, \ref{DInvariants:wellpacked_groups} and the choice of $\xi$, we have:
\begin{align*}
s(Z) &\leq (1-\xi)(|\BMbins| + |\BSfullbins| + |\group_{g^*}| + |\sett{B\in\bigcup_{g\in[G]}\VGbuffer_g}{\fsmall(B)\neq \emptyset}|) + s(\Mbins)\\
& \leq (1-\xi)(|\BMbins| + |\BSfullbins| + |\group_{g^*}| +\eps |\BSfullbins| + G) + (1 - s_{\max}(\BSbins))\\
& \leq (1-\xi)(|\BMbins| + (1+\eps)|\BSfullbins| + \ceil{\eps|\BSbins|} + \sum_{g'= 1}^{g^*-1} |\group_{g'}| + G + 1)\\
& \leq (1-\xi)(|\BMbins| + (1+\eps)|\BSfullbins| + \ceil{\eps|\BSbins|} + |\BSfullbins| + G + 1)\\
&\leq (1-\xi)(|\BMbins| + (2+\eps)|\BSfullbins| + \eps|\BSbins| +
G + 2) .
\end{align*}
Hence:
\begin{align*}
2\Opt(I)   &\leq 2(k_2 + k_1 + k_0)\\
&\leq (2k_2 + k_1) + (k_1 + (1-\xi)^{-1}k_0)\\
&\leq  (2|\BBbins| + |\BSbins| + |\BMbins|) + (|\BMbins| + |\BSfullbins| + \eps|\BSfullbins| + G)\\
&\quad +(|\BMbins| + (2+\eps)|\BSfullbins| + \eps|\BSbins| + G + 2)\\
& = 2|\BBbins| + (1+\eps)|\BSbins| + 3|\BMbins| + (3+2\eps)|\BSfullbins| + 2G + 2\\
&\overset{\text{\ref{DInvariant:BB=BS}}}{\leq} 2|\BBbins| + (1+\eps)(|\BBbins| + 1) + 3|\BMbins| + (3+2\eps)|\BSfullbins| + 2G + 2\\
& \overset{\text{\ref{DInvariants:Groupstructure}}}{\leq} (3+\eps)|\BBbins| + 3|\BMbins| + (3+2\eps)|\BSfullbins| + 2\log(1/\eps) + 8\\
& \leq (3+2\eps)\Alg(I) + 2\log(1/\eps) + 8 . \qedhere
\end{align*}
\end{proof}
Lastly, we conclude the proof of Theorem \ref{thm:wc_and_dynamic_alg} by arguing that the running time bound holds.
\begin{lemma}
The insertion and deletion procedures in the dynamic case have polynomial running time in both the input size and $1/\eps$.
\end{lemma}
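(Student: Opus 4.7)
The plan is to argue that each insertion or deletion triggers only polynomially many elementary operations, where by an elementary operation we mean moving a single item between two bins, creating or deleting a bin, or scanning through the current set of bins to locate a specified one (e.g., the bin containing a big item of minimum/maximum size, or the first bin of a chain whose biggest small item is smaller than some threshold). Each elementary operation can clearly be executed in time polynomial in the input size $n$ and $1/\eps$ by maintaining the bins together with suitable sorted data structures over the items they contain.

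First, I would bound the number of item movements per top-level insertion or deletion. This is essentially already done: Lemmas~\ref{lem:dyn_in_small_migration}, \ref{lem:dyn_del_small_migration} and \ref{lem:dyn_migration_big_med} bound the total size of migrated items, but inspection of their proofs in fact bounds the number of item movements by the same expression $\Oh((1/\eps)^{4}\cdot \log^{2}(1/\eps))$ (for big/medium items) or $\Oh((1/\eps)^{3}\cdot \log^{2}(1/\eps))$ (for small items), since each item movement is charged to an item of size at most the arriving/departing item and at least one item is moved per unit of work. Thus the overall number of item movements per arrival or departure is polynomial in $1/\eps$.

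Next, I would argue that around each such movement, the auxiliary bookkeeping costs only polynomial overhead. Every call of $\chainpush$, $\chainpull$, $\gapfill$, $\groupinsert$, $\groupdelete$, $\grouppush$, $\grouppull$, as well as the top-level insertion/deletion routines for small, medium, and big items, performs a constant number of bin lookups, comparisons of item sizes, and size recomputations for the affected bins, and each chain has length $\Oh(1/\eps)$ and each group contains at most $|\BSbins|\leq n$ bins. Maintaining (i) the chains and groups as doubly linked lists of bins, (ii) each bin together with its items sorted by size, and (iii) global indices giving, for each class of items, the bins containing the current minimum/maximum big or small item, lets each such lookup be handled in time $\poly(n,1/\eps)$. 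The number of such calls per arrival/departure is also polynomial by the same counting as in the preceding paragraph (each auxiliary call either moves at least one item or occurs at most $\Oh(G) = \Oh(\log 1/\eps)$ times per call to $\grouppush/\grouppull$, and these themselves are called $\Oh(2^{G}) = \Oh(1/\eps)$ times).

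Finally, one has to check that updates to the group buffer bins $\Gbuffer$, including the fractional splitting of small items among virtual per-chain buffers, can be performed in polynomial time: within one group there are at most $\Oh(1/\eps)$ virtual buffers and $\Oh(1/\eps)$ item pieces (since only one item per group may be fractionally split across virtual buffers at a time by construction), so recomputing them after each $\chainpush$/$\chainpull$ touching a buffer costs $\poly(1/\eps)$ time. Combining the three points above, every arrival or departure is processed in time polynomial in $n$ and $1/\eps$. The main subtle point, and the step I would write out most carefully, is step two: verifying that the auxiliary procedures' recursive structure really does terminate after polynomially many calls despite $\grouppush$/$\grouppull$ themselves invoking $\groupinsert$/$\groupdelete$ and hence $\gapfill$, which is handled by observing that recursion depth is bounded by $G = \Oh(\log 1/\eps)$ and branching at each level is bounded by the group-size doubling bound.
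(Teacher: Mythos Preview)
Your first step contains a genuine gap. You claim that the migration lemmas (Lemmas~\ref{lem:dyn_in_small_migration}, \ref{lem:dyn_del_small_migration}, \ref{lem:dyn_migration_big_med}) implicitly bound the \emph{number} of item movements by $\Oh((1/\eps)^4\log^2(1/\eps))$, arguing that ``each item movement is charged to an item of size at most the arriving/departing item.'' This is false: when $\chainpush$ or $\chainpull$ acts on a bin, the set pushed out or pulled in may consist of arbitrarily many items each of size arbitrarily smaller than $s(i^*)$. The migration lemmas bound only total \emph{size}, and size divided by minimum item size can be as large as $n$, so no polynomial-in-$1/\eps$ bound on the count of movements follows.

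The paper sidesteps this by not attempting to bound the number of movements at all. Instead it observes that (i) for small items, the recursion across groups has depth $\Oh(G)$ with branching $\Oh(2^G)$ (insertion) or $\Oh(4^G)$ ($\gapfill$), giving $\poly(1/\eps)$ many calls to $\chainpush$/$\chainpull$, each of which touches $\Oh(1/\eps)$ bins and spends at most linear-in-$n$ time per bin (simply because there are at most $n$ items to scan); and (ii) for big/medium items the migration analysis already shows only a constant number of recursive big/medium insertions or deletions is triggered, each of which may in turn remove and reinsert at most $n$ small items. This yields the polynomial bound directly without ever counting individual moves. Your steps two and three are broadly in the same spirit, but step two is where the actual argument lives, not step one.

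A minor secondary issue: your claim that each group has $\Oh(1/\eps)$ virtual buffers is also incorrect; the number of parallel chains in $\group_g$ is $C_g = \ceil{\eps|\group_g|}$, which can be of order $\eps\,|\BSbins|$ and hence depend on $n$. This does not break the polynomial bound, but it does mean your step-three estimate needs to be stated in terms of $\poly(n,1/\eps)$ rather than $\poly(1/\eps)$ alone.
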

\begin{proof}
We first consider the insertion or deletion of small items. When
one of these procedures is called, there may be $\Oh(1/\eps)$ many
recursive calls. For each of these calls it is rather ease to see
that the running time is linear in the number of items. Concerning
the insertion or deletion of big or medium items, we already saw
in the migration analysis (Lemma \ref{lem:dyn_migration_big_med})
that only a constant number of recursive calls of big and medium
items may be caused. Furthermore, small items (and there are at most $n$ such items) may removed or reinserted
due to these calls.
\end{proof}

\section{Amortized Migration}
We will now take a look at the setting, where the migration is amortized.
Here we are concerned with the maximum ratio of the migration performed
divided by the total size of items that have arrived up to some point.
This is easier than the standard non-amortized migration considered before.
\subsection{Dynamic Case}
For the dynamic case, we will show that the lower bound of $3/2$ on the
competitive ratio also holds if amortized migration is allowed. Hence, our
above algorithm that only uses non-amortized migration is optimal (except for an
extra $\eps$). This result can
also be proved for the other definition of asymptotic competitive
ratio.

\begin{proposition}
  \label{thm:ac_and_dynamic_alg}
There is no algorithm for dynamic online bin covering with a constant amortized
migration factor $\beta$ and an asymptotic competitive ratio smaller than $3/2$.
\end{proposition}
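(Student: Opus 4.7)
The plan is to extend the two-phase construction of Proposition~\ref{thm:wc_and_static_lb} to an instance with $2T+1$ stages, in which a batch of small items is repeatedly inserted and then removed, forcing the algorithm to alternate between two structurally different packings; we then charge the resulting migration against the amortized budget.

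Suppose for contradiction that $\Alg$ has amortized migration factor $\beta$ and asymptotic competitive ratio $R < 3/2$ with additive constant $c$. We define the adversarial instance as follows: in stage~$1$, inject $6N$ big items of size $1-\eps$; then for each $t = 1,\dots,T$, stage $2t$ injects $6N$ small items of size $\eps$ and stage $2t+1$ removes exactly those $6N$ small items. Here $T$, $\eps$, and $N$ will be chosen (in that order) as functions of $\beta,R,c$. Exactly as in the static proof, $\Opt = 3N$ in every odd stage $k \geq 3$ and $\Opt = 6N$ in every even stage $k \geq 2$.

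The key quantity to track is $X_k$, the number of bins of $\Alg$ containing at least two big items after stage~$k$. In every odd stage $k \geq 3$ every covered bin must contain $\geq 2$ bigs (since each big has size $<1$), so $X_k \geq (3N-c)/R$. In every even stage $k \geq 2$, a bin covered exclusively by small items requires at least $1/\eps$ small items (of which only $6N$ exist), so at most $6N\eps$ such bins appear; combined with the constraint $2X_k + Y_k \leq 6N$ (where $Y_k$ is the number of covered bins with exactly one big item) and the guarantee $\Alg \geq (6N-c)/R$, this gives $X_k \leq 6N(1-1/R) + c/R + 6N\eps$. Since $R<3/2$ makes $9/R-6>0$, for $\eps$ small and $N$ large the gap
\[
  \delta := (3N-c)/R - \bigl[6N(1-1/R) + c/R + 6N\eps\bigr]
\]
is of order $\Omega(N)$, hence $|X_{k+1}-X_k|\geq\delta$ in each of the $2T-1$ transitions among stages $2,3,\dots,2T+1$.

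The final step is to compare forced migration against the amortized budget. Each unit of change in $X_k$ forces at least one big item to be moved (a newly formed pair-of-bigs bin receives at least one big, and a dismantled one releases at least one big), so $\Alg$ performs total migration at least $(2T-1)\delta(1-\eps)=\Omega(TN)$. The total arrival size is $S = 6N(1-\eps)+6NT\eps$, so the amortized budget is $\beta S = \Oh(\beta N)$ provided $\eps \leq 1/T$. Combining these inequalities yields $T = \Oh(\beta/(9/R-6))$, a constant depending only on $\beta$ and $R$; choosing $T$ strictly larger than this threshold produces the contradiction. The main obstacle we anticipate is the upper bound on $X_k$ in the even stages, which requires controlling bins that may be covered exclusively by small items; as indicated, the $\Omega(1/\eps)$-smalls-per-bin observation will be crucial.
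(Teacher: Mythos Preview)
Your proof is correct and follows the same adversarial construction as the paper: big items arrive once, then a batch of small items is repeatedly inserted and removed, forcing the algorithm to restructure its packing of big items; the resulting forced migration is then compared against the amortized budget.

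The analysis differs in one respect worth noting. The paper extracts only that \emph{at least one} big item must move between consecutive phases (it checks that having more than $N$ bins with two bigs and more than $N$ bins with one big is simultaneously impossible with $3N$ bigs), and compensates by choosing $\eps = \Theta(1/(\beta N))$ so that each small-item phase contributes negligible budget; this in turn requires $m = \Omega(N\beta/\eps)$ phases. Your argument tracks $X_k$ explicitly and shows it must oscillate by $\delta = \Omega(N)$ per transition, so only $T = \Oh(\beta/(9/R-6))$ phases are needed and $\eps$ can be chosen independent of $N$. Both routes work; yours is a bit tighter and makes the parameter dependencies cleaner. The step you flagged as the main obstacle---bounding bins covered purely by smalls via the size argument $6N\eps$---is handled correctly and is indeed what makes the upper bound on $X_k$ go through.
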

\begin{proof}
Let $\Alg$ be an algorithm for dynamic online bin covering with an amortized
migration factor of $\beta$. Suppose that $\Alg$ has asymptotic
competitive ratio $3/2-\delta$ for some constant value $1/2 \geq \delta > 0$, \ie for each instance
we have $\OPT(I)\leq (3/2-\delta)\cdot \Alg(I)+c$ for some constant $c$. We will now construct for each even integer $N$
 an instance $I^{(N)}$. This instance is comprised of $m + 1$ phases, where $m$
  is an even number  that is specified later on and depends on $N$ and $\beta$. In
  each of the phases some number of equally sized items either arrives or
  departs. The instance corresponding to the first phase is called $I^{(N)}_0$
  and the one corresponding to the first $j + 1$ phases is called $I^{(N)}_j$.
  The optimal (offline) objective value for each of these instances will be a
  multiple of $N$, \ie $\OPT(I^{(N)}_{j})\in \{3N, (3/2)N\}$. We will show that the ratio between this value and the objective
  value achieved by $\Alg$ is at least $3/2$ for at least one value of $j\in
  \{0,\ldots,m\}$ with constant amortized migration factor.

We now describe the instance $I^{(N)}$: In the first phase $3N$ big items with size $1-\eps$ arrive; and for each odd $j\in[m]$ there are $3N$ small items with size $\eps$ arriving in phase $j$ and departing in phase $j+1$.
We choose the parameter $\eps$ such that $3N$ small items cannot fill a whole
bin, and such that the total size of
$3N$ small items is upper bounded by the size of a big item, e.g., $\eps =
\min\set{(9\beta N + 2)^{-1}, (6N)^{-1}}$. Note that for each $j\in\set{0,\dots, m}$ the optimal (offline) objective value
for $I^{(N)}_{j}$ is $3N/2$ if $j$ is even, and $3N$ if $j$ is odd (see
Fig.~\ref{fig:lb_dynamic}).

\begin{claim}
  For each even integer $N$, there is some index $j\in \{1,\ldots,m\}$ such
  that $(3/2)\Alg(I^{(N)}_{j})\leq \OPT(I^{(N)}_{j})$.
\end{claim}

\begin{proof}
If $\Alg(I^{(N)}_{j})\leq N$ for some even $j$ or
$\Alg(I^{(N)}_{j}) \leq 2N$ for some odd $j$, we are done. Hence,
let $\mathfrak{P}$ denote the property that $\Alg(I^{(N)}_{j}) >
N$ for all even $j$ and $\Alg(I^{(N)}_{j}) > 2N$ for all odd $j$,
and we analyze whether $\mathfrak{P}$ can be obtained by $\Alg$.
Since two big items are needed to cover a bin and $3N$ small items
are not sufficient for  covering a bin, $\mathfrak{P}$ implies
that at least $N+1$ bins have to contain at least $2$ big items,
if $j$ is even; and at least $N+1$ bins have to contain exactly
one big item, if $j$ is odd.

Hence, to maintain that $\mathfrak{P}$ holds in two consecutive
phases, some big item has to be moved, and therefore the migration
performed in each phase is at least $1-\eps$. However, the total
size of items arriving or departing in every phase
$I^{(N)}_1,I^{(N)}_2,\dotsc$ is only
\begin{align*}
3N\eps \le 3N / (9\beta N + 2) < 3N/(9\beta N) = 1/(3\beta)\leq  (1-2\eps)/\beta.
\end{align*}
In the first inequality, we use $\eps \leq (9\beta N +2)^{-1}$. In
the last inequality, we use $N\geq 1$ implying $\eps \leq 1/6\leq
1/3$ and thus $(1-2\eps)\geq 1/3$.

Hence, the total size of migrated items is at least $m(1-\eps)$
and the total load of items arriving or departing is strictly
smaller than $(3N)(1-\eps)+m(1-2\eps)/\beta$. As $\Alg$ has
amortized migration factor $\beta$, it is allowed to reassign a
total size of $\beta S$, after items of size $S$ have arrived or
departed. The total size of arrived and departed items equals
$(3N)(1-\eps)+m(1-2\eps)/\beta$ in our setting. Hence, $\Alg$ is
allowed to migrate a total size $(3N\beta)(1-\eps)+m(1-2\eps)$.
Choosing $m$~--~the number of phases~--~such that
$m>(3N\beta)(1-\eps)/\eps$ then gives
$(3N\beta)(1-\eps)+m(1-2\eps) < m(1-\eps)$. Hence $\Alg$ can not
migrate a total size of $m(1-\eps)$. As it  needs to migrate at
least such a total size to maintain property $\mathfrak{P}$, we
can conclude that $\mathfrak{P}$ can not hold. Hence, there is
some index $j\in \{1,\ldots,m\}$ such that
$\Opt(I^{(N)}_{j})/\Alg(I^{(N)}_{j})\geq 3/2$, concluding our
proof.
\end{proof}

This is sufficient to prove the proposition: From above, we
know that for each even $N$, there is some value $j$ such that
$(3/2)\cdot \Alg(I^{(N)}_{j})\leq \Opt(I^{(N)}_{j})$. The
asymptotic competitive ratio of $\Alg$ then implies
$\Opt(I^{(N)}_{j}) \leq (3/2-\delta)\cdot \Alg(I^{(N)}_{j})+c$ and
thus
\begin{align*}
  (3/2)\cdot \Alg(I^{(N)}_{j}) \leq (3/2-\delta)\cdot \Alg(I^{(N)}_{j})+c.
\end{align*}
This is equivalent to $c/\delta \geq  \Alg(I^{(N)}_{j})$. As
$\Alg$ has asymptotic competitive ratio $3/2-\delta$ and
$\Opt(I^{(N)}_{j})=k_{j}N$, we have $c/\delta \geq
\Alg(I^{(N)}_{j})\geq \frac{k_{j}N-c}{3/2-\delta}$ and thus
$c(3/2-\delta)/\delta +c\geq k_{j}N$. As $N$ can be made
arbitrarily large, this is a contradiction to the assumption that
both $c$ and $\delta$ are constant values. Hence, an asymptotic
competitive ratio of $3/2-\delta$ is not achievable with constant
amortized migration factor $\beta$.

\begin{figure}[h]
  \centering
\begin{tikzpicture}
  \drawBin{2}{}{b1}
  \drawBin{4.5}{}{b2}
  \node at ($ (b1)!0.5!(b2)$) {\ldots};
  \addToBin{2}{0}{.85}{\tikz\node[fill=white,inner sep=.5]{\small $1-\eps$};}{fill=big}{}
  \addToBin{2}{.85}{.85}{\tikz\node[fill=white,inner sep=.5]{\small $1-\eps$};}{fill=big}{}
  \addToBin{4.5}{0}{.85}{\tikz\node[fill=white,inner sep=.5]{\small $1-\eps$};}{fill=big}{}
  \addToBin{4.5}{.85}{.85}{\tikz\node[fill=white, inner sep=.5]{\small $1-\eps$};}{fill=big}{}
  \node[yshift=-2.4cm] at ($ (b1)!0.5!(b2)$) {$\Opt(I_{2i})=3/2 N$};
  \drawBin{10}{}{b3}
  \drawBin{12.5}{}{b4}
  \node at ($ (b3)!0.5!(b4)$) {\ldots};
  \addToBin{10}{0}{.85}{\tikz\node[fill=white, inner sep=.5]{\small $1-\eps$};}{fill=big}{}
  \addToBin{10}{.85}{.15}{\small $\eps$}{fill=small}{}
  \addToBin{12.5}{0}{.85}{\tikz\node[fill=white, inner sep=.5]{\small $1-\eps$};}{fill=big}{}
  \addToBin{12.5}{.85}{.15}{\small $\eps$}{fill=small}{}
  \node[yshift=-2.4cm] at ($ (b3)!0.5!(b4)$) {$\Opt(I_{2i+1})=3N$};

\end{tikzpicture}

  \caption{Optimal packings of $I_{2i}$ and $I_{2i+1}$}
  \label{fig:lb_dynamic}
\end{figure}
\end{proof}

\subsection{Static Case}
In contrast to the dynamic case, where amortized migration does
not help to improve upon the lower bound of $3/2$ on the
competitive ratio, amortization allows to design a simple
algorithm for the static case, achieving a competitive ratio of
$1+\eps$ with amortized migration of $\Oh(1/\eps)$. Some of the
ideas applied here were used for bin packing
\cite{DBLP:conf/icalp/FeldkordFGGKRW18} in the past. The small
value $\eps$ will again satisfy $0<\eps \leq 0.5$.

In order to achieve this goal, we make use of an asymptotic fully
polynomial time approximation scheme (AFPTAS), denoted by
$\Alg_{\off}$, for the offline version of the problem due to
Jansen and Solis-Oba \cite{jansen2003bincovering}. For an instance
$I$ and an approximation parameter $\eps$, the algorithm
$\Alg_{\off}$ produces a packing that covers $\Alg_{\off}(I,\eps)$
bins with $ \Opt(I)\leq
(1+\eps)\Alg_{\off}(I,\eps)+\Oh(1/\eps^{3})$ in time that is
polynomial in the input size and $\frac{1}{\eps}$. Let $\mu \geq
1$ be a constant such that
\begin{align}
\label{eq:dynamic_amortized_offline} \Opt(I)\leq
(1+\eps)\Alg_{\off}(I,\eps)+\frac{\mu}{\eps^{3}}
\end{align}
 holds for every
input $I$.

Our algorithm will be called $\Alg_{\on}$ and its profit for input
$I$ and $\eps$ is denoted by $\Alg_{\on}(I,\eps)$. Denote the
instance after $\tau$ items were presented, that is, at time
$\tau$, as $I_{\tau}$. Our online algorithm $\Alg_{\on}$ will have
the same guarantee as the offline algorithm in terms of the order
of growth of the additive term, \ie we will show
$\Opt(I_{\tau})\leq
(1+\Oh(\eps))\Alg_{\on}(I_{\tau},\eps)+\Oh(1/\eps^{3})$, and more
specifically, we will show that $\Opt(I_{\tau})\leq
(1+3\eps)\Alg_{\on}(I_{\tau},\eps)+\frac{6\mu}{\eps^{3}}$ for
every time $\tau$. There will be a sequence of special times
denoted by $t_1,t_2,\cdots$ for which it will hold that
$\Alg_{\on}(I_{t_i},\eps)=\Alg_{\off}(I_{t_i},\eps)$ and therefore
$\Opt(I_{t_i})\leq
(1+\eps)\Alg_{\on}(I_{t_i},\eps)+\frac{\mu}{\eps^{3}}$ will hold
for every $i\geq 1$.

Algorithm $\Alg_{\on}$ will apply $\Alg_{\off}$ (with the same
value of $\eps$) in every step, \ie when a new item arrives. This
is done in order to obtain an approximated value of the profit of
an optimal solution. As we saw before, arguments that are based on
total size are valid for approximating this value by a factor of
$2$, but not within a smaller factor. In the special times
$t_1,t_2,\ldots$ defined in what follows, the output of
$\Alg_{\off}$ will be used in order to obtain a new solution, and
the input is completely repacked such that it is packed as in this
solution ($\Alg_{\off}(I_{t_i},\eps)$).  In this case, all items
are removed from their bins and repacked identically to the output
of $\Alg_{\off}$. This is done instead of assigning the new item.
In other times which are not special, only the profit of
$\Alg_{\off}$ will be used, and the new item is packed into a new
bin.

Thus, at times that are not special, algorithm $\Alg_{\on}$ will
always greedily assign a new item to a new non-covered bin, and on
occasion it will define the current time as special and then it
completely \emph{repacks} the instance with the help of
$\Alg_{\off}$.  We define special times as follows. The first
special time $t_1$, which is the first time in which we repack the
instance using $\Alg_{\off}$ is when $\Alg_{\off}(I_{t_1}) \ge
3\mu/\eps^3 > 1$ (by $\eps \leq 0.5$ and $\mu\geq 1$). We have
$t_1>1$ since after the arrival of just one item no solution has
profit above $1$. Before time $t_1$, the profit of $\Alg_{\on}$ is
zero. The other special times are defined as follows. Let $t_i$
(for $i\geq 1$) be the last special time, \ie the last time that a
repacking was performed. For the current time $\tau>t_i$, if
$\Alg_{\off}(I_{\tau},\eps) < (1+\eps)\Alg_{\off}(I_{t_i},\eps)$,
the new item is packed into a new bin and $\tau$ is not a special
time. Otherwise, we let $t_{i+1}=\tau$, \ie $\tau$ is a special
time, and instead of packing the new item immediately, we perform
the next repacking. Thus, time $t_{i+1}>t_i$, is the first time
$\tau>t_i$ for which $\Alg_{\off}(I_{{\tau}},\eps)\geq
(1+\eps)\Alg_{\off}(I_{t_i},\eps)$, and thus
$\Alg_{\off}(I_{t_{i+1}},\eps)\geq
(1+\eps)\Alg_{\off}(I_{t_i},\eps)$ while
$\Alg_{\off}(I_{\theta},\eps)< (1+\eps)\Alg_{\off}(I_{t_i},\eps)$,
where $t_{i}+1 \leq  \theta \leq t_{i+1}-1$.

Clearly, the running time of our online algorithm is polynomial in
$t$ and $1/\eps$, since it is the sum of running times of
$\Alg_{\off}$ for all prefixes of the input. Before we discuss the
migration factor, we show that this algorithm always maintains a
near-optimal solution. We will use the property that holds at
special times, and analyze the other times.

At any time $\tau < t_1$ (before the first repacking), we have
that $\Alg_{\off}(I_{\tau}) < 3\mu /\eps^{3}$ while
$\Alg_{\on}(I_{\tau})  \geq 0$, and by
\eqref{eq:dynamic_amortized_offline} we get $\Opt(I_{\tau})\leq
(1+\eps)\Alg_{\off}(I_{\tau},\eps)+\frac{\mu}{\eps^{3}} \leq
(1+\eps)\cdot \frac{3\mu}{\eps^3}+\frac{\mu}{\eps^{3}} <
\frac{6\mu}{\eps^{3}} \leq
(1+3\eps)\Alg_{\on}(I_{\tau})+\frac{6\mu}{\eps^{3}}$. For every
special time $t_i$ we have $\Opt(I_{t_i})\leq
(1+\eps)\Alg_{\off}(I_{t_i},\eps)+\frac{\mu}{\eps^{3}}
=(1+\eps)\Alg_{\on}(I_{t_i},\eps)+\frac{\mu}{\eps^{3}} \leq
(1+3\eps)\Alg_{\on}(I_{t_i},\eps)+\frac{6\mu}{\eps^{3}}$, since at
any special time, once the repacking is applied, the packing of
$\Alg_{\on}$ is identical to that of $\Alg_{\off}$.

Now consider a time  $\tau$ satisfying $\tau > t_i$ for some $i
\geq 1$, and if $t_i$ is not the last special time, then we also
require $\tau<t_{i+1}$. Thus, no special times are defined at
times $t_i+1,\ldots,\tau$, and we have
\begin{align}
  \label{eq:dynamic_amortized_offtpp_vs_offt}
\Alg_{\off}(I_{\tau},\eps)< (1+\eps)\Alg_{\off}(I_{t_i},\eps).
\end{align}

Since all items $t_i+1,\ldots,\tau$ are assigned to new bins, and
covered bins remain such, we have $\Alg_{\on}(I_{\tau},\eps) \geq
\Alg_{\on}(I_{t_i},\eps)$. Due to the repacking at the special
time $t_i$, we find
$\Alg_{\on}(I_{t_i},\eps)=\Alg_{\off}(I_{t_i},\eps)$, and get
\begin{align}
  \label{eq:dynamic_amortized_ontpp_vs_offt}
  \Alg_{\on}(I_{\tau},\eps)\geq \Alg_{\on}(I_{t_i},\eps) = \Alg_{\off}(I_{t_i},\eps).
\end{align}

 Combining the last two inequalities
(\ref{eq:dynamic_amortized_offtpp_vs_offt}) and
(\ref{eq:dynamic_amortized_ontpp_vs_offt}) gives
\begin{align}
  \label{eq:dynamic_amortized_offtpp_vs_ontpp}
\Alg_{\off}(I_{\tau},\eps)\leq (1+\eps)\Alg_{\on}(I_{\tau},\eps).
\end{align}

By \eqref{eq:dynamic_amortized_offline} for time $\tau$,
$\Opt(I_{\tau})\leq
(1+\eps)\Alg_{\off}(I_{\tau},\eps)+\frac{\mu}{\eps^{3}}$, and we
get
\begin{align*}
  &\Opt(I_{\tau})\leq (1+\eps)\Alg_{\off}(I_{\tau},\eps)+\mu/\eps^{3}\leq  (1+\eps)^{2}\Alg_{\on}(I_{\tau},\eps)+\mu/\eps^{3}\leq\\
  &(1+3\eps)\Alg_{\on}(I_{\tau},\eps)+6\mu/\eps^{3},
\end{align*}
as $\eps^2 <\eps$.

Hence, $\Alg_{\on}$ achieves an asymptotic competitive ratio of
$1+3\eps$, and an appropriate scaling of $\eps$ allows us to
obtain competitive ratio $1+\eps$.

Finally, we will show that $\Alg_{\on}$ has an amortized migration
factor of $\Oh(1/\eps)$. Let $k$ be the number of special times
(which are $t_1,t_2,\ldots,t_k$). Let $U_i$ denote the optimal
profit at time $t_i$, which is a monotonically non-decreasing
sequence. Let $S_i$ denote the total size of the first $t_i$ input
items. The total size of migrated items is at most $\sum_{i=1}^k
S_i$, while the total size of items is at least $S_k$. Let
$\Delta_i=\Alg_{\off}(I_{t_i},\eps)$, where for every $1 \leq i
\leq k-1$ we have $\Delta_{i+1} \geq (1+\eps)\Delta_{i}$, and
additionally, $\Delta_1 \geq \frac{3\mu}{\eps^3} >1 $.

We claim that $U_i$ satisfies $\Delta_i \leq U_i < 3\Delta_i$. The
inequality $\Delta_i \leq U_i$ holds since $U_i$ is the optimal
profit at a certain time while $\Delta_i$ is the profit of some
solution for the same input. Since $U_i \geq U_1 \geq \Delta_1
\geq \frac{3\mu}{\eps^3}$ and by
\eqref{eq:dynamic_amortized_offline} we get $ U_i \leq
(1+\eps)\Delta_i+\frac{\mu}{\eps^{3}} < 2\Delta_i+ U_i/3$, which
implies $U_i < 3\Delta_i$.

Moreover, there exists an optimal solution where every bin has
load in $[1,2)$, and there is at most one bin that is not covered.
Thus, $U_i \leq S_i  <2U_i+1 < 3U_i$.

We find by $\Delta_k \leq U_k \leq S_k$:
$$\sum_{i=1}^k S_i < 3 \sum_{i=1}^k U_i \leq 9\sum_{i=1}^k
\Delta_i .$$ By
$$\sum_{j=0}^{\infty} \frac
1{(1+\eps)^j}=\frac{1}{1-1/(1+\eps)}=\frac{1+\eps}{\eps}=1+\frac{1}{\eps},$$

and by
$$\Delta_i \leq \frac{\Delta_k}{(1+\eps)^{k-i}},$$ we find $$ \frac
19 \cdot \sum_{i=1}^k S_i \leq \Delta_k \cdot \sum_{i=1}^{k} \frac
1 {(1+\eps)^{k-i}} = \Delta_k \cdot \sum_{j=0}^{k-1} \frac
1{(1+\eps)^j} \leq S_k \sum_{j=0}^{\infty} \frac 1{(1+\eps)^j}
=S_k \cdot (1+\frac{1}{\eps}) .$$

This analysis holds also for every prefix of the input if the
input was repacked at least once (and otherwise the amortized
migration factor is zero). Hence, the amortized migration factor
of $\Alg_{\on}$ is at most $9+\frac{9}{\eps}=\Oh(1/\eps)$ and we
obtain the following theorem.
\begin{theorem}
  \label{thm:ac_and_static_alg}
  For every $\eps > 0$, there is an algorithm for static online bin covering
  with polynomial running time, an asymptotic competitive ratio of $1 + \eps$,
  and an amortized migration factor of $\Oh(1/\eps)$.
\end{theorem}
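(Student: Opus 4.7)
The plan is to use an offline AFPTAS $\Alg_{\off}$ as a black box and build a robust online algorithm $\Alg_{\on}$ that occasionally repacks the entire instance according to $\Alg_{\off}$'s output. On arrival of a new item at time $\tau$ I would first evaluate $\Alg_{\off}(I_\tau,\eps)$ only as a measurement. I designate a geometric sequence of \emph{special times} $t_1 < t_2 < \cdots$: let $t_1$ be the first time at which $\Alg_{\off}(I_\tau,\eps)$ exceeds a threshold of order $1/\eps^3$ (chosen so that the AFPTAS additive constant becomes dominated by the profit), and let $t_{i+1}$ be the first $\tau > t_i$ with $\Alg_{\off}(I_\tau,\eps) \geq (1+\eps)\Alg_{\off}(I_{t_i},\eps)$. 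At a special time I completely repack the current instance to match $\Alg_{\off}$'s packing; at every other time I simply open a new bin containing only the arriving item, which at worst wastes the item and never destroys an already-covered bin.

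For the competitive ratio, the analysis splits into two regimes. At each special time $t_i$ the packing of $\Alg_{\on}$ coincides with that of $\Alg_{\off}$, so the AFPTAS guarantee $\Opt(I_{t_i}) \leq (1+\eps)\Alg_{\on}(I_{t_i},\eps) + \Oh(1/\eps^3)$ transfers directly. For a non-special time $\tau$ with $t_i < \tau < t_{i+1}$, since every item in between goes into a fresh bin, previously covered bins remain covered, giving $\Alg_{\on}(I_\tau,\eps) \geq \Alg_{\on}(I_{t_i},\eps) = \Alg_{\off}(I_{t_i},\eps)$. By the definition of $t_{i+1}$ we also have $\Alg_{\off}(I_\tau,\eps) < (1+\eps)\Alg_{\off}(I_{t_i},\eps)$, so $\Alg_{\off}(I_\tau,\eps) \leq (1+\eps)\Alg_{\on}(I_\tau,\eps)$, and combining with the offline guarantee at time $\tau$ yields $\Opt(I_\tau) \leq (1+\eps)^2 \Alg_{\on}(I_\tau,\eps) + \Oh(1/\eps^3)$. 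Rescaling $\eps$ by a constant factor delivers an asymptotic ratio of $1+\eps$. The prefix $\tau < t_1$ is handled trivially because there $\Opt(I_\tau) = \Oh(1/\eps^3)$ is absorbed into the additive term.

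The heart of the proof is the amortized migration bound. Migration occurs only at special times and is bounded by $S_i$, the total size of all items that have arrived by $t_i$, so I need $\sum_{i=1}^k S_i = \Oh(S_k/\eps)$. Every reasonable optimal solution has bin loads in $[1,2)$ up to one bin, giving $U_i \leq S_i < 2U_i+1$ where $U_i := \Opt(I_{t_i})$. By choice of the initial threshold, the AFPTAS additive error is dominated by the profit for every special time, so with $\Delta_i := \Alg_{\off}(I_{t_i},\eps)$ we get $\Delta_i \leq U_i \leq 3\Delta_i$. Now the definition of special times gives $\Delta_{i+1} \geq (1+\eps)\Delta_i$, hence $\Delta_i \leq \Delta_k/(1+\eps)^{k-i}$. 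Summing a geometric series then yields $\sum_{i=1}^k \Delta_i \leq \Delta_k \cdot \sum_{j \geq 0}(1+\eps)^{-j} = \Oh(\Delta_k/\eps) = \Oh(S_k/\eps)$, which together with $S_i = \Theta(\Delta_i)$ proves the claim. Polynomial running time is immediate since $\Alg_{\off}$ is invoked once per arrival.

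The main obstacle is precisely this migration accounting. Naïvely, fully repacking on every arrival would be catastrophic, so the geometric spacing of special times must be tight enough to make the series telescope. The chain of equivalences $S_i = \Theta(U_i) = \Theta(\Delta_i)$ requires each $\Delta_i$ to be comfortably above the AFPTAS additive term $\Oh(1/\eps^3)$; otherwise $U_i$ could exceed $\Delta_i$ by much more than a constant factor and the telescoping argument would break. This is exactly the reason the first special time is delayed until $\Alg_{\off}$'s profit surpasses the $\Theta(1/\eps^3)$ threshold, and it is the one design decision that simultaneously secures both the competitive ratio and the amortized migration factor.
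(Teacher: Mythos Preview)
Your proposal is correct and mirrors the paper's proof essentially line for line: the same AFPTAS black box, the same geometric sequence of special times triggered by $\Alg_{\off}(I_\tau,\eps)\ge(1+\eps)\Alg_{\off}(I_{t_i},\eps)$ with the initial threshold of order $1/\eps^3$, the same between-repack analysis via $\Alg_{\on}(I_\tau)\ge\Alg_{\off}(I_{t_i})$ and $\Alg_{\off}(I_\tau)<(1+\eps)\Alg_{\off}(I_{t_i})$, and the same migration bound via the chain $S_i=\Theta(U_i)=\Theta(\Delta_i)$ plus the geometric series $\sum(1+\eps)^{-j}=\Oh(1/\eps)$. There is nothing to add.
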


In the following, we will show that an error in the amortized
setting is inevitable, even if we allow an unbounded running time.
\begin{theorem}
  \label{thm:ac_and_static_lb}
There is no (possibly exponential time)
algorithm for static online bin covering that maintains an optimal solution
with constant amortized migration factor $\beta$.
\end{theorem}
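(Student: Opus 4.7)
The plan is to show that for any constant $\beta > 0$, there is an instance on which every algorithm maintaining optimality must at some point have cumulative migration exceeding $\beta \cdot S$, where $S$ is the total size of items that have arrived so far. The construction extends the non-amortized lower bound from Proposition~\ref{thm:wc_and_static_lb} by chaining many ``forced rearrangement'' phases so that their migration costs accumulate faster than new item sizes.

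The starting block is the two-stage pattern of Proposition~\ref{thm:wc_and_static_lb}: first release $2N$ items of size $1/2 + \eps$, which must be paired into $N$ big-big bins (the unique optimal structure with one half-item surplus). Then release $2N$ items of size $1/2 - \eps$; the new optimum is $2N$ big-small bins, so the algorithm must split every big-big pair. This forces migration of total size at least $N \cdot (1/2 + \eps)$ while the total size arrived is only about $N$, giving an amortized ratio around $1/4$. To go beyond this constant, I would iterate over $k = 1, \ldots, K$ ``phase-pairs,'' each using freshly chosen sizes $1/2 + \eps_k$ and $1/2 - \eps_k$ with $\eps_1 > \eps_2 > \cdots$, selected so that the corresponding bigs and smalls form (nearly) uniquely determined pairs in every intermediate optimum.

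Next, I would argue that the successive phase-pairs actually force re-migration of items placed earlier, not just migration of the most recently introduced bigs. Because the sizes $\eps_k$ are chosen so that, in each snapshot, the unique optimal structure matches a big of generation $k$ with a small of generation $k$, inserting the smalls of generation $k+1$ can make earlier big-big couplings (that had been created to absorb unmatched bigs) infeasible at the new optimum, forcing the algorithm to re-pair bigs that were already migrated at an earlier stage. This adaptive choice of $\eps_k$ based on the algorithm's current bin structure is crucial and will be the main obstacle: in the static setting the adversary cannot remove items, so every new item must simultaneously (i) contribute only $O(1)$ to the total arrived size, and (ii) destabilize the algorithm's previously-settled packing for many earlier items. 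Overcoming this requires exploiting the rigid pairing constraints of items whose sizes are extremely close to $1/2$ so that essentially all optimal solutions must share a specific matching.

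Finally, once re-migration is established, I would count migration charges and item sizes across all $K$ phase-pairs. Each phase-pair contributes $\Omega(N)$ to migration (both from splitting the current big-big pairs and from re-pairing earlier items), while adding only $O(N)$ to the total size. After $K$ phase-pairs the algorithm's cumulative migration is $\Omega(KN \cdot \text{(constant)})$ whereas total size is $O(KN)$ -- so the amortized factor is not bounded by $\beta$ as soon as $K$ is chosen large enough in terms of $\beta$. Since the construction works for every constant $\beta$, no algorithm can maintain optimality with constant amortized migration, proving the theorem.
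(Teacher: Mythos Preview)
Your proposal has a genuine gap that prevents it from proving the theorem.

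First, the final accounting does not deliver what you need. You write that after $K$ phase-pairs the cumulative migration is $\Omega(KN)$ while the total arrived size is $O(KN)$, and conclude that the amortized factor exceeds any constant $\beta$ once $K$ is large enough. But $\Omega(KN)/O(KN)$ is bounded by a constant independent of $K$; increasing $K$ does not make this ratio grow. To beat every constant you must make the migration grow \emph{strictly faster} than the arrived size---for instance $\Omega(N^2)$ migration against $O(N)$ size.

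Second, the mechanism you sketch does not force re-migration of earlier items. With $\eps_1 > \eps_2 > \cdots$, a generation-$k$ big (size $1/2+\eps_k$) together with a generation-$j$ small (size $1/2-\eps_j$) covers a bin iff $\eps_k\ge\eps_j$, i.e.\ iff $k\le j$. Hence once the generation-$1$ bigs are matched with the generation-$1$ smalls, those pairs remain feasible and optimal in every later snapshot: the generation-$k$ smalls (for $k\ge 2$) must go to generation-$k$ bigs because those bigs cannot use any earlier smalls, and the generation-$1$ pairs never need to change. Each phase-pair therefore forces only the breaking of its \emph{own} big--big bins, which is $\Theta(N)$ migration against $\Theta(N)$ newly arrived size---exactly the constant ratio you already computed for a single block. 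The ``adaptive choice of $\eps_k$'' does not help here; the obstruction is structural, not a matter of tuning.

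What the paper does is fundamentally different: it engineers each phase so that a \emph{constant} amount of new size forces $\Omega(N)$ existing items to move. Four item classes (huge, large, medium, tiny) with carefully chosen sizes make every intermediate instance $I_j$ admit a \emph{unique} perfect packing in which the large item of level $k$ is paired with the medium item of level $j+k$ (plus one tiny item). Passing from $I_{j-1}$ to $I_j$ therefore shifts \emph{every} surviving large--medium pairing by one index, forcing $\Omega(N-j)$ non-tiny items (each of size at least $1/3$) to move, while the items arriving in phase $j$ have total size exactly $2$. Summing over $N$ phases gives $\Omega(N^2)$ migration against $3N$ total arrived size, hence amortized factor $\Omega(N)$. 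The idea you are missing is precisely this ``global shift'': a rigid combinatorial structure in which one small arrival uniquely determines, and changes, the placement of linearly many existing items.
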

\begin{proof}
We say that a bin is \emph{perfectly covered}, if it contains items of total size exactly $1$, and call a solution perfect if all of its bins are perfectly covered.
Obviously, if there is a perfect solution to a bin covering instance, it is optimal and each optimal solution for this instance has to be perfect.

Let $N$ be a positive even integer.
Our input consists of $N + 1$ phases.
The first phase (phase $0$) consists of items of total size $N$ and phase $j$ (for $j\geq 1$) consists of items of total size $2$.
Let $I_{j}$ be the instance after phase $j$ arrived.

We will ensure that the optimum $\Opt(I_{j})=N+2j$ is realized by a unique perfect solution (up to some transpositions).
Furthermore, the total size of items that need to be moved to transform an optimal solution of $I_{j}$ to an optimal solution of $I_{j+1}$ is at least $(N-j+2)/3$.
Hence, items with total size $\Omega(N^2)$ have to be moved to maintain an
optimal solution at all times while the total size of all arriving items is $3N$
yielding that the needed amortized migration is $\Omega(N)$.
Since the construction works for arbitrarily big values of $N$, this suffices to
prove the claim, as this rules out a constant amortized migration factor.

We will proceed by first describing all occurring items in detail and noting some basic properties, then we define the phases, describe the optimal solutions and finally we prove their uniqueness and the necessary migration.

Let $\eps = 1/(N+1)^4$.
The items are either huge, large, medium or tiny and each of these classes contains items of $N$ levels.
More precisely, for each level $i\in [N]$ we have:

\begin{itemize}
\item One huge item with size $\ihuge(i) = 1 - (N-i+1)\eps(1+(i-1)(N+1))$.
\item One large item of type A with size $\ilarge_A(i) = 1/2 + i(N+1)\eps$, and one large item of type B with size $\ilarge_B(i) = \ilarge_A(i) + \eps$.
\item One medium item of type A with size $\imedium_A(i) = 1/2 - i(N+1)\eps$, and one large item of type B with size $\imedium_B(i) = \imedium_A(j) - \eps$.
\item $N - i + 1$ tiny items each with size $\itiny(i) = \eps(1 + (i-1)(N+1))$.
\end{itemize}
We now state some simple properties of these items:
\begin{claim}
\label{claim:item_properties}
  It holds
  \begin{enumerate}
  \item $\ilarge_A(i) + \imedium_A(i) = \ilarge_B(i) + \imedium_B(i) = \ihuge(i)
    +  (N - i + 1)\itiny(i) = 1$;
  \item $\sum_{i=1}^{N}(N-i+1)\cdot \itiny(i) < 1$ (\ie all tiny items
    together cannot fill a bin completely);
  \item $\ihuge(N) < \ihuge(1)$, $\ilarge_{A}(1) < \ilarge_{A}(N)$,
    $\ilarge_{B}(1) < \ilarge_{B}(N)$, $\imedium_{A}(N) < \imedium_{A}(1)$,
    $\imedium_{B}(N) < \imedium_{B}(1)$, $\itiny(1) < \itiny(N)$;
  \item $\ihuge(N) > \ilarge_{B}(N)$, $\ilarge_{A}(1) > \imedium_{A}(1)$,
    $\imedium_{B}(N) > \itiny(N)$;
  \item $\imedium_{B}(N) > 1/3$, $\ilarge_{A}(1) > 1/2$, $\imedium_{A}(1) <
    1/2$, $\ihuge(N)+\imedium_{B}(N) >~1$
  \item $\ihuge(i)+\imedium_{X}(i')>1$ and $\ihuge(i)+\ilarge_{X}(i')>1$ for all
    $X\in \{A,B\}$ and all $i,i'\in \{1,\ldots,N\}$.
  \end{enumerate}

\end{claim}
For completeness, the simple proof of this fact can be found in the appendix. 

Hence, large and medium items of the same type and level can be used to pack a bin perfectly and every other combination of two large or medium items does not cover a bin perfectly.
Let $s_h, s_\ell, s_m$ and $s_t$ be sizes of some huge, large, medium and tiny
item respectively. Summarizing the claim, we have
(i) all tiny items together cannot fill a whole bin, (ii) $s_h > s_\ell > s_m >
s_t$, (iii) $s_m > 1/3$, (iv) $s_\ell > 1/2 > s_m$, and (v) $s_h + s_m > 1$.
Hence, in a perfect solution to an instance $I$ comprised of a subset of the above items, a bin may contain
\begin{enumerate*}[label=(\arabic*)]
\item exactly one large item together with the unique matching medium item;
\item at most two items that are medium or large, and at least one tiny item; or
\item exactly one huge, and at least one tiny item.
\end{enumerate*}
Furthermore, if $I$ contains the first $n\leq N$ levels of huge items, we have:
\begin{claim}\label{claim:huge_items}
In a perfect solution of $I$, the first $n$ levels of huge items have to be packed together with the first $n$ levels of tiny items in a unique way.
\end{claim}
\begin{proof}
To see this, consider the bin $B_{k}$ containing the huge item of level $k$ in a
perfect solution. Due to point 6 in Claim~\ref{claim:item_properties}, $B_{k}$ cannot contain any
large or medium item, as the solution is perfect. Hence, $B_{k}$ must contain
some tiny items. These tiny items in $B_{k}$ must have a total size of $r_{k}=1-\ihuge(k)=(N-k+1)\eps(1+(k-1)(N+1))$.
As $(r_{k}/\epsilon) \bmod (N+1)=(N-k+1)$ and $(\itiny(i)/\eps) \bmod (N+1)=1$
for all $i=1,\ldots,N$, we need at least $N-k+1$
tiny items to cover bin $B_{k}$.

Now, suppose that there is some bin $B_{k}$ that does not contain all $N-k+1$
tiny items of level $k$ and let this $k$ be minimal. Hence, for each $i\in
\{1,\ldots,k-1\}$, the bin $B_{i}$ contains the $N-i+1$ tiny items of level $i$
and all of these tiny items are thus packed.
As $B_{k}$ needs to contain $N-k+1$ tiny items, it hence must contain at least one
tiny item of size at least $\itiny(k+1)$ (due to the monotonicity of $\itiny(i)$). Hence, the sum of the tiny items in
$B_{k}$ is at least
\begin{align*}
  &\itiny(k+1)+(N-k)\itiny(k) = \eps(1 + k(N+1))+ (N-k)\eps(1 + (k-1)(N+1)) >\\
  &(N-k+1)\eps(1+(k-1)(N+1)) = r_{k}.
\end{align*}
Hence, $B_{k}$ is not perfectly covered, a contradiction to the assumption.

\end{proof}

We proceed with the definition of the phases:
Phase $0$ consists of all large items of type A, all medium items of type B and
the tiny items of level $1$. Note that the large items of type $B$ and the
medium items of type $A$ are not yet available.
As $\ilarge_A(i) + \imedium_B(i) + \itiny(1) = 1$ (see Claim~\ref{claim:item_properties}), the total volume of items in phase $0$ is $N$.
This (perfect) packing also shows that $\Opt(I_{0})=N$ (see
Fig.~\ref{fig:lb_static_nc_batch0}).

\begin{claim}\label{claim:packing_phase0}
  Putting one item of size $\ilarge_{A}(i)$, one item of size $\imedium_{B}(i)$,
  and one item of size $\itiny(1)$ into a bin is the unique optimal perfect
  packing (up to the packing of the tiny items) for phase $0$.
\end{claim}
\begin{proof}
Due to the definition of the sizes, two large items have total size
strictly larger than $1$ and two medium items have total size strictly smaller
than $1$. The total size of items in phase $0$ is exactly $N$, the total size of
all tiny items is not enough to cover a single bin
(Claim~\ref{claim:item_properties}), and we have exactly $N$ large and $N$
medium items. Hence, every bin in a perfect packing must contain one large item
and one medium item. Now let $B_{i}$ be the bin containing $\ilarge_{A}(i)$.

Suppose that the solution described above is not unique. Hence, there is some
bin $B_{i}$ that contains an item of size $\imedium_{B}(j)$ with $i\neq j$. If
$j > i$, the monotonicity of $\imedium_{B}(\cdot)$ shows that this bin is not
perfectly covered. If $j < i$, the pidgeonhole principle shows that there must
be some bin $B_{i'}$ containing an item of size $\imedium_{B}(j')$ with $j' >
i'$, which is again a contradiction.
\end{proof}

For each $j\in[N]$, phase $j$ consist of the huge item of level $j$, the large type B item of level $j$, the medium type A item of level $N-j +1$ and the tiny items of level $j+1$ if $j<N$, and no tiny items otherwise.
The items of instance $I_j$ can be packed perfectly as follows.
The huge and tiny items of level $1$ to $j$ are packed in the unique possible way described above; the large and medium type B items of level $1$ to $j$ are matched to form $j$ perfectly packed bins,  and similarly the large and medium type A items of level $N$ to $N-j+1$ are matched.
Lastly, for each $k\in\{1,\dots,N-j\}$, we have $\ilarge_A(k) + \imedium_B(j +
k) + \itiny(j+1) = 1$ and we pack the remaining items correspondingly (see
Fig.~\ref{fig:lb_static_nc_batchj}). This packing has $N+2j$ perfectly covered
bins.

\begin{claim}
The packing described above is unique.
\end{claim}
\begin{proof}
Claim~\ref{claim:huge_items} implies that the huge items are packed together
with the fitting tiny ones in $j$ perfectly covered bins.
There are only $N-j$ tiny items left and each bin
that does not correspond to a perfect match of a medium and large item has to
contain a tiny item as described
in the proof of Claim~\ref{claim:packing_phase0}. As we still need to fill $N+j$
bins perfectly, $2j$ of these bins need to be packed by perfect matches. This is
exactly the number of perfect matches available and we thus need to pack all of
them.

Furthermore, large items cannot be combined in one perfectly covered bin (Claim~\ref{claim:item_properties}) and therefore each of
them has to be packed into a single bin together with a tiny item.
For each such bin there is exactly one medium item left that fits perfectly, namely the one chosen above.
\end{proof}

In order to transform the unique perfect solution of instance $I_{j-1}$ into the
one for $I_j$, at least $N-j+2$ non-tiny items have to be moved (see Fig.~\ref{fig:lb_static_nc_batchj}).
Each of these items has a size of at least $1/3$. Hence, a total migration of
$\sum_{j=1}^{N}(N-j+2)/3\geq \Omega(N^{2})$ is needed while the total size of
the instance is $3N$. As we are only allowed to repack a total size of
$\beta 3N$, this implies $\beta \geq \Omega(N)$, a contradiction  to the
assumption that $\beta$ is a constant.

This concludes the proof.

\begin{figure}[h]
  \centering
\begin{tikzpicture}[xscale=2.3, yscale=1.8]
  \drawBin{2}{}{b1}
  \drawBin{4}{}{b2}
  \drawBin{7}{}{b3}
  \node at ($ (b2)!0.5!(b3)$) {\ldots};

  \foreach \i/\p in {1/2,2/4,N/7}{
  \addToBin{\p}{0}{.5}{$\ilarge_{A}(\i)$}{fill=medium}{}
  \addToBin{\p}{.5}{.35}{$\imedium_{B}(\i)$}{fill=medium}{}
  \addToBin{\p}{.85}{.15}{$\itiny(\i)$}{fill=medium}{}
 }
\end{tikzpicture}
  \caption{Optimal packing of phase $I_{0}$}
  \label{fig:lb_static_nc_batch0}
\end{figure}
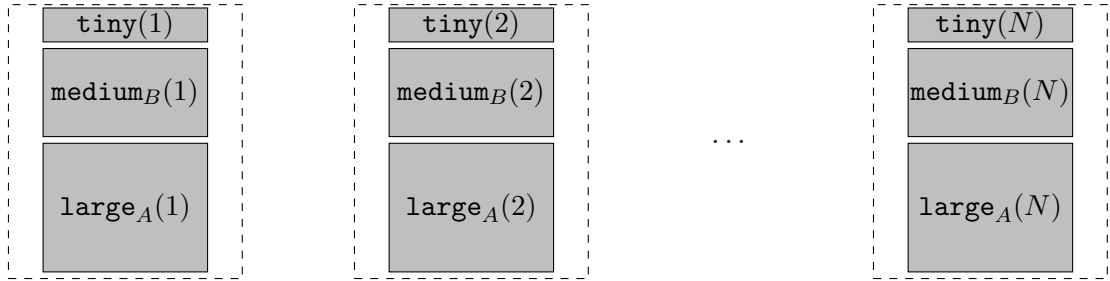

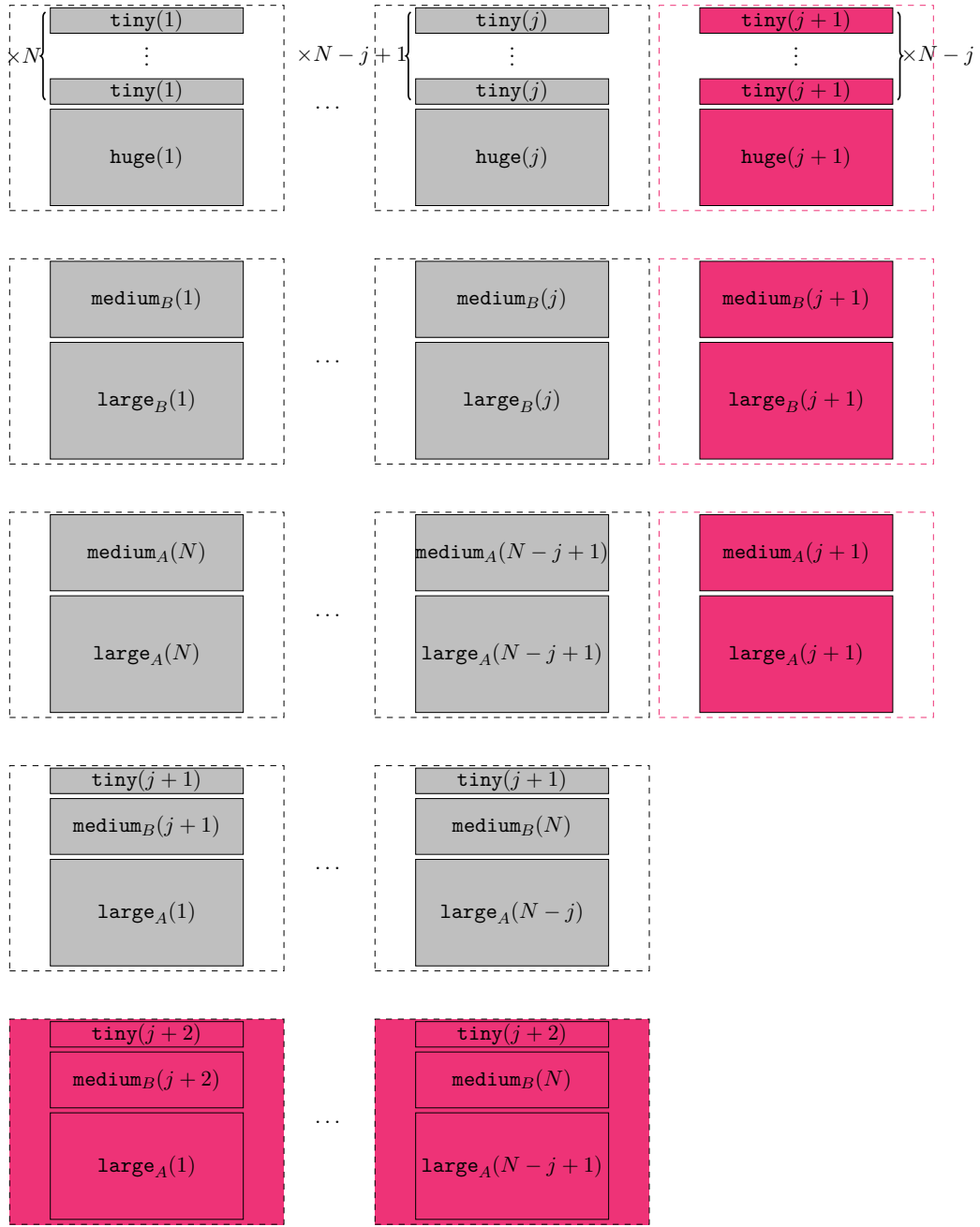
\begin{figure}[h]
  \centering
  \scalebox{.8}{
\begin{tikzpicture}[xscale=3.6, yscale=1.8]
  \drawBin{2}{}{b1}
  \drawBin{3.8}{}{b2}
  \node at ($ (b1)!0.5!(b2)$) {\ldots};

  \addToBin{2}{0}{.5}{$\ihuge(1)$}{fill=medium}{}
  \addToBin{2}{.5}{.15}{$\itiny(1)$}{fill=medium}{i1}
  \addToBin{2}{.85}{.15}{$\itiny(1)$}{fill=medium}{i2}
  \node[yshift=.1cm] at ($ (i1)!0.5!(i2) $) {$\vdots$};

  \draw[decoration={brace,raise=5pt},decorate, thick] ($ (i1.south) - (.45cm,-.1cm) $) -- node[left,xshift=-.1cm] {$\times N$} ($ (i2.north) - (.45cm,.1cm) $);

  \addToBin{3.8}{0}{.5}{$\ihuge(j)$}{fill=medium}{}
  \addToBin{3.8}{.5}{.15}{$\itiny(j)$}{fill=medium}{j1}
  \addToBin{3.8}{.85}{.15}{$\itiny(j)$}{fill=medium}{j2}
  \node[yshift=.1cm] at ($ (j1)!0.5!(j2) $) {$\vdots$};

  \draw[decoration={brace,raise=5pt},decorate, thick] ($ (j1.south) - (.45cm,-.1cm) $) -- node[left,xshift=-.1cm] {$\times N-j+1$} ($ (j2.north) - (.45cm,.1cm) $);

  \drawBin{5.2}{draw=myred}{b3}

    \addToBin{5.2}{0}{.5}{$\ihuge(j+1)$}{fill=myred}{}
  \addToBin{5.2}{.5}{.15}{$\itiny(j+1)$}{fill=myred}{k1}
  \addToBin{5.2}{.85}{.15}{$\itiny(j+1)$}{fill=myred}{k2}
  \node[yshift=.1cm] at ($ (k1)!0.5!(k2) $) {$\vdots$};

  \draw[decoration={brace,raise=5pt, mirror},decorate,  thick] ($ (k1.south) - (-.45cm,-.1cm) $) -- node[right,xshift=.1cm] {$\times N-j$} ($ (k2.north) - (-.45cm,.1cm) $);

  \begin{scope}[yshift=-2.5cm]
    \drawBin{2}{}{c1}
    \drawBin{3.8}{}{c2}
    \node at ($ (c1)!0.5!(c2)$) {\ldots};

    \addToBin{2}{0}{.6}{$\ilarge_{B}(1)$}{fill=medium}{}
    \addToBin{2}{.6}{.4}{$\imedium_{B}(1)$}{fill=medium}{}
    \addToBin{3.8}{0}{.6}{$\ilarge_{B}(j)$}{fill=medium}{}
    \addToBin{3.8}{.6}{.4}{$\imedium_{B}(j)$}{fill=medium}{}

    \drawBin{5.2}{draw=myred}{c3}
    \addToBin{5.2}{0}{.6}{$\ilarge_{B}(j+1)$}{fill=myred}{}
    \addToBin{5.2}{.6}{.4}{$\imedium_{B}(j+1)$}{fill=myred}{}
  \end{scope}

  \begin{scope}[yshift=-5cm]
    \drawBin{2}{}{c1}
    \drawBin{3.8}{}{c2}
    \node at ($ (c1)!0.5!(c2)$) {\ldots};

    \addToBin{2}{0}{.6}{$\ilarge_{A}(N)$}{fill=medium}{}
    \addToBin{2}{.6}{.4}{$\imedium_{A}(N)$}{fill=medium}{}
    \addToBin{3.8}{0}{.6}{$\ilarge_{A}(N-j+1)$}{fill=medium}{}
    \addToBin{3.8}{.6}{.4}{$\imedium_{A}(N-j+1)$}{fill=medium}{}

    \drawBin{5.2}{draw=myred}{c3}
    \addToBin{5.2}{0}{.6}{$\ilarge_{A}(j+1)$}{fill=myred}{}
    \addToBin{5.2}{.6}{.4}{$\imedium_{A}(j+1)$}{fill=myred}{}
  \end{scope}

  \begin{scope}[yshift=-7.5cm]
    \drawBin{2}{}{c1}
    \drawBin{3.8}{}{c2}
    \node at ($ (c1)!0.5!(c2)$) {\ldots};

    \addToBin{2}{0}{.55}{$\ilarge_{A}(1)$}{fill=medium}{}
    \addToBin{2}{.55}{.3}{$\imedium_{B}(j+1)$}{fill=medium}{}
    \addToBin{2}{.85}{.15}{$\itiny(j+1)$}{fill=medium}{}

    \addToBin{3.8}{0}{.55}{$\ilarge_{A}(N-j)$}{fill=medium}{}
    \addToBin{3.8}{.55}{.3}{$\imedium_{B}(N)$}{fill=medium}{}
    \addToBin{3.8}{.85}{.15}{$\itiny(j+1)$}{fill=medium}{}

  \end{scope}

  \begin{scope}[yshift=-10cm]
    \drawBin{2}{fill=myred}{c1}
    \drawBin{3.8}{fill=myred}{c2}
    \node at ($ (c1)!0.5!(c2)$) {\ldots};

    \addToBin{2}{0}{.55}{$\ilarge_{A}(1)$}{fill=myred}{}
    \addToBin{2}{.55}{.3}{$\imedium_{B}(j+2)$}{fill=myred}{}
    \addToBin{2}{.85}{.15}{$\itiny(j+2)$}{fill=myred}{}

    \addToBin{3.8}{0}{.55}{$\ilarge_{A}(N-j+1)$}{fill=myred}{}
    \addToBin{3.8}{.55}{.3}{$\imedium_{B}(N)$}{fill=myred}{}
    \addToBin{3.8}{.85}{.15}{$\itiny(j+2)$}{fill=myred}{}

  \end{scope}

\end{tikzpicture}}
\caption{Optimal packing of phase $I_{j}$ and phase $I_{j+1}$ (in red)}
\label{fig:lb_static_nc_batchj}

\end{figure}
\end{proof}

\clearpage

\appendix
\section{Proof of Claim 24}

\begin{proof}[Proof of \autoref{claim:item_properties}]
  \begin{enumerate}
  \item We simply calculate:
    \begin{compactitem}
    \item $\ilarge_{A}(i)+\imedium_{A}(i)=1/2+i(N+1)\eps+1/2 - i(N+1)\eps=1$
    \item $\ilarge_{B}(i)+\imedium_{B}(i)=1/2+i(N+1)\eps+1/2+\eps -
      i(N+1)\eps-\eps=1$
    \item $\ihuge(i)+(N-i+1)\itiny(i)=\\
      {1 - (N-i+1)\eps(1+(i-1)(N+1))+ (N-i+1)\eps(1 + (i-1)(N+1))}=1$
    \end{compactitem}
  \item We have
    \begin{align*}
      &\sum_{i=1}^{N}(N-i+1)\cdot \itiny(i) = \sum_{i=1}^{N}\underbrace{(N-i+1)}_{\leq N}\cdot \eps(1 + \underbrace{(i-1)}_{\leq N-1}(N+1)) \leq\\
      &\sum_{i=1}^{N}N\cdot \eps (1+(N-1)(N+1))= \sum_{i=1}^{N}N\cdot \eps(1+N^{2}-1)= N^{4}\eps = N^{4}/(N+1)^{4} <1.
    \end{align*}
  \item We also calculate:
    \begin{compactitem}
    \item $\ihuge(N)=1-N\eps < 1-N^{2}\eps=1-\eps(1+(N-1)(N+1)) = \ihuge(1)$
    \item $\ilarge_{A}(1)=1/2 + (N+1)\eps < 1/2 + N(N+1)\eps = \ilarge_{A}(N)$
    \item $\ilarge_{B}(1)=1/2 + (N+1)\eps+\eps < 1/2 + N(N+1)\eps+\eps =
      \ilarge_{B}(N)$
    \item $\imedium_{A}(N)=1/2 - N(N+1)\eps < 1/2-(N+1)\eps = \imedium_{A}(1)$
    \item $\imedium_{B}(N)=1/2 - N(N+1)\eps-\eps < 1/2-(N+1)\eps-\eps =
      \imedium_{B}(1)$
      \item $\itiny(1)=\eps < \eps(1+(N-1)(N+1))=\itiny(N)$
      \end{compactitem}
    \item Again, we calculate
      \begin{compactitem}
        \item Note that $(N+1)^{4} > (N+1)^{4}/2 + N(N+3)$ for all $N\geq 1$ and
          hence
          $\ihuge(N) = 1-N\eps = 1-N/(N+1)^{4} > 1/2 + N(N+2)/(N+1)^{4} =\\ 1/2 + N(N+1)\eps+\eps =
          \ilarge_{B}(N)$.
        \item $\ilarge_{A}(1)=1/2 + (N+1)\eps > 1/2-(N+1)\eps = \imedium_{A}(1)$
        \item Note that $(N+1)^{4}/2 > N^{2}+N(N+2)$ for $N\geq 1$ and hence
          $\imedium_{B}(N)= 1/2 - N(N+1)\eps -\eps = 1/2-N(N+2)\eps =
          1/2-N(N+2)/(N+1)^{4} > N^{2}/(N+1)^{4} = \eps(1+(N-1)(N+1))=\itiny(N)$
      \end{compactitem}
    \item Finally, some calculations show that
      \begin{compactitem}
        \item Note that $(N+1)^{4}/2 > (N+1)^{4}/3 + N(N+2)$ and hence
          $\imedium_{B}(N)= 1/2 - N(N+1)\eps -\eps = 1/2-N(N+2)\eps > 1/3$.
        \item $\ilarge_{A}(1)=1/2 + (N+1)\eps > 1/2$
        \item $ \imedium_{A}(1) = 1/2-(N+1)\eps  < 1/2$
        \item From $\imedium_{B}(N) > 1/3$, we have $\ihuge_{N}+\imedium_{B}(N)
          >1-N\eps+1/3 = 1-N/(N+1)^{4}+1/3 > 1$.
        \end{compactitem}
      \item This follows from the above points. Note that $\ihuge(N)$ is
        the smallest huge item and $\imedium_{B}(1)$ the smallest medium item.
        Hence, point 5 implies that the sum of these two already exceeds $1$.
        As every medium or large item is larger than $\imedium_{B}(1)$ due to
        points 3 and 4, the claim follows.
  \end{enumerate}
\end{proof}

\clearpage

\bibliography{references}

\end{document}